\newtheorem{corollary}{Corollary}
\newtheorem{lemma}{Lemma}
\newtheorem{remark}{Remark}
\newtheorem{theorem}{Theorem}
\begin{document}
\title[Rogue wave]{Darboux transformation of the second-type derivative nonlinear Schr\"odinger equation}
\author{Yongshuai Zhang
$^{1}$, Lijuan Guo
$^{1}$, Jingsong He$^{1}$
 and Zixiang Zhou
$^{2}$}
\thanks{$^*$ Corresponding author: hejingsong@nbu.edu.cn, jshe@ustc.edu.cn}
\dedicatory {$^1$ Department of Mathematics, Ningbo University, Ningbo, Zhejiang 315211, P.\ R.\ China\\
$^2$ School of Mathematical Sciences, Fudan University, Shanghai 200433,
  P. R. China.}
\begin{abstract}
The second-type derivative nonlinear Schr\"odinger (DNLSII) equation
was introduced as an integrable model in 1979. Very recently, the
DNLSII equation has been shown by an experiment to be a model
of the evolution of optical pulses involving self-steepening
without concomitant self-phase-modulation. In this paper the $n$-fold
  Darboux transformation (DT)  $T_n$ of the coupled DNLSII
equations is constructed in terms of determinants. Comparing with
the usual DT of the soliton equations, this kind of DT is unusual
because $T_n$ includes complicated integrals of seed
solutions in the process of iteration. By a tedious analysis, these
integrals are eliminated in $T_n$ except the integral of the
seed solution.  Moreover, this $T_n$ is reduced to the DT of the
DNLSII equation under a reduction condition.  As applications of
$T_n$, the explicit expressions of soliton, rational soliton,
breather, rogue wave and multi-rogue wave solutions for the DNLSII
equation are displayed.
\end{abstract}

\maketitle \vspace{-0.5cm}

\noindent {{\bf Keywords}:
The second-type derivative nonlinear Schr\"odinger equation,
 Darboux transformation, breather solution, rogue wave solution,
 self-steepening.}

\noindent {{\bf MSC(2010) numbers}: 35C08, 35C11, 35Q55, 37K40, 78A60}


\section{Introduction}
In 1979, Chen-Lee-Liu (CLL) linearized the nonlinear Hamiltonian
systems and identified it with the time part of the Lax equation in
order to test the integrability of nonlinear Hamiltonian systems by
inverse scattering method \cite{PS20490}. Moreover, several
integrable equations and non-integrable equations were introduced.
Among them, there is a derivative-type nonlinear equation
\begin{equation}
  \label{cll}
  {\rm i}q_t+q_{xx}+{\rm i}|q|^2q_x=0,
\end{equation}
which is integrable and is usually called Chen-Lee-Liu(CLL) equation or the
second-type derivative nonlinear Schr\"odinger equation (DNLSII).
Via the Hirota method, the explicit formula of $n$-soliton solution
of the DNLSII equation was obtained \cite{JPSJ49813, JPSJ641519}. The
Lax pair for the DNLSII equation was generalized into matrix-type and
new integrable coupled derivative nonlinear Schr\"odinger equations
were obtained \cite{PLA25753}. The Cauchy problem of the
semi-classical DNLSII equation  was studied \cite{jhlee} under a
rapidly oscillating initial profile. Except the DNLSII equation, there exists
another type of DNLS equation, i.e. the first type derivative nonlinear
Schr\"odinger equation (DNLSI) \cite{JMP19798}
\begin{equation}
   \label{kn}
  {\rm i}\tilde{q}_t+\tilde{q}_{xx}+{\rm
  i}(|\tilde{q}|^2\tilde{q})_x=0,
\end{equation}
which is connected with the DNLSII equation by a simple gauge transformation
\cite{JMP253433,JPSJ52394} of form
$\tilde{q}=q\exp\left(-\frac{\rm
i}{2}\int^x|q|^2\mathrm{d}x\right)$, with a trivial
but very important  observation $ |q|^2=|\tilde{q}|^2$, the
inverse gauge  transformation is given by
\begin{equation}\label{gt}
q=\tilde{q}\exp\left(\frac{\rm i}{2}\int^x|
\tilde{q}|^2\mathrm{d}x\right).
\end{equation}
By comparing with the intensive investigations on the DNLSI equation from the
point of views of mathematics and physics, e.g. see
\cite{ablowitzbook,rogister,molhus,mio, johnson,tzoar,anderson,buti}
and references therein, the DNLSII equation has received little
attention as mentioned above in the past decades, to the best of our
knowledge. Perhaps, there are two reasons for the opposite fate of
the two types of the DNLS equation.
\begin{itemize}
\item  The lack of physical relevance of the DNLSII equation
 for a very long time. In contrast to the direct applications  in
plasmas \cite{rogister,molhus,mio,buti}, water wave \cite{johnson}
and nonlinear optics \cite{tzoar,anderson} of the DNLSI equation, the DNLSII equation
is only known to model the propagation of the  self-steepening
optical pulses without self-phase modulation \cite{PRA76021802} in
2007.\\
\item  The doubt of necessity of the study of explicit solutions $q$ for
 the DNLSII equation. This well-known doubt is originated from that the modulus of $q$,
a solution of the DNLSII equation,  can be calculated from a known solution
$\tilde{q}$ of the DNLSI equation through the gauge transformation
eq.(\ref{gt}). The latter has been solved exactly by various methods
such as the inverse scattering transformation, bilinear method and
Darboux transformation (DT) \cite{JMP19798,JPSJ461008,JPA4013607,
PRE69066604,JPSJ68355,JPA361931,JPSJ641519,JPA44305203,guolingliudnlsI}.
\end{itemize}

However, the above two reasons are fading fast at the present time, and
then from now on the DNLSII equation deserves further studies. Physically,
the self-steepening of light pulses, originating from its propagation in
a medium with an intensity dependent index of refraction, was first
introduced in \cite{PR164312}, and later was observed in optical
pulses with possible shock formation \cite{PRL31422}.  The isolated
self-steepening is less common in nature and is often studied in
conjunction with self-phase-modulation \cite{tzoar,anderson}. In
other words, it is very difficult to observe self-steepening alone
although a controllable self-steepening has been suggested in theory
and demonstrated in experiment in 2006 through the wave vector
mismatch \cite{PRL97073903}. Soon after, at 2007, J. Moses, B. A.
Malomed and F. W. Wise \cite{PRA76021802} have reported an
experimental manifestation in optical pulses propagation involving
self-steepening without self-phase-modulation. This experiment
provides the first physical realization of the DNLSII equation, and further
shows that the DNLSII equation, like the nonlinear Schr\"odinger(NLS) equation
and DNLSI equation, is also a real and important
physical model. Although the gauge transformation \eqref{gt} gives the expression of $q$ with an integral from a known solution $\tilde{q}$, this integral makes the process of getting multi-soliton solutions very difficult. Therefore, it is an interesting problem to seek new solutions,
e.g. breather and rogue wave, of the DNLSII equation which are not generated
by gauge transformation (\ref{gt}) from corresponding known
solutions of the DNLSI equation.

The rogue waves, appearing from nowhere and disappearing without a
trace \cite{PLA373675}, have drawn much attention both from
theoretical predictions and experimental observations. Theoretically, a
possible  generating mechanisms for the rogue waves of the NLS
equation is introduced as follows \cite{arxivhe}:  the progressive fusion and fission of $n$ degenerate
breathers associated with a critical eigenvalue $\lambda_0$ creates
an order-$n$ rogue wave. The value $\lambda_0$ is a zero point of an
eigenfunction of the Lax pair of the NLS equation and it corresponds
to the limit that the period of the breather tends to infinity.
After the debate for several decades, the rogue wave has been observed
experimentally very recently in nonlinear fiber optics, in
water-wave tanks,  and in plasmas \cite{nature2007,NP6790,PRL106204502,PRL107255005}.

In addition to above important experiment works, the analytical and simple form of the rogue waves for the integrable partial differential equations(PDEs) is one of central topics in recent years. Up to now the DT becomes one of the cornerstones in the approach to construct analytical forms of the rogue waves.  More generally, the Darboux transformation, as a known world-wide method, provides a systematic way to study  the explicit solutions  of the integrable systems after the pioneering works in 1979 \cite{Matveev_1973_217,Matveev_1973_219,Matveev_1973_425,Matveev_1973_503}, which had extended a simple lemma by Darboux(1882) dealing with 1--dimensional scalar Schr\"{o}dinger equation  to the hierarchies of linear and nonlinear PDE's and their non-commutative, differential--difference and difference--difference (i.e. lattice) version. Since long time (from 1983 and until 2010) the study of the connections between the analytical
 rogue waves and integrable PDEs was limited by the very special solutions including the genuine Peregrine breather discovered by Peregrine himself in 1983 \cite{Peregrine1983} and its two next higher analogues: $P_2$ breather and $P_3$ breather discovered by Akhmediev, Eleonsky and Kulagin in 1985 \cite{Akhmediev_1985_894}  and $P_3$ breathers found by Akhmediev and his coworkers in 2009 \cite{Akhmediev_2009_26601}.
These solutions were describing the simplest rogue events in the fibers optics and hydrodynamics. At the same period Ankiewicz, Akhmediev and Clarkson(2010) produced a conjecture about the existence of the hierarchy of similar isolated solutions labeled by the integer $n$ reaching at a single point of space time the maximum of its absolute value equal
$2n+1$ provided that the asymptotic magnitude of the solution equals 1 \cite{Ankiewicz_2010_122002}.
In 2011, they succeeded to partially justify this conjecture only partially, by calculating explicitly the initial data for $P_4$ and $P_5$ by use of the DT method \cite{Kedziora_2011_56611}, but not the complete solutions. Similar solutions were found
at the same period for many other physical systems relevant to various physical applications by different authors \cite{akhmediev2010prehiota,PRE85026601,PRE86026606,hexukp2012fleq}.

In 2010,  Matveev and his coauthors  have significantly developed the technique of obtaining multi-rogue wave (MRW) solutions and
presented explicit forms for these higher order structures via explicit Wronskian formulas at the first time\cite{Dubard_2010_247}, and then
given a series papers on the properties and compact forms of MRW \cite{dubardthesis2010, matveev2011nlsKPI,Gaillard_2011_435204,Gaillard_2013_13504}. It was  conjectured in  \cite{dubardthesis2010}(see page 30)  that, for an order-$n$ rogue wave under the fundamental pattern of the NLS,
 the number of these maxima in space-time is $n(n+1)-1$ with one ``super peak'' of the height $2n+1$ surrounded by
$n(n+1)-2$ gradually decreasing peaks in two sides.  They further have conjectured that there exist other pattern  of an order-$n$ rogue wave possessing  $n(n + 1)/2$ uniform peaks \cite{dubardthesis2010, matveev2011nlsKPI,Gaillard_2011_435204}.  In \cite{matveevkpI2013} it was also shown that how to construct the initial data for KP--I
equation in order to produce the extremal rogue wave events caused by the collision of the simple rogue
waves. The attached movies gave an understanding of various scenario
of collisions of the simple rogue waves in the KP--I model, leading
in particular to short--living $2+1$ dimensional extremal rogue waves on a shallow water.
Later, many authors studied different models using Hirota method, Darboux transformation approach, and  thus the whole research of analytical
 theory for rogue wave is  growing very rapidly \cite{PRE85026601,PRE86026606,hexukp2012fleq,PRE86036604,Mu_2012_84001}.

These existed results inspire us to seek breather and rogue wave
solutions of the DNLSII equation without using the DNLSI equation. We
can further show the independent value of this physical model and
stimulate new way to observe the rogue waves by using
self-steepening effect in optical system. Meanwhile, a rogue wave
solution of the DNLSII equation may be used conveniently to observe the
self-steepening effect alone because of its high-amplitude and
localized property in $(x,t)$ plane. To this end, we are going to
construct the DT and its determinant
representation of the DNLSII equation, and then use it to generate the
intended solutions, i.e. soliton, breather and rogue wave. Although
there are several well-known results of DT for the NLS equation
\cite{Salle_1982_227,GN,teacherli,Gu1,Matveev,Gu2,he2006} and two other types of the DNLS
equation \cite{JPA44305203,xuhe2012jmpdnlsIII}, the DT of the DNLSII equation
is non-trivial because it includes an complicated integral of
seed solution, as we shall show later. This kind of DT for soliton
equations has not been reported so far as the authors know.

The present paper is organized as follows. In section 2, we give
a detailed derivation of the determinant representation $T_n$ of the
 $n$-fold DT for the coupled DNLSII equations. Particularly, the first
order DT---$T_1$ depends on an integral of the seed solution, which is
different from the usual DT. Through a cumbersome
analysis, we eliminate this kind of  integral depending on  higher order
potentials and simplify the expression of the $T_n$. The reduction of the
$T_n$ to the DNLSII equation is also obtained in this section.
In section 3, the soliton, rational soliton, breather, rogue wave and
multi-rogue wave solutions are given by applying the multi-fold DT of the
DNLSII equation. The relationship between solutions of the DNLSII equation
and the DNLSI equation is discussed. Finally, conclusions and discussions
are given in section 4.

\section{Darboux transformation}
In this section, we will construct the DT of the coupled DNLSII equations
\begin{subequations}\label{clleq}
\begin{equation}\label{qe}
{\rm i}q_t+q_{xx}+{\rm i}qrq_x=0,
\end{equation}
\begin{equation}\label{re}
{\rm i}r_t-r_{xx}+{\rm i}rqr_x=0.
\end{equation}
\end{subequations}
This system is the compatibility condition of the following Lax pair \cite{JPSJ52394}:
\begin{equation}\label{laxpair}
\left\{
\begin{aligned}
\Phi_x&=U\Phi=({\rm -i }\lambda^2+\frac{\rm i}{4}qr)J\Phi+\lambda Q\Phi,\\
\Phi_t&=V\Phi=(-2{\rm i}\lambda^4+{\rm i}qr\lambda^2+\frac{1}{4}(qr_x-rq_x)-\frac{\rm i}{8}q^2r^2)J\Phi+2\lambda^3Q\Phi+\lambda W\Phi,
\end{aligned}
\right.
\end{equation}
with
\begin{equation}
\begin{aligned}\nonumber
\Phi(x,t,\lambda)=\left(\begin{matrix} f(x,t,\lambda)\\ g(x,t,\lambda) \end{matrix} \right), \quad J=\left(\begin{matrix} 1 &0\\ 0 &-1 \end{matrix}\right), \quad
Q=\left(\begin{matrix} 0 &q\\ -r &0 \end{matrix}\right), \quad
W=\left(\begin{matrix} 0 &{\rm i}q_x-\frac{1}{2}q^2r\\ {\rm i}r_x+\frac{1}{2}r^2q &0\end{matrix}\right).
\end{aligned}
\end{equation}
Under a reduction condition $r=q^*$, two equations \eqref{clleq} are reduced to the DNLSII
equation.\\


{\noindent{\bf 2.1 The first order Darboux transformation}}\\

In general, we suppose
\begin{equation}
  \label{DT1}
  T(\lambda)=
  \left(\begin{array}{cc}
  a_1 &b_1\\
  c_1 &d_1
  \end{array}\right)\lambda+\left(\begin{array}{cc}
  a_0 &b_0\\
  c_0 &d_0
  \end{array}\right)
\end{equation}
is the first order DT of the coupled DNLSII equations,
i.e. there exist $U^{[1]}$ and $V^{[1]}$ possessing the same form as
$U$ and $V$ with $q$ and $r$ replaced by the new potentials $q^{[1]}$
and $r^{[1]}$, such that $\Phi^{[1]}=T\Phi$ satisfies
\begin{equation} \label{newlaxeqs}
\Phi^{[1]}_x=U^{[1]}\Phi^{[1]},\quad \Phi^{[1]}_t=V^{[1]}\Phi^{[1]}.
\end{equation}
As given in \cite{Matveev,Gu2}, $T(\lambda)$ must satisfy following conditions
\begin{equation}\label{Txt}
\begin{aligned}
  T_x+TU=U^{[1]}T,\quad T_t+TV=V^{[1]}T
\end{aligned}
\end{equation}
according to eq.(\ref{newlaxeqs}). Here $a_0, b_0, c_0, d_0, a_1,
b_1, c_1, d_1$ are the functions of $x$ and $t$ need to be
determined. Comparing the coefficients of $\lambda^j, (j=3,2,1,0)$  in the $x$-part of \eqref{Txt}, we obtain


\begin{subequations}\label{xpart}
\noindent $\lambda^3$:
\begin{equation}\label{xparta}
  b_1=0,\quad c_1=0,
\end{equation}
$\lambda^2$:
\begin{equation}\label{xpartb}
a_1q+2{\rm i} b_0-q^{[1]}d_1=0, \quad
r^{[1]}a_1-2{\rm i}c_0-d_1r=0,
\end{equation}
$\lambda^1$:
\begin{equation}\label{xpartc}
\begin{aligned}
  \frac{1}{4}{\rm i}a_1qr+a_{1x}-\frac{1}{4}{\rm i}a_1q^{[1]}r^{[1]}-b_0r-q^{[1]}c_0&=0,
  \quad a_0q-q^{[1]}d_0=0,\\
  \frac{1}{4}{\rm i}d_1q^{[1]}r^{[1]}+d_{1x}-\frac{1}{4}{\rm i}d_1qr+c_0q+r^{[1]}b_0&=0,
  \quad r^{[1]}a_0-d_0r=0,
\end{aligned}
\end{equation}
$\lambda^0$:
\begin{equation}\label{xpartd}
\begin{aligned}
  -\frac{1}{4}{\rm i}a_0q^{[1]}r^{[1]}+a_{0x}+\frac{1}{4}{\rm i}a_0qr&=0,
  \quad -\frac{1}{4}{\rm i}b_0qr+b_{0x}-\frac{1}{4}{\rm i}b_0q^{[1]}r^{[1]}=0,\\
  \frac{1}{4}{\rm i}c_0qr+c_{0x}+\frac{1}{4}{\rm i}c_0q^{[1]}r^{[1]}&=0,\quad
  -\frac{1}{4}{\rm i}d_0qr+d_{0x}+\frac{1}{4}{\rm i}d_0q^{[1]}r^{[1]}=0.
\end{aligned}
\end{equation}
\end{subequations}
Similarly, the $t$-part of \eqref{Txt} leads to

\begin{subequations}\label{tpart}
  \noindent $\lambda^4:$
  \begin{equation}\label{tparta}
    -2q^{[1]}d_1+2a_1q+4{\rm i}b_0=0,\quad -4{\rm i}c_0-2d_1r+2r^{[1]}a_1=0,
  \end{equation}
  $\lambda^3:$
  \begin{equation}\label{tpartb}
  \begin{aligned}
    -q^{[1]}r^{[1]}a_1{\rm i}-2b_0r-2q^{[1]}c_0+a_1qr{\rm i}&=0,&\quad 2a_0q-2q^{[1]}d_0=0,\\
    2r^{[1]}b_0+q^{[1]}r^{[1]}d_1{\rm i}+2c_0q-d_1qr{\rm i}&=0,&\quad -2d_0r+2r^{[1]}a_0=0,
    \end{aligned}
  \end{equation}
  $\lambda^2$:
  \begin{equation}\label{tpartc}
    \begin{aligned}
      a_0qr{\rm i}-q^{[1]}r^{[1]}a_0{\rm i}=0,\qquad  -d_0qr{\rm i}+q^{[1]}r^{[1]}d_0{\rm i}&=0,\\
      -q^{[1]}r^{[1]}b_0{\rm i}-\frac{1}{2}a_1q^2r-q^{[1]}_{x}d_1{\rm i}+\frac{1}{2}{q^{[1]}}^2r^{[1]}d_1-b_0qr{\rm i}+a_1q_x{\rm i}&=0,\\
      -{\rm i}r^{[1]}_{x}a_1-\frac{1}{2}q^{[1]}{r^{[1]}}^2a_1+\frac{1}{2}d_1qr^2+q^{[1]}r^{[1]}c_0{\rm i}+c_0qr{\rm i}+{\rm i}d_1r_{x}&=0,
    \end{aligned}
  \end{equation}
  $\lambda^1$:
  \begin{equation}\label{tpartd}
    \begin{aligned}
    &\frac{1}{2}{q^{[1]}}^2r^{[1]}d_0-q^{[1]}_{x}d_0{\rm i}-\frac{1}{2}a_0q^2r+a_0q_x{\rm i}=0,\quad
     -\frac{1}{2}q^{[1]}{r^{[1]}}^2a_0-r^{[1]}_{x}a_0{\rm i}+d_0r_x{\rm i}+\frac{1}{2}d_0qr^2=0,\\
    &\frac{1}{2}c_0{q^{[1]}}^2r^{[1]}-\frac{1}{4}q^{[1]}r^{[1]}_{x}a_1+\frac{1}{4}a_1qr_x+\frac{1}{2}b_0qr^2+\frac{1}{4}r^{[1]}q^{[1]}_{x}a_1
      -\frac{1}{8}{\rm i}a_1q^2r^2-c_0q^{[1]}_{x}{\rm i}-\frac{1}{4}a_1rq_x\\
      &+b_0r_x{\rm i}
      +\frac{1}{8}{\rm i}{q^{[1]}}^2{r^{[1]}}^2a_1+a_{1t}=0,\\
  &\frac{1}{4}d_1rq_x+\frac{1}{4}q^{[1]}r^{[1]}_{x}d_1+c_0q_x{\rm i}-\frac{1}{4}d_1qr_x+\frac{1}{8}{\rm i}d_1q^2r^2+d_{1t}-\frac{1}{2}c_0q^2r-b_0r^{[1]}_{x}{\rm i}-\frac{1}{8}{\rm i}{q^{[1]}}^2{r^{[1]}}^2d_1\\
  &-\frac{1}{4}r^{[1]}q^{[1]}_{x}d_1-\frac{1}{2}b_0q^{[1]}{r^{[1]}}^2=0,
    \end{aligned}
  \end{equation}
  $\lambda^0:$
  \begin{equation}\label{tparte}
    \begin{aligned}
     \frac{1}{4}r^{[1]}q_{x}a_0+a_{0t}+\frac{1}{4}a_0qr_x-\frac{1}{4}q^{[1]}r_{x}a_0+\frac{1}{8}{\rm i}{q^{[1]}}^2{r^{[1]}}^2a_0-\frac{1}{8}{\rm i}a_0q^2r^2-\frac{1}{4}a_0rq_x=0,\\
     b_{0t}+\frac{1}{4}b_0r^{[1]}q_{x}+\frac{1}{8}{\rm i}b_0{q^{[1]}}^2{r^{[1]}}^2+\frac{1}{8}{\rm i}b_0q^2r^2-\frac{1}{4}b_0qr_x+\frac{1}{4}b_0rq_x-\frac{1}{4}b_0q^{[1]}r_{x}=0,\\
     c_{0t}-\frac{1}{8}{\rm i}c_0q^2r^2+\frac{1}{4}c_0q^{[1]}r_{x}+\frac{1}{4}c_0qr_x-\frac{1}{4}c_0rq_x-\frac{1}{4}c_0r^{[1]}q_{x}
     -\frac{1}{8}{\rm i}c_0{q^{[1]}}^2{r^{[1]}}^2=0,\\
     \frac{1}{8}{\rm i}d_0q^2r^2-\frac{1}{4}d_0qr_x+\frac{1}{4}d_0rq_x+\frac{1}{4}q^{[1]}r_{x}d_0-\frac{1}{8}{\rm i}{q^{[1]}}^2{r^{[1]}}^2d_0-\frac{1}{4}r^{[1]}q^{[1]}_{x}d_0+d_{0t}=0.
    \end{aligned}
  \end{equation}
\end{subequations}

From \eqref{xpartc}, we have
$a_0=d_0=0$ in order to get  non-trivial transformation.
Therefore, suppose the first order DT is of form
form of
\begin{equation}
  \label{DT11}
  T_1(\lambda; \lambda_1)=\left(\begin{array}{cc}
  a_1 &0\\
  0  &d_1
  \end{array}\right)\lambda+\left(\begin{array}{cc}
  0  &b_0\\
  c_0&0
  \end{array}
  \right).
\end{equation}
The reason of the appearance of $\lambda_1$ in eq.(\ref{DT11}) is
that we shall use the eigenfunction associated with $\lambda_1$ to
parameterize the coefficients of $T_1$ with the help of
eqs.(\ref{xparta}) to (\ref{tparte}).

It is easy to get  $(a_1d_1)_x=0,\, (a_1d_1)_t=0,\, (b_0c_0)_x=0,
\mbox{ and } (b_0c_0)_t=0,$  from \eqref{xpartc},\,
\eqref{xpartd},\, \eqref{tpartd} and \eqref{tparte}.  So we  set $d_1=\dfrac{1}{a_1}$ and
$c_0=\dfrac{\lambda_1^2}{b_0}$, and then get
  \begin{equation}\label{a1b0}
    (a_1b_0)_x=\dfrac{1}{2}{\rm i}a_1b_0qr, \quad
     (a_1b_0)_t=-\dfrac{1}{4}a_1b_0({\rm i}q^2r^2-2qr_x+2rq_x).
  \end{equation}
\begin{lemma}
Let
\begin{equation}
  \label{HH}
  H=\exp\left(\int^{(x,\,t)}_{(x_0,\,t_0)}\frac{1}{2}\mathrm{i}
  qr\mathrm{d}x-\frac{1}{4}(\mathrm{i}q^2r^2-2qr_x+2rq_x)\mathrm{d}t\right)
\end{equation}
and
\begin{equation}\label{H}
  \dfrac{\partial H}{\partial x}=\dfrac{1}{2}{\rm i}Hqr, \quad
  \dfrac{\partial H}{\partial t}=-\dfrac{1}{4}H({\rm
  i}q^2r^2-2qr_x+2rq_x),
  \end{equation}
then $a_1b_0=\eta H$($\eta$ is an integral constant).
  \end{lemma}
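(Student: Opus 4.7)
The plan is to view the two relations in \eqref{a1b0} as a first--order, overdetermined system of logarithmic derivatives for the single unknown $F := a_1 b_0$, and then extract $F$ by simultaneously integrating them. Writing \eqref{a1b0} as
\[
(\log F)_x = \tfrac{1}{2}\mathrm{i}\,qr,\qquad
(\log F)_t = -\tfrac{1}{4}\bigl(\mathrm{i}q^2r^2 - 2qr_x + 2rq_x\bigr),
\]
one sees immediately that \emph{if} the 1-form
\[
\omega = \tfrac{1}{2}\mathrm{i}\,qr\,\mathrm{d}x - \tfrac{1}{4}\bigl(\mathrm{i}q^2r^2 - 2qr_x + 2rq_x\bigr)\mathrm{d}t
\]
is closed, then its line integral $\int_{(x_0,t_0)}^{(x,t)}\omega$ is path-independent, the function $H$ defined in \eqref{HH} makes sense, and $H$ automatically satisfies \eqref{H}.

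The first step, therefore, is to verify that $\mathrm{d}\omega = 0$, i.e.\ that
\[
\partial_t\!\bigl(\tfrac{1}{2}\mathrm{i}\,qr\bigr) \;=\; \partial_x\!\Bigl(-\tfrac{1}{4}\bigl(\mathrm{i}q^2r^2 - 2qr_x + 2rq_x\bigr)\Bigr).
\]
This is where the hypothesis that $(q,r)$ solves the coupled DNLSII system \eqref{clleq} enters: I would substitute $q_t = \mathrm{i}q_{xx} - qrq_x$ and $r_t = -\mathrm{i}r_{xx} - rqr_x$ (obtained directly from \eqref{qe} and \eqref{re}) into the left-hand side, expand the right-hand side by the Leibniz rule, and check that the two expressions coincide term by term. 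This is the only computational step with content; it is straightforward but is really the heart of the lemma, since it is precisely the compatibility of the two flows in \eqref{a1b0} that allows a common potential $H$ to exist.

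Once closedness of $\omega$ is established, the rest is immediate. Form the quotient $F/H$ and compute
\[
\bigl(F/H\bigr)_x = \frac{F_x H - F H_x}{H^2} = \frac{\tfrac{1}{2}\mathrm{i}qr\,F\cdot H - F\cdot \tfrac{1}{2}\mathrm{i}qr\,H}{H^2} = 0,
\]
using \eqref{a1b0} and \eqref{H}; an identical calculation, using the $t$-equations of \eqref{a1b0} and \eqref{H}, gives $(F/H)_t = 0$. Hence $F/H$ is independent of both $x$ and $t$, so $F/H = \eta$ for some constant $\eta$, which rearranges to $a_1 b_0 = \eta H$ as required. The main (and really only) obstacle is the closedness verification; the remainder of the argument is the standard principle that a function whose logarithmic gradient equals a closed 1-form is, up to a multiplicative constant, the exponential of any primitive of that form.
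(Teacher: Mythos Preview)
Your proposal is correct and follows essentially the same approach as the paper: the paper's proof consists of stating the conservation law $(\tfrac{1}{2}\mathrm{i}qr)_t = -\tfrac{1}{4}(\mathrm{i}q^2r^2 - 2qr_x + 2rq_x)_x$ (your closedness of $\omega$) as a consequence of \eqref{clleq}, and then declaring that \eqref{H} holds and that $a_1b_0=\eta H$ solves \eqref{a1b0}. Your version is simply more explicit in the final step, computing $(F/H)_x=(F/H)_t=0$ rather than observing that both $F$ and $H$ satisfy the same first-order system.
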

  \begin{proof}
    By the coupled DNLSII equations \eqref{clleq},
    \begin{equation} \label{conservedlaw}
     (\dfrac{1}{2}{\rm i}qr)_t
     =-\dfrac{1}{4}({\rm i}q^2r^2-2qr_x+2rq_x)_x.
    \end{equation}
It is obvious  that \eqref{H} is true, and $a_1b_0=\eta H$ is a solution of \eqref{a1b0}.
\end{proof}

\begin{remark}\label{remark1}
In some special cases, the explicit expression of $H$ can be solved explicitly. For instance, if $q=r=0$,  $H$ is a constant. If $r=q^*=c\exp(\mathrm{i}(ax+bt))$, $H=\exp(-\frac{1}{4}\mathrm{i}c^2(4at+c^2t-2x))$.
\end{remark}

To determine the analytic expression of $\ T_1$,
introduce $n$  eigenfunctions $\Phi_i$ corresponding to $n$ distinct
 eigenvalues $\lambda_i$ as
\begin{equation}
\Phi_i=\left(\begin{array}{c} f_i\\ g_i\end{array}\right), \quad
f_i=f_i(x,t,\lambda_i),\quad g_i=g_i(x,t,\lambda_i),
 \quad i=1,2,...,n.
\end{equation}
Let $T_1(\lambda;\lambda_1)|_{\lambda=\lambda_1}\Phi_1=0$, that is
   \begin{equation}
     \nonumber
     a_1\lambda_1f_1+b_0g_1=0,\quad c_0f_1+d_1\lambda_1g_1=0.
   \end{equation}
By setting $\eta=-\lambda_1$ and using the identity $a_1b_0 =\eta H$ in lemma 1,
the coefficients of $T_1$ are parameterized
by eigenfunction of $\lambda_1$ as
   \begin{equation}\nonumber
     a_1=\sqrt{\dfrac{H}{h_1}},\quad d_1=\sqrt{\dfrac{h_1}{H}},\quad b_0=-\lambda_1\sqrt{Hh_1},\quad c_0=\dfrac{-\lambda_1}{\sqrt{Hh_1}},
   \end{equation}
   where $h_1={\dfrac{f_1}{g_1}}$.

\begin{theorem}\label{thm_DT1}
Suppose $\lambda_1\neq0$, and define  $H$ and $\Phi_1$ as above, then
  \begin{equation}\label{1stDTm}
    T_1=T_1(\lambda;\ \lambda_1)=\left(\begin{matrix}
     \sqrt{H}  & \\
         &\dfrac{1}{\sqrt{H}}
    \end{matrix}\right)
    \left(\begin{matrix}
      \dfrac{\lambda}{\sqrt{h_1}}   &-\lambda_1\sqrt{h_1}\\
      -\dfrac{\lambda_1}{\sqrt{h_1}} &\lambda\sqrt{{h_1}}
    \end{matrix}\right)
  \end{equation}
  is a matrix of the one-fold DT of the coupled DNLSII equations.
Correspondingly,  new potentials are given by
  \begin{equation}\label{newpotentials}
    q^{[1]}=\dfrac{H}{h_1}(q-2{\rm i}\lambda_1h_1),
    \quad
    r^{[1]}=\dfrac{h_1}{H}(r-2{\rm i}\dfrac{\lambda_1}{h_1}),
  \end{equation}
  and new eigenfunctions $\Phi_{i}^{[1]}$ associated with
$\lambda_i$  $(i\geq2)$ are given by
  \begin{equation}\label{neweigenfun}
  \Phi^{[1]}_i=\left(\begin{matrix}f^{[1]}_i \\g^{[1]}_i
  \end{matrix}\right)=\dfrac{1}{\sqrt{f_1g_1}}\left(\begin{matrix}
     \sqrt{H}  & \\
         &\dfrac{1}{\sqrt{H}}
    \end{matrix}\right)\left(\begin{matrix}
  \begin{vmatrix}
  \lambda_1f_1 &g_1\\
  \lambda_if_i &g_i
  \end{vmatrix}\\\\
  \begin{vmatrix}
  \lambda_1g_1 &f_1\\
  \lambda_ig_i &f_i
  \end{vmatrix}
  \end{matrix}\right).
  \end{equation}
\end{theorem}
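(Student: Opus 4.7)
My plan is to verify the theorem by showing that the formula for $T_1$ is a direct consequence of the parameterization we have just established, then derive the new potentials from the coefficient equations \eqref{xparta}--\eqref{tparte}, and finally read off the new eigenfunctions by applying $T_1$ to $\Phi_i$.

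First, I would assemble $T_1$ from the already-parameterized coefficients $a_1=\sqrt{H/h_1}$, $d_1=\sqrt{h_1/H}$, $b_0=-\lambda_1\sqrt{Hh_1}$, $c_0=-\lambda_1/\sqrt{Hh_1}$ obtained by combining Lemma~1 (which gives $a_1b_0=-\lambda_1 H$ after fixing $\eta=-\lambda_1$) with the kernel condition $T_1(\lambda_1;\lambda_1)\Phi_1=0$ (which forces $b_0/a_1 = -\lambda_1 h_1$ and $c_0/d_1=-\lambda_1/h_1$). Substituting these into \eqref{DT11} and factoring out $\mathrm{diag}(\sqrt{H},1/\sqrt{H})$ gives exactly the factorization \eqref{1stDTm}; this step is pure algebra.

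Next, I would extract $q^{[1]}$ and $r^{[1]}$ from \eqref{xpartb}: the two relations $a_1 q + 2\mathrm{i}b_0 = q^{[1]}d_1$ and $r^{[1]}a_1 = 2\mathrm{i}c_0 + d_1 r$ become, after substituting the coefficients, precisely \eqref{newpotentials}. At this point the theorem is proved at the algebraic level; it remains to check that the proposed $q^{[1]}, r^{[1]}$ are consistent with every remaining equation in \eqref{xpartc}--\eqref{tparte}. The decisive observation is that the parameterization was engineered so that (i)~$H$ satisfies the differential identities \eqref{H} of Lemma~1 and (ii)~$h_1=f_1/g_1$ satisfies the Riccati-type ODEs
\begin{equation*}
h_{1,x} = -2\mathrm{i}\lambda_1^2 h_1 + \tfrac{\mathrm{i}}{2}qr\,h_1 + \lambda_1 q + \lambda_1 r h_1^2,
\end{equation*}
and an analogous $t$-equation, obtained directly from $\Phi_{1,x}=U\Phi_1$ and $\Phi_{1,t}=V\Phi_1$. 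Differentiating the coefficients and inserting these two sets of identities reduces every one of the remaining scalar equations in \eqref{xpart}--\eqref{tpart} to an algebraic identity. I expect that the conservation law \eqref{conservedlaw} already verified in Lemma~1 also ensures the compatibility of the $x$- and $t$-parts, so no independent check of $U^{[1]}_t - V^{[1]}_x + [U^{[1]},V^{[1]}]=0$ is needed.

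Finally, the new eigenfunctions are simply $\Phi_i^{[1]} = T_1(\lambda_i;\lambda_1)\Phi_i$. Writing out the two components of $T_1(\lambda_i;\lambda_1)\Phi_i$ using the factorization \eqref{1stDTm} and clearing the common factor $1/\sqrt{h_1} = \sqrt{g_1/f_1}$ (equivalently, multiplying and dividing by $\sqrt{f_1g_1}$) casts the result into the determinantal form \eqref{neweigenfun}. The main obstacle I anticipate is the bookkeeping in the verification step: the $\lambda^1$ equation of \eqref{tpartd} and the $\lambda^0$ equations of \eqref{tparte} are long, and one must carefully exploit both the Riccati equations for $h_1$ and the identities \eqref{H} for $H$ — together with the explicit expressions of $q^{[1]}r^{[1]}$ in terms of $q,r,h_1$ — to collapse them to tautologies; everything else is essentially automatic from the construction.
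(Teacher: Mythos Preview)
Your proposal is correct and follows essentially the same approach as the paper's own proof: substitute $T_1$, $q^{[1]}$, $r^{[1]}$ into \eqref{Txt} and verify the coefficient equations \eqref{xpart}--\eqref{tpart} using the Lax pair for $\Phi_1$ and the coupled DNLSII equations. The paper states this verification in one sentence, while you spell out the mechanism---the Riccati equation for $h_1$ coming from the Lax pair and the differential identities \eqref{H} for $H$---which is exactly what the paper's phrase ``using the coupled DNLSII equations and the Lax pair'' amounts to.
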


\begin{proof}
By substituting $T=T_1$, $q^{[1]}=\dfrac{H}{h_1}(q-2{\rm
i}\lambda_1h_1)$ and
    $r^{[1]}=\dfrac{h_1}{H}(r-2{\rm i}\dfrac{\lambda_1}{h_1})$
into \eqref{Txt} and comparing the coefficients of $\lambda$, eqs.
 \eqref{xpart} and \eqref{tpart} are verified by using the coupled
DNLSII equations (\ref{clleq}) and the Lax pair (\ref{laxpair}).
\end{proof}

Usually $\Phi_1$ is called a generating function of $T_1$. Note that the specific
analytical expression of $H$  includes an
integral with respect to the seed solution $(q,\,r)$, which is different
from the usual DT of soliton equations. This feature may result in
difficulties in obtaining the Darboux matrix of the multi-fold DT and
the explicit solutions of the DNLSII equation. Therefore, the multi-fold DT of
the DNLSII equation needs to be studied further.



\vspace{2 ex}

{\noindent \bf{2.2 The second order Darboux transformation}}\\

In this section, we will illustrate the second order DT of the
coupled DNLSII equations using the iteration of $T_1$, and show how
to overcome the difficulty induced by the integral of the seed
solutions.

It is clear that the first order DT $T_1$, generated by
$\Phi_1$,  maps ($q, r, \Phi_i$) to ($q^{[1]}, r^{[1]},
\Phi_i^{[1]}$). In order to do DT again for this new system,
 set $h_2^{[1]}=\frac{f_2^{[1]}}{g_2^{[1]}}$, and
 define $H^{[1]}$ by \eqref{HH} where  $q$ and $r$ are replaced
 by  $q^{[1]}$ and $r^{[1]}$ respectively.
By iteration of $T_1$ once  based on
($q^{[1]},r^{[1]},\Phi_i^{[1]}$), a one-fold DT is given by
\begin{eqnarray}
T_1^{[1]}(\lambda;\ \,\lambda_2) &=\left(\begin{matrix}
     \sqrt{H^{[1]}}  & \\
         &\dfrac{1}{\sqrt{H^{[1]}}}
    \end{matrix}\right)
    \left(\begin{matrix}
      \dfrac{\lambda}{\sqrt{h_2^{[1]}}}   &-\lambda_2\sqrt{h_2^{[1]}}\\
      -\dfrac{\lambda_2}{\sqrt{h_2^{[1]}}} &\lambda\sqrt{{h_2^{[1]}}}
    \end{matrix}\right),
\end{eqnarray}
which maps ($q^{[1]},r^{[1]},\Phi_i^{[1]}$) to ($q^{[2]},r^{[2]},\Phi_i^{[2]}$) as $T_1$ does in Theorem 1. The generating function of $T_1^{[1]}(\lambda;\ \,
\lambda_2)$ is an eigenfunction $\Phi_2^{[1]}$ associated with $\lambda_2$, i.e.,
 $T_1^{[1]}(\lambda;\  \lambda_2)|_{\lambda=\lambda_2} \Phi_2^{[1]}=0$.
Furthermore, the second order DT $T_2 $ is defined by
\begin{equation}\label{2rdDT}
T_2=T_2(\lambda;\lambda_1,\lambda_2)=T_1^{[1]}(\lambda;\ \lambda_2)T_1(\lambda;\lambda_1)
\end{equation}
as a composition of $T_1$ and $T_1^{[1]}$. Note that the kernel of $T_2$ consists of
two eigenfunctions
\begin{equation}\label{eq22}
T_2|_{\lambda=\lambda_j}\Phi_j=0 \ \text{for}\ j=1,\ 2.
\end{equation}
By direct calculation, $T_2$ has a simple expression
\begin{equation*}
  T_2(\lambda;\ \lambda_1,\lambda_2)=\left(\begin{matrix}
    a_2^{[2]}&\\
    &d_2^{[2]}
  \end{matrix}\right)\lambda^2+\left(\begin{matrix}
    &b_1^{[2]}\\
    c_1^{[2]}&
  \end{matrix}\right)\lambda+\left(\begin{matrix}
    \lambda_1\lambda_2\sqrt{\dfrac{H^{[1]}h_2^{[1]}}{Hh_1}}& \\ &\lambda_1\lambda_2\sqrt{\dfrac{Hh_1}{H^{[1]}h_2^{[1]}}}
  \end{matrix}\right).\end{equation*}
Here
$a_2^{[2]},\,d_2^{[2]},\,b_1^{[2]},\,c_1^{[2]}$  are determined by four linear equations
\eqref{eq22} according to the Cramer's rule.
After the action of DT $T_2$ on the seed solution $(q,r,\Phi_i)$,
the second solution $(q^{[2]},\,r^{[2]})$ is

\begin{equation}
      \label{q2}
      \begin{aligned}
        q^{[2]}&={\dfrac{H^{[1]}}{h_1}}\left(\frac{\begin{vmatrix} \lambda_1g_1 &f_1\\  \lambda_2g_2  &f_2\end{vmatrix}}{\begin{vmatrix} \lambda_1f_1  &g_1\\ \lambda_2f_2  &g_2\end{vmatrix}}q-2{\rm i}\frac{\begin{vmatrix}\lambda_1^2f_1  &f_1\\  \lambda_2^2f_2  &f_2\end{vmatrix}}{\begin{vmatrix} \lambda_1f_1 &g_1\\  \lambda_2f_2 &g_2\end{vmatrix}}\right),\\
        r^{[2]}&={\dfrac{h_1}{H^{[1]}}}\left(\frac{\begin{vmatrix} \lambda_1f_1 &g_1\\  \lambda_2f_2  &g_2\end{vmatrix}}{\begin{vmatrix} \lambda_1g_1  &f_1\\ \lambda_2g_2  &f_2\end{vmatrix}}r-2{\rm i}\frac{\begin{vmatrix}\lambda_1^2g_1  &g_1\\  \lambda_2^2g_2  &g_2\end{vmatrix}}{\begin{vmatrix} \lambda_1g_1 &f_1\\  \lambda_2g_2 &f_2\end{vmatrix}}\right).
      \end{aligned}
    \end{equation}

In the above expression, the integral $\int q^{[1]}r^{[1]}\mathrm{d}x$ in $H^{[1]}$ is not
calculable explicitly in general. This is a crucial problem to get explicit form
of solutions $q^{[2]}$ and $r^{[2]}$ of the coupled DNLSII equations by DT.
We would fail to get higher order DT if we could not overcome this problem, because
the more complicated integral like $\int q^{[2]}r^{[2]}\mathrm{d}x$ will appear in it.
Fortunately, we find out that $\frac{H^{[1]}}{h_1}$ is a
constant (a general result will be displayed in the next subsection), so
let $\frac{H^{[1]}}{h_1}=1$ without loss of generality. Thus the two-fold Darboux matrix are as follows.
\begin{theorem}\label{thm_DT2}
 The two-fold DT of the coupled DNLSII equations is
  \begin{equation}
    \label{T2_1}
    T_2=T_2(\lambda;\lambda_1,\lambda_2)=Q_2^{-1}\left(\begin{matrix}
      (T_2)_{11} &(T_2)_{12}\\
      (T_2)_{21} &(T_2)_{22}
    \end{matrix}\right)
  \end{equation}
  with
  \begin{eqnarray*}
    Q_2=\sqrt{\begin{vmatrix}
      \lambda_1g_1 &f_1\\
      \lambda_2g_2 &f_2
    \end{vmatrix}\begin{vmatrix}
      \lambda_1f_1 &g_1\\
      \lambda_2f_2 &g_2
    \end{vmatrix}},\quad
    (T_2)_{11}&=\begin{vmatrix}
      \lambda^2              &0               &1\\
      \lambda_1^2f_1         &\lambda_1g_1    &f_1\\
      \lambda_2^2f_2         &\lambda_2g_2    &f_2\\
    \end{vmatrix},\quad
    (T_2)_{12}&=\begin{vmatrix}
      0                 &\lambda          &0\\
      \lambda_1^2f_1         &\lambda_1g_1    &f_1\\
      \lambda_2^2f_2         &\lambda_2g_2    &f_2\\
    \end{vmatrix},
    \end{eqnarray*}
    \begin{equation*}
    \begin{aligned}
    (T_2)_{21}&=\begin{vmatrix}
      0                 &\lambda          &0\\
      \lambda_1^2g_1         &\lambda_1f_1    &g_1\\
      \lambda_2^2g_2         &\lambda_2f_2    &g_2\\
    \end{vmatrix},\quad
    (T_2)_{22}&=\begin{vmatrix}
      \lambda^2              &0               &1\\
      \lambda_1^2g_1         &\lambda_1f_1    &g_1\\
      \lambda_2^2g_2         &\lambda_2f_2    &g_2\\
    \end{vmatrix}.
    \end{aligned}
    \end{equation*}
After the action of $T_2$, the transformed  eigenfunctions
    $\Phi_i^{[2]}=\left(\begin{matrix}
      f_i^{[2]}\\
      g_i^{[2]}
    \end{matrix}\right)$ corresponding to $\lambda_i$ $(i>2)$ are
    \begin{equation}
      \label{phi2_1}
      \Phi_i^{[2]}=T_2(\lambda;\lambda_1,\lambda_2)|_{\lambda=\lambda_i}\Phi_i=Q_2^{-1}\left(\begin{matrix}
      \begin{vmatrix}
         \lambda_1^2f_1 &\lambda_1g_1  &f_1\\
         \lambda_2^2f_2 &\lambda_2g_2  &f_2\\
         \lambda_i^2f_i &\lambda_ig_i  &f_i
       \end{vmatrix} &\begin{vmatrix}
         \lambda_1^2g_1 &\lambda_1f_1  &g_1\\
         \lambda_2^2g_2 &\lambda_2f_2  &g_2\\
         \lambda_i^2g_i &\lambda_if_i  &g_i
       \end{vmatrix}
      \end{matrix}\right)^T,
    \end{equation}
    and the second order solutions $(q^{[2]}, \ r^{[2]})$ for the coupled DNLSII equations
 are
    \begin{equation}
      \label{q2_1}
      \begin{aligned}
        q^{[2]}=\frac{\begin{vmatrix} \lambda_1g_1 &f_1\\  \lambda_2g_2  &f_2\end{vmatrix}}{\begin{vmatrix} \lambda_1f_1  &g_1\\ \lambda_2f_2  &g_2\end{vmatrix}}q-2{\rm i}\frac{\begin{vmatrix}\lambda_1^2f_1  &f_1\\  \lambda_2^2f_2  &f_2\end{vmatrix}}{\begin{vmatrix} \lambda_1f_1 &g_1\\  \lambda_2f_2 &g_2\end{vmatrix}},\quad
        r^{[2]}=\frac{\begin{vmatrix} \lambda_1f_1 &g_1\\  \lambda_2f_2  &g_2\end{vmatrix}}{\begin{vmatrix} \lambda_1g_1  &f_1\\ \lambda_2g_2  &f_2\end{vmatrix}}r-2{\rm i}\frac{\begin{vmatrix}\lambda_1^2g_1  &g_1\\  \lambda_2^2g_2  &g_2\end{vmatrix}}{\begin{vmatrix} \lambda_1g_1 &f_1\\  \lambda_2g_2 &f_2\end{vmatrix}}.
      \end{aligned}
    \end{equation}\\
\end{theorem}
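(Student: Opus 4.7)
The plan is to define $T_2$ as the composition $T_2 = T_1^{[1]}(\lambda;\lambda_2)\,T_1(\lambda;\lambda_1)$ announced in \eqref{2rdDT}, and then extract the determinantal form \eqref{T2_1} from the kernel conditions \eqref{eq22}. The crucial preliminary step is to remove the non-local factor $H^{[1]}$ buried inside $T_1^{[1]}$; otherwise $q^{[2]}$ and $r^{[2]}$ would still carry an uncomputable $\int q^{[1]}r^{[1]}\,\mathrm{d}x$, and no iteration to a higher-order DT could proceed. So the main task is to establish that $H^{[1]}/h_1$ is constant, which authorizes the normalization $H^{[1]}/h_1\equiv 1$ announced in the text.

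The hard part is this constancy proof. For the $x$-direction I would use \eqref{H} applied to the once-transformed potentials to write $\partial_x\log H^{[1]} = \tfrac{\mathrm{i}}{2}q^{[1]}r^{[1]}$, insert the explicit formulas \eqref{newpotentials} of Theorem~\ref{thm_DT1} to obtain
\begin{equation*}
\tfrac{\mathrm{i}}{2}q^{[1]}r^{[1]} \;=\; \tfrac{\mathrm{i}}{2}qr + \lambda_1\frac{q}{h_1} + \lambda_1 r h_1 - 2\mathrm{i}\lambda_1^{2},
\end{equation*}
and independently compute $h_{1,x}/h_1 = f_{1,x}/f_1 - g_{1,x}/g_1$ from the $x$-part of \eqref{laxpair} for $\Phi_1$; the two expressions match term by term, so $\partial_x\log(H^{[1]}/h_1)=0$. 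The $t$-derivative is handled by the parallel calculation from the $t$-part of \eqref{laxpair} together with \eqref{conservedlaw}, giving $\partial_t\log(H^{[1]}/h_1)=0$. Fixing the integration constant then yields $H^{[1]}/h_1\equiv 1$.

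With this normalization in force, the composition $T_1^{[1]}T_1$ becomes a genuine polynomial of degree two in $\lambda$ with the block structure displayed just before \eqref{q2}: its $\lambda^{2}$- and $\lambda^{0}$-coefficients are diagonal and already known, while its $\lambda^{1}$-coefficient is off-diagonal with two unknown entries $b_1^{[2]}, c_1^{[2]}$. The four scalar unknowns $a_2^{[2]}, d_2^{[2]}, b_1^{[2]}, c_1^{[2]}$ are then pinned down by the four scalar relations coming from \eqref{eq22}; Cramer's rule on this $4\times 4$ linear system, with the common denominator arranged so that its square root equals $Q_2$, produces the $3\times 3$ determinantal entries in \eqref{T2_1}. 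Substituting $\lambda=\lambda_i$ and acting on $\Phi_i$ borders each minor with one further row to give \eqref{phi2_1}, while reading off the off-diagonal part of the $\lambda^{2}$-coefficient in the intertwining relation $T_{2,x}+T_2 U = U^{[2]}T_2$ (analogous to \eqref{xpartb}) yields the quotient-of-determinants formulas \eqref{q2_1} for $(q^{[2]},r^{[2]})$; compatibility with the $t$-part of \eqref{Txt} is then automatic by the same polynomial-identity principle.
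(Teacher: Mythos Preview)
Your proposal is correct and follows the paper's approach exactly: the composition $T_1^{[1]}T_1$, the constancy of $H^{[1]}/h_1$ (which is the $k=0$ instance of the paper's Theorem~\ref{thm_simplify}, proved there by the same logarithmic-derivative comparison you sketch), the normalization $H^{[1]}/h_1=1$, and Cramer's rule on the kernel system \eqref{eq22}. One small slip: for $T_2$ the transformed potentials are read from the $\lambda^{3}$-coefficient of the intertwining relation (i.e.\ $\lambda^{n+1}$ with $n=2$, as in Corollary~\ref{thm_qn}), not the $\lambda^{2}$-coefficient.
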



{\noindent\bf{2.3 The $n$-fold   Darboux transformation}}\\

In this subsection, we study the determinant representation of
the $n$-fold   DT of the coupled DNLSII equations and overcome the
problem induced by the integral of the seed solution.

According to the analysis for $T_1$ and $T_1^{[1]}$, set a chain of DTs
as follows:
\begin{equation}\label{n-1iteration}
q\stackrel{T_1}{\rightarrow} q^{[1]}\stackrel{T_1^{[1]}}{\rightarrow} q^{[2]}
\stackrel{T_1^{[2]}}{\rightarrow} q^{[3]}\cdots\rightarrow\cdots q^{[j-1]}\stackrel{T_1^{[j-1]}}
{\rightarrow} q^{[j]}
\cdots\rightarrow\cdots q^{[n-1]}\stackrel{T_1^{[n-1]}}{\rightarrow} q^{[n]},
\end{equation}
with
\begin{eqnarray}
T_1^{[j-1]}(\lambda;\ \,\lambda_j) &=\left(\begin{matrix}
     \sqrt{H^{[j-1]}}  & \\
         &\dfrac{1}{\sqrt{H^{[j-1]}}}
    \end{matrix}\right)
    \left(\begin{matrix}
      \dfrac{\lambda}{\sqrt{h_j^{[j-1]}}}   &-\lambda_j\sqrt{h_j^{[j-1]}}\\
      -\dfrac{\lambda_j}{\sqrt{h_j^{[j-1]}}} &\lambda\sqrt{{h_j^{[j-1]}}}
    \end{matrix}\right)
\end{eqnarray}
which gives the $j$-fold DT in iteration.
Here $h_j^{[j-1]}=\dfrac{f_j^{[j-1]}}{g_j^{[j-1]} }$, $H^{[j-1]}$ is
defined by \eqref{HH} where  $q$ and $r$ are replaced
 by  $q^{[j-1]}$ and $r^{[j-1]}$ respectively. Note that $T_1^{[j-1]}$ is generated by
$\Phi_{j}^{[j-1]}$, $T_1^{[0]}=T_1, q^{[0]}=q,  r^{[0]}=r,\Phi_{j}^{[0]}=\Phi_j$.
The operator $T_1^{[j-1]}(\lambda;\ \,\lambda_j) $ defines map
 \begin{equation} \label{jthmap}
T_1^{[j-1]}:\left\{
\begin{array}{ll}
&q^{[j-1]} \rightarrow q^{[j]},\\
&r^{[j-1]} \rightarrow r^{[j]},\\
&\Phi_i^{[j-1]} \rightarrow \Phi_i^{[j]},
\end{array}\right.
i,j=1,2,\cdots n, \ n\geq 1,
\end{equation}
in the same way as $T_1$ does. The composition of all DTs in eq.(\ref{n-1iteration}) gives
 the $n$-fold   DT of the
coupled DNLSII equations
\begin{equation}\label{nmulitplication}
T_n= T_1^{[n-1]}\cdots  T_1^{[j-1]} \cdots T_1^{[1]} T_1.
\end{equation}
The kernel of $T_n$ consists of $\Phi_j(j=1,2,\cdots,n)$.\\

According to \eqref{DT11} and \eqref{nmulitplication}, the $n$-fold   DT $T_n$ should
 be in the form of
\begin{equation}\label{Tngeneral}
T_n=T(\lambda;\lambda_1,\lambda_2,...,\lambda_n)=\sum^{n}_{l=0}P_l\lambda^l,
\end{equation}
where
\begin{equation}
P_l\in D \mbox{ (if $l-n$ is even)},\quad P_l\in A \mbox{ (if $l-n$ is odd)},
\end{equation}
and
\begin{equation*}
\begin{aligned}
  D&=\left\{\left.\left(\begin{array}{cc}
  a &0\\
  0 &d
  \end{array}
  \right)\right|\mbox{$a,d$ are complex functions of $x$ and $t$}\right\},\\
  A&=\left\{\left.\left(\begin{array}{cc}
  0 &b\\
  c &0
  \end{array}
  \right)\right|\mbox{$b,c$ are complex functions of $x$ and $t$}\right\}.
\end{aligned}
\end{equation*}
In particular
\begin{equation*}
P_n=\left(\begin{array}{cc}
a_n^{[n]} &0\\
0  &d_n^{[n]}
\end{array}\right),\quad
P_{n-1}=\left(\begin{array}{cc}
0  &b_{n-1}^{[n]}\\
c_{n-1}^{[n]} &0
\end{array}
\right),
\end{equation*}
and
\begin{equation}
  P_0=\prod_{k=1}^{n}\left(\begin{matrix}
     &-\lambda_k\sqrt{H^{[k-1]}{h_k^{[k-1]}}}\\
    -\lambda_k\frac{1}{\sqrt{H^{[k-1]}{h_k^{[k-1]}}}} &
  \end{matrix}\right).
\end{equation}
Here $H^{[0]}=H,h_1^{[0]}=h_1$.
If $n$ is odd, then
\begin{equation*}
  P_0=\left(\begin{matrix}
     &b_0^{[n]}\\
    c_0^{[n]}&
  \end{matrix}\right)\in A,
\end{equation*}
with
\begin{eqnarray*}
  b_0^{[n]}&=&-\lambda_1\lambda_2\lambda_3\ldots\lambda_n\frac{\sqrt{{H}{h_1}H^{[2]}{h_3^{[2]}}H^{[4]}{h_5^{[4]}}\ldots H^{[n-1]}{h_n^{[n-1]}}}}{\sqrt{{H}^{[1]}h_2^{[1]}H^{[3]}h_4^{[3]}H^{[5]}h_6^{[5]}\ldots H^{[n-2]}h_{n-1}^{[n-2]}}},\\
  c_0^{[n]}&=&-\lambda_1\lambda_2\lambda_3\ldots\lambda_n\frac{\sqrt{{H}^{[1]}h_2^{[1]}H^{[3]}h_4^{[3]}H^{[5]}h_6^{[5]}\ldots H^{[n-2]}h_{n-1}^{[n-2]}}}{\sqrt{{H}{h_1}H^{[2]}{h_3^{[2]}}H^{[4]}{h_5^{[4]}}\ldots H^{[n-1]}{h_n^{[n-1]}}}}.
\end{eqnarray*}
If $n$ is even, then
\begin{equation*}
  P_0=\left(\begin{matrix}
     a_0^{[n]}&\\
    &d_0^{[n]}
  \end{matrix}\right)\in D,
\end{equation*}
with
\begin{eqnarray*}
  a_0^{[n]}&=&\lambda_1\lambda_2\lambda_3\ldots\lambda_n\frac{\sqrt{{H}^{[1]}h_2^{[1]}H^{[3]}h_4^{[3]}H^{[5]}h_6^{[5]}\ldots H^{[n-1]}h_{n}^{[n-1]}}}{\sqrt{{H}{h_1}H^{[2]}{h_3^{[2]}}H^{[4]}{h_5^{[4]}}\ldots H^{[n-2]}{h_{n-1}^{[n-2]}}}},\\
  d_0^{[n]}&=&\lambda_1\lambda_2\lambda_3\ldots\lambda_n\frac{\sqrt{{H}{h_1}H^{[2]}{h_3^{[2]}}H^{[4]}{h_5^{[4]}}\ldots H^{[n-2]}{h_{n-1}^{[n-2]}}}}{\sqrt{{H}^{[1]}h_2^{[1]}H^{[3]}h_4^{[3]}H^{[5]}h_6^{[5]}\ldots H^{[n-1]}h_{n}^{[n-1]}}}.
\end{eqnarray*}
As mentioned above, $T_n$ depends on complicated
integrals through $H^{[k]}(k=1,2,\cdots,n-1)$. These integrals lead
to a crucial problem to get explicit form of the $q^{[n]}$ and $r^{[n]}$.
Based on the observation of $H$ in lemma 1, the following theorem  provides a general
result to eliminate $H^{[j]}(j=0,1,2,\cdots,n-1)$ in order  to
avoid unfavorable integrals in $T_n$.


\begin{theorem}\label{thm_simplify}
If $i\geq k+1$, then
$\frac{g_i^{[k]}}{f_i^{[k]}}H^{[k+1]}$ is a constant.
\end{theorem}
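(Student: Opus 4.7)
The plan is to show that both $\partial_x$ and $\partial_t$ of $\log\!\big((g_i^{[k]}/f_i^{[k]})\,H^{[k+1]}\big)$ vanish. In the chain \eqref{n-1iteration} the eigenfunction that enters the next one-fold DT is relabelled as $\Phi_{k+1}^{[k]}$, so writing the result with the abstract index $i\geq k+1$ amounts to checking it for the specific $\Phi_i^{[k]}$ used to produce $(q^{[k+1]}, r^{[k+1]})$ via \eqref{newpotentials} at level $k$, with $h_i^{[k]}$ in place of $h_1$ and $\lambda_i$ in place of $\lambda_1$.

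First the $x$-direction. Since $\Phi_i^{[k]}$ solves \eqref{laxpair} with potentials $(q^{[k]}, r^{[k]})$ at spectral parameter $\lambda_i$, a short direct computation from the componentwise form of $U$ yields
\[
\Big(\log \tfrac{g_i^{[k]}}{f_i^{[k]}}\Big)_x \;=\; 2{\rm i}\lambda_i^{2} \;-\; \tfrac{{\rm i}}{2}\, q^{[k]} r^{[k]} \;-\; \lambda_i\, \tfrac{q^{[k]}}{h_i^{[k]}} \;-\; \lambda_i\, r^{[k]}\, h_i^{[k]}.
\]
Multiplying the two formulas of \eqref{newpotentials} at level $k$ gives
\[
q^{[k+1]} r^{[k+1]} \;=\; q^{[k]} r^{[k]} \;-\; 2{\rm i}\lambda_i\, \tfrac{q^{[k]}}{h_i^{[k]}} \;-\; 2{\rm i}\lambda_i\, r^{[k]}\, h_i^{[k]} \;-\; 4\lambda_i^{2},
\]
which, combined with \eqref{H} at level $k+1$, produces $(\log H^{[k+1]})_x = \tfrac{{\rm i}}{2}q^{[k]}r^{[k]} + \lambda_i q^{[k]}/h_i^{[k]} + \lambda_i r^{[k]} h_i^{[k]} - 2{\rm i}\lambda_i^{2}$, which is exactly the negative of the previous display. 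Hence the $x$-derivative of the product is zero.

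The $t$-direction would be handled by the same scheme but is noticeably more laborious. Using the $V$-equation at level $k$, $(\log(g_i^{[k]}/f_i^{[k]}))_t$ becomes a polynomial of degree $4$ in $\lambda_i$ with coefficients built from $q^{[k]}, r^{[k]}$ and their $x$-derivatives. On the other side, one expands $(q^{[k+1]})^2 (r^{[k+1]})^2$, $q^{[k+1]} r^{[k+1]}_x$ and $r^{[k+1]} q^{[k+1]}_x$ via \eqref{newpotentials} and eliminates $x$-derivatives of $h_i^{[k]}$ by the $x$-formula above; the coupled DNLSII equations \eqref{clleq}, whose conservation-law form \eqref{conservedlaw} already underlies the consistency of \eqref{H}, absorb the residual $t$-derivatives of $q^{[k]}, r^{[k]}$. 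The main obstacle lies precisely in this bookkeeping; the cleanest route is to arrange the full expression for $(\log(g_i^{[k]}/f_i^{[k]}))_t + (\log H^{[k+1]})_t$ as a polynomial in $\lambda_i$ and verify that each coefficient vanishes identically. An alternative organization is induction on $k$: the base case $k=0$ reduces to the $x$-part above together with Lemma 1, and the inductive step merely replaces $(q, r)$ by $(q^{[k]}, r^{[k]})$ throughout, which is legitimate because $T_1^{[k]}$ is itself a DT of the coupled DNLSII system.
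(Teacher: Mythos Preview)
Your approach coincides with the paper's: show that both $\partial_x$ and $\partial_t$ of $\log\big((g_i^{[k]}/f_i^{[k]})H^{[k+1]}\big)$ vanish, and your $x$-computation is correct and matches theirs line for line. The $t$-part, however, you only sketch, and two points in that sketch need correction. First, no $t$-derivatives of $q^{[k]}, r^{[k]}$ ever appear: $(\log H^{[k+1]})_t$ is built from $(q^{[k+1]})^2(r^{[k+1]})^2$, $q^{[k+1]}r^{[k+1]}_x$ and $r^{[k+1]}q^{[k+1]}_x$, and once these are expanded via the level-$k$ DT formulas and the $x$-part of the Lax pair is used to eliminate $(f_i^{[k]})_x,(g_i^{[k]})_x$, only $q^{[k]}, r^{[k]}, q_x^{[k]}, r_x^{[k]}, f_i^{[k]}, g_i^{[k]}$ remain --- the coupled DNLSII equations are not invoked at this stage (the paper writes out the resulting expression $\Gamma$ explicitly and checks cancellation against \eqref{2termtpart1}--\eqref{2termtpart2}). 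Second, your ``induction on $k$'' is not a genuine alternative: Lemma~1 concerns $a_1b_0=\eta H$, not $(g_1/f_1)H^{[1]}$, so it does not settle the base case, and noting that the computation is uniform in $k$ does not avoid having to perform the $t$-calculation once. In short, the strategy is right and identical to the paper's, but the $t$-half still requires the explicit (tedious) verification the paper actually carries out.
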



\begin{proof}
  We need to prove
  $(\frac{g_i^{[k]}}{f_i^{[k]}}H^{[k+1]})_x=0$ and $(\frac{g_i^{[k]}}{f_i^{[k]}}H^{[k+1]})_
  t=0$. Namely,
  \begin{eqnarray}
    &&\dfrac{g_{ix}^{[k]}}{g_{i}^{[k]}}-\dfrac{f_{ix}^{[k]}}{f_{i}^{[k]}}+\dfrac{H_{x}^{[k+1]}}{H^{[k+1]}}=0\label{xproof},\\
    &&\dfrac{g_{it}^{[k]}}{g_{i}^{[k]}}-\dfrac{f_{it}^{[k]}}{f_{i}^{[k]}}+\dfrac{H_{t}^{[k+1]}}{H^{[k+1]}}=0\label{tproof}.
  \end{eqnarray}
According to Lax pair \eqref{laxpair},
  \begin{equation*}
  \begin{aligned}
   {f_{ix}^{[k]}}=\lambda_iq^{[k]}g_i^{[k]}-{\rm i}\lambda_i^2f_i^{[k]}+\dfrac{1}{4}{\rm i}q^{[k]}r^{[k]}f_i^{[k]},\\
   {g_{ix}^{[k]}}=-\lambda_ir^{[k]}f_i^{[k]}+{\rm i}\lambda_i^2g_i^{[k]}-\dfrac{1}{4}{\rm i}q^{[k]}r^{[k]}g_i^{[k]}.
   \end{aligned}
  \end{equation*}
Hence the sum of the first two terms in \eqref{xproof} is
  \begin{equation*}
    \dfrac{g_{ix}^{[k]}}{g_{i}^{[k]}}-\dfrac{f_{ix}^{[k]}}{f_{i}^{[k]}}=2{\rm i}\lambda_i^2-\dfrac{1}{2}{\rm i}q^{[k]}r^{[k]}-\lambda_iq^{[k]}\dfrac{g_{i}^{[k]}}{f_{i}^{[k]}}-\lambda_ir^{[k]}\dfrac{f_{i}^{[k]}}{g_{i}^{[k]}}.
  \end{equation*}
The third term of \eqref{xproof} can be expressed by
  \begin{equation*}
  \dfrac{H_{x}^{[k+1]}}{H^{[k+1]}}=\dfrac{1}{2}{\rm i}q^{[k+1]}r^{[k+1]}=-2{\rm i}\lambda_i^2+\dfrac{1}{2}{\rm i}q^{[k]}r^{[k]}+\lambda_iq^{[k]}\dfrac{g_i^{[k]}}{f_i^{[k]}}+\lambda_ir^{[k]}\dfrac{f_i^{[k]}}{g_i^{[k]}}.
  \end{equation*}
Here we have used \eqref{neweigenfun} and
 \begin{equation}\label{qrk+1}
    q^{[k+1]}={H^{[k]}}\dfrac{g_{i}^{[k]}}{f_{i}^{[k]}}(q^{[k]}-2{\rm i}\lambda_i\dfrac{f_{i}^{[k]}}{g_{i}^{[k]}}),
    \quad
    r^{[k+1]}=\dfrac{1}{H^{[k]}}\dfrac{f_{i}^{[k]}}{g_{i}^{[k]}}(r^{[k]}-2{\rm i}{\lambda_i}\dfrac{g_{i}^{[k]}}{f_{i}^{[k]}}),
  \end{equation}
which comes from \eqref{newpotentials} and  \eqref{jthmap}. Hence \eqref{xproof} is true.\\

On the other hand, according to the $t$-part of Lax pair \eqref{laxpair},
the first two terms of \eqref{tproof} are given by
  \begin{eqnarray}
    \dfrac{f_{it}^{[k]}}{f_i^{[k]}}=&-2{\rm i}\lambda_i^4+{\rm i}q^{[k]}r^{[k]}\lambda_i^2+\frac{1}{4}q^{[k]}r_x^{[k]}-\frac{1}{4}r^{[k]}q^{[k]}_x-\frac{1}{8}{\rm i}{q^{[k]}}^2{r^{[k]}}^2 \nonumber \\
    &+\left(2q^{[k]}\lambda_i^3-\frac{1}{2}{q^{[k]}}^2r^{[k]}\lambda_i+{\rm i}q_x^{[k]}\lambda_i\right)\dfrac{g_i^{[k]}}{f_i^{[k]}}
    , \label{2termtpart1}\\
    \dfrac{g_{it}^{[k]}}{g_i^{[k]}}=&2{\rm i}\lambda_i^4-{\rm i}q^{[k]}r^{[k]}\lambda_i^2-\frac{1}{4}q^{[k]}r_x^{[k]}+\frac{1}{4}r^{[k]}q^{[k]}_x+\frac{1}{8}{\rm i}{q^{[k]}}^2{r^{[k]}}^2 \nonumber \\
    &-\left(2r^{[k]}\lambda_i^3-\frac{1}{2}{q^{[k]}}{r^{[k]}}^2\lambda_i-{\rm i}r_x^{[k]}\lambda_i\right)\dfrac{f_i^{[k]}}{g_i^{[k]}}. \label{2termtpart2}
  \end{eqnarray}
The third term of  \eqref{tproof} is
  \begin{equation*}
    \frac{H_t^{[k+1]}}{H^{[k+1]}}=-\frac{1}{4}{\rm i}{q^{[k+1]}}^2{r^{[k+1]}}^2+\frac{1}{2}q^{[k+1]}r^{[k+1]}_x
    -\frac{1}{2}r^{[k+1]}q^{[k+1]}_x
  \end{equation*}
according to \eqref{H} and iteration \eqref{qrk+1}.
Moreover, \eqref{qrk+1} implies
\begin{equation*}
  \begin{aligned}
  q^{[k+1]}_x=&\frac{H^{[k]}g_i^{[k]}\left(2{\rm i}q^{[k]}f_i^{[k]}\lambda_i^2-g_i^{[k]}{q^{[k]}}^2\lambda_i
  +f_i^{[k]}\dfrac{\partial}{\partial{x}}q^{[k]}\right)}{{f_i^{[k]}}^2},\\
  r^{[k+1]}_x=&-\frac{f_i^{[k]}\left(2{\rm i}r^{[k]}g_i^{[k]}\lambda_i^2-f_i^{[k]}{r^{[k]}}^2\lambda_i
  -g_i^{[k]}\dfrac{\partial}{\partial{x}}r^{[k]}\right)}{H^{[k]}{g_i^{[k]}}^2}.
  \end{aligned}
  \end{equation*}
Substituting them into $\frac{H_t^{[k+1]}}{H^{[k+1]}}$ leads to
\begin{equation} \label{3rdtermtpart}
  \frac{H_t^{[k+1]}}{H^{[k+1]}}=\frac{\Gamma}{4f_i^{[k]}g_i^{[k]}}
\end{equation}
  with
  \begin{eqnarray*}
  &\Gamma=&-16{\rm i}f_i^{[k]}g_i^{[k]}\lambda_i^4+(8{f_i^{[k]}}^2r^{[k]}+8{g_i^{[k]}}^2q^{[k]})\lambda_i^3
  +8{\rm i}g_i^{[k]}f_i^{[k]}q^{[k]}r^{[k]}\lambda_i^2\\
  &&-(2{f_i^{[k]}}^2q^{[k]}{r^{[k]}}^2
  +4{\rm i}{f_i^{[k]}}^2r_x^{[k]}+2{g_i^{[k]}}^2{q^{[k]}}^2r^{[k]}
  -4{\rm i}{g_i^{[k]}}^2q_x^{[k]})\lambda_i\\
  &&-{\rm i}{g_i^{[k]}}{q^{[k]}}^2{r^{[k]}}^2f_i^{[k]}+2{g_i^{[k]}}{f_i^{[k]}}q^{[k]}
  r_x^{[k]}-2{g_i^{[k]}}{f_i^{[k]}}r^{[k]}q_x^{[k]}.
  \end{eqnarray*}
\eqref{tproof} is proved by substituting \eqref{2termtpart1}, \eqref{2termtpart2} and  \eqref{3rdtermtpart} with a cumbersome simplification.
\end{proof}


Set $\dfrac{H^{[k+1]}}{h^{[k]}_{k+1}}=1$ for $k=0,1,2,\cdots,n-2$ without loss of generality,
then
\begin{eqnarray*}
&&a_0^{[n]}= \lambda_1\lambda_2\lambda_3\ldots\lambda_n\sqrt{\frac{h_n^{[n-1]}}{H}},\quad
  d_0^{[n]}=\lambda_1\lambda_2\lambda_3\ldots\lambda_n\sqrt{\frac{H}{h_n^{[n-1]}}},\quad \mbox{if $n$ is even},\\
  &&b_0^{[n]}=-\lambda_1\lambda_2\lambda_3\ldots\lambda_n\sqrt{H h_n^{[n-1]}},\quad
  c_0^{[n]}=-\lambda_1\lambda_2\lambda_3\ldots\lambda_n\frac{1}{\sqrt{H h_n^{[n-1]}}},\quad \mbox{if $n$ is odd}.
\end{eqnarray*}
Namely,
\begin{equation} \label{P0withhn}
    P_0=\begin{cases}
    \left(\begin{matrix}
     \lambda_1\lambda_2\lambda_3\ldots\lambda_n\sqrt{\frac{h_n^{[n-1]}}{H}}&\\
   & \lambda_1\lambda_2\lambda_3\ldots\lambda_n\sqrt{\frac{H}{h_n^{[n-1]}}}
  \end{matrix}\right) &\mbox{if $n$ is even},\\\\
   \left(\begin{matrix}
     &-\lambda_1\lambda_2\lambda_3\ldots\lambda_n\sqrt{H h_n^{[n-1]}}\\
    -\lambda_1\lambda_2\lambda_3\ldots\lambda_n\frac{1}{\sqrt{H h_n^{[n-1]}}}&
  \end{matrix}\right) &\mbox{if $n$ is odd}.
  \end{cases}
\end{equation}
This expression of $P_0$ shows that many unfavorable integrals in $T_n$
are eliminated because of the disappearance of
$H^{[k]}(k=1,2,\cdots,n-1)$.  But we need to further simplify $P_0$ because there exist
hidden integrals in $h_n^{[n-1]}$.  Since $P_0 \in D$ if $n$ is even and $P_0 \in A$ if $n$ is odd, it needs to discuss expression of the $T_n$ for even $n$ and odd $n$ separately.
Here, we define a new kind of Vandermonde-type matrix to simplify the expression of the $n$-fold   DT and the $n$-fold potentials $(q^{[n]},\ r^{[n]})$.
\begin{itemize}
\item For $n=2k+1\ (k\in\mathbb{N})$
 \begin{equation}
  \label{van}
  \begin{aligned}
  W_n&=\left(\begin{matrix}
        \lambda_1^{n-1}f_1 &\lambda_1^{n-2}g_1 &\lambda_1^{n-3}f_1 &\lambda_1^{n-4}g_1 &\cdots &\lambda_1^2f_1 &\lambda_1g_1 &f_1\\
        \lambda_2^{n-1}f_2 &\lambda_2^{n-2}g_2 &\lambda_2^{n-3}f_2 &\lambda_2^{n-4}g_2 &\cdots &\lambda_2^2f_2 &\lambda_2g_2 &f_2\\
        \vdots &\vdots &\vdots &\vdots &\vdots &\vdots  &\vdots &\vdots\\
        \lambda_n^{n-1}f_n &\lambda_n^{n-2}g_n &\lambda_n^{n-3}f_n &\lambda_n^{n-4}g_n &\cdots &\lambda_n^2f_n &\lambda_ng_n &f_n\\
  \end{matrix}\right),\\
  \widehat{W_n}&=\left(\begin{matrix}
        \lambda_1^{n-1}g_1 &\lambda_1^{n-2}f_1 &\lambda_1^{n-3}g_1 &\lambda_1^{n-4}f_1 &\cdots &\lambda_1^2g_1 &\lambda_1f_1 &g_1\\
        \lambda_2^{n-1}g_2 &\lambda_2^{n-2}f_2 &\lambda_2^{n-3}g_2 &\lambda_2^{n-4}f_2 &\cdots &\lambda_2^2g_2 &\lambda_2f_2 &g_2\\
        \vdots &\vdots &\vdots &\vdots  &\vdots &\vdots  &\vdots &\vdots\\
        \lambda_n^{n-1}g_n &\lambda_n^{n-2}f_n &\lambda_n^{n-3}g_n &\lambda_n^{n-4}f_n &\cdots &\lambda_n^2g_n &\lambda_nf_n &g_n\\
  \end{matrix}\right),\\
  {M_n}&=\left(\begin{matrix}
        \lambda_1^{n}f_1 &\lambda_1^{n-2}f_1 &\lambda_1^{n-3}g_1 &\lambda_1^{n-4}f_1 &\cdots &\lambda_1^2g_1 &\lambda_1f_1 &g_1\\
        \lambda_2^{n}f_2 &\lambda_2^{n-2}f_2 &\lambda_2^{n-3}g_2 &\lambda_2^{n-4}f_2 &\cdots &\lambda_2^2g_2 &\lambda_2f_2 &g_2\\
        \vdots &\vdots &\vdots &\vdots &\vdots &\vdots  &\vdots &\vdots\\
        \lambda_n^{n}f_n &\lambda_n^{n-2}f_n &\lambda_n^{n-3}g_n &\lambda_n^{n-4}f_n &\cdots &\lambda_n^2g_n &\lambda_nf_n &g_n\\
  \end{matrix}\right),\\
  \widehat{M_n}&=\left(\begin{matrix}
        \lambda_1^{n}g_1 &\lambda_1^{n-2}g_1 &\lambda_1^{n-3}f_1 &\lambda_1^{n-4}g_1 &\cdots &\lambda_1^2f_1 &\lambda_1g_1 &f_1\\
        \lambda_2^{n}g_2 &\lambda_2^{n-2}g_2 &\lambda_2^{n-3}f_2 &\lambda_2^{n-4}g_2 &\cdots &\lambda_2^2f_2 &\lambda_2g_2 &f_2\\
        \vdots &\vdots &\vdots &\vdots &\vdots &\vdots &\vdots &\vdots\\
        \lambda_n^{n}g_n &\lambda_n^{n-2}g_n &\lambda_n^{n-3}f_n &\lambda_n^{n-4}g_n &\cdots &\lambda_n^2f_n &\lambda_ng_n &f_n\\
  \end{matrix}\right).
  \end{aligned}
\end{equation}
\item For $n=2k \ (k\in\mathbb{N}^*)$
\begin{equation}
  \label{van1}
  \begin{aligned}
    W_n&=\left(\begin{matrix}
        \lambda_1^{n-1}f_1 &\lambda_1^{n-2}g_1 &\lambda_1^{n-3}f_1 &\lambda_1^{n-4}g_1 &\cdots &\lambda_1^2g_1 &\lambda_1f_1 &g_1\\
        \lambda_2^{n-1}f_2 &\lambda_2^{n-2}g_2 &\lambda_2^{n-3}f_2 &\lambda_2^{n-4}g_2 &\cdots &\lambda_2^2g_2 &\lambda_2f_2 &g_2\\
        \vdots &\vdots &\vdots &\vdots &\vdots &\vdots &\vdots &\vdots\\
        \lambda_n^{n-1}f_n &\lambda_n^{n-2}g_n &\lambda_n^{n-3}f_n &\lambda_n^{n-4}g_n &\cdots &\lambda_n^2g_n &\lambda_nf_n &g_n\\
  \end{matrix}\right),\\
  \widehat{W_n}&=\left(\begin{matrix}
        \lambda_1^{n-1}g_1 &\lambda_1^{n-2}f_1 &\lambda_1^{n-3}g_1 &\lambda_1^{n-4}f_1 &\cdots &\lambda_1^2f_1 &\lambda_1g_1 &f_1\\
        \lambda_2^{n-1}g_2 &\lambda_2^{n-2}f_2 &\lambda_2^{n-3}g_2 &\lambda_2^{n-4}f_2 &\cdots &\lambda_2^2f_2 &\lambda_2g_2 &f_2\\
        \vdots &\vdots &\vdots &\vdots &\vdots &\vdots &\vdots &\vdots\\
        \lambda_n^{n-1}g_n &\lambda_n^{n-2}f_n &\lambda_n^{n-3}g_n &\lambda_n^{n-4}f_n &\cdots &\lambda_n^2f_n &\lambda_ng_n &f_n\\
  \end{matrix}\right),\\
   \end{aligned}
  \end{equation}
  \begin{equation}\nonumber
  \begin{aligned}
  {M_n}&=\left(\begin{matrix}
        \lambda_1^{n}f_1 &\lambda_1^{n-2}f_1 &\lambda_1^{n-3}g_1 &\lambda_1^{n-4}f_1 &\cdots &\lambda_1^2f_1 &\lambda_1g_1 &f_1\\
        \lambda_2^{n}f_2 &\lambda_2^{n-2}f_2 &\lambda_2^{n-3}g_2 &\lambda_2^{n-4}f_2 &\cdots &\lambda_2^2f_2 &\lambda_2g_2 &f_2\\
        \vdots &\vdots &\vdots &\vdots &\vdots &\vdots &\vdots &\vdots\\
        \lambda_n^{n}f_n &\lambda_n^{n-2}f_n &\lambda_n^{n-3}g_n &\lambda_n^{n-4}f_n &\cdots &\lambda_n^2f_n &\lambda_ng_n &f_n\\
  \end{matrix}\right),\\
  \widehat{M_n}&=\left(\begin{matrix}
        \lambda_1^{n}g_1 &\lambda_1^{n-2}g_1 &\lambda_1^{n-3}f_1 &\lambda_1^{n-4}g_1 &\cdots &\lambda_1^2g_1 &\lambda_1f_1 &g_1\\
        \lambda_2^{n}g_2 &\lambda_2^{n-2}g_2 &\lambda_2^{n-3}f_2 &\lambda_2^{n-4}g_2 &\cdots &\lambda_2^2g_2 &\lambda_2f_2 &g_1\\
        \vdots &\vdots &\vdots &\vdots &\vdots &\vdots &\vdots &\vdots\\
        \lambda_n^{n}g_n &\lambda_n^{n-2}g_n &\lambda_n^{n-3}f_n &\lambda_n^{n-4}g_n &\cdots &\lambda_n^2g_n &\lambda_nf_n &g_n\\
  \end{matrix}\right).
  \end{aligned}
\end{equation}
\end{itemize}
In particular
\begin{equation}
  \nonumber
  W_2=\left(\begin{matrix}
    \lambda_1f_1 &g_1\\
    \lambda_2f_2 &g_2
  \end{matrix}\right),\quad
  \widehat{W_2}=\left(\begin{matrix}
    \lambda_1g_1 &f_1\\
    \lambda_2g_2 &f_2
  \end{matrix}\right),\quad
  M_2=\left(\begin{matrix}
    \lambda_1^2f_1 &f_1\\
    \lambda_2^2f_2 &f_2
  \end{matrix}\right),\quad
  \widehat{M_2}=\left(\begin{matrix}
    \lambda_1^2g_1 &g_1\\
    \lambda_2^2g_2 &g_2
  \end{matrix}\right).
\end{equation}
\begin{lemma}
  With the parameters defined above,
  \begin{equation}\label{P0}
    P_{0}=
   \begin{cases}
   \left(\begin{matrix}
    \lambda_1\lambda_2\lambda_3\ldots\lambda_n\sqrt{\frac{|W_n|}{|\widehat{W_n}|}}& \\
    &\lambda_1\lambda_2\lambda_3\ldots\lambda_n\sqrt{\frac{|\widehat{W_n|}}{|W_n|}}
   \end{matrix}\right)&\mbox{if $n$ is even},\\\\
   \left(\begin{matrix}
     \sqrt{H} &\\
     &\frac{1}{\sqrt{H}}
   \end{matrix}\right)\left(\begin{matrix}
     &-\lambda_1\lambda_2\lambda_3\ldots\lambda_n\sqrt{\frac{|W_n|}{|\widehat{W_n}|}} \\
    -\lambda_1\lambda_2\lambda_3\ldots\lambda_n\sqrt{\frac{|\widehat{W_n|}}{|W_n|}}&
   \end{matrix}\right) &\mbox{if $n$ is odd}.
   \end{cases}
  \end{equation}
\end{lemma}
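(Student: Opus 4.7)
My plan is to combine the form \eqref{P0withhn} of $P_0$, in which the Theorem~\ref{thm_simplify} normalization $H^{[k+1]}=h_{k+1}^{[k]}$ has eliminated every internal $H^{[k]}$ and left only $H$ and $h_n^{[n-1]}$, with Cramer's rule applied to the $n$ kernel conditions $T_n|_{\lambda=\lambda_j}\Phi_j=0$ for $j=1,\ldots,n$. All that is needed is the single identity
\[
\frac{h_n^{[n-1]}}{H}=\frac{|W_n|}{|\widehat{W_n}|}\quad(n\text{ even}),\qquad h_n^{[n-1]}=\frac{|W_n|}{|\widehat{W_n}|}\quad(n\text{ odd}),
\]
since substituting this into \eqref{P0withhn} reproduces the matrix claimed in the lemma.

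To obtain this identity I work with the first-row kernel equation $(T_n)_{11}f_j+(T_n)_{12}g_j=0$. Because the coefficients $P_l$ alternate between the classes $D$ and $A$ as $l$ changes parity, the non-zero coefficients of $(T_n)_{11}$ and $(T_n)_{12}$ multiply alternating $f$- and $g$-entries with decreasing powers of $\lambda$, giving an $n\times(n+1)$ linear system $M\mathbf{x}=0$ whose columns are $\lambda_j^n f_j,\lambda_j^{n-1}g_j,\lambda_j^{n-2}f_j,\ldots$ (ending in $f_j$ for $n$ even, in $g_j$ for $n$ odd). Cramer's rule then gives $x_k=(-1)^k c\det M^{(k)}$, where $M^{(k)}$ denotes $M$ with the $k$-th column removed. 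A column-by-column inspection against \eqref{van} and \eqref{van1} shows $\det M^{(0)}=|\widehat{W_n}|$ and, after factoring one $\lambda_j$ from each row, $\det M^{(n)}=\lambda_1\cdots\lambda_n\,|W_n|$ for both parities, so
\[
\frac{x_n}{x_0}=(-1)^n\lambda_1\lambda_2\cdots\lambda_n\,\frac{|W_n|}{|\widehat{W_n}|}.
\]
On the iteration side, $x_0=a_n^{[n]}=\prod_{k=0}^{n-1}\sqrt{H^{[k]}/h_{k+1}^{[k]}}$ telescopes under the same normalization to $\sqrt{H/h_n^{[n-1]}}$, while $x_n$ equals $a_0^{[n]}$ (for even $n$) or $b_0^{[n]}$ (for odd $n$), both read off directly from \eqref{P0withhn}. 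Equating the two expressions for $x_n/x_0$ yields the target identity, with the sign $(-1)^n$ absorbed precisely into the parity distinction between the two cases.

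The $d$-subscript entries of $P_0$ (and $c_0^{[n]}$ when $n$ is odd) follow by running the identical argument on the second-row equation $(T_n)_{21}f_j+(T_n)_{22}g_j=0$, which swaps $f\leftrightarrow g$ and therefore exchanges $|W_n|$ with $|\widehat{W_n}|$, yielding the reciprocal-square-root entries. I expect the main obstacle to be the bookkeeping in the column-type inspection, in particular checking that the end of the $f$--$g$ alternation in $M^{(0)}$ and in the $\lambda$-reduced $M^{(n)}$ matches the even-$n$ and odd-$n$ patterns of \eqref{van} and \eqref{van1}; once this is done and the Cramer sign is tracked, the remaining algebra reduces to a single line.
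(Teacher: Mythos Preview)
Your argument is correct and reaches the same key identity as the paper, namely $h_n^{[n-1]}=|W_n|/|\widehat{W_n}|$ for $n$ odd and $h_n^{[n-1]}/H=|W_n|/|\widehat{W_n}|$ for $n$ even, after which substitution into \eqref{P0withhn} finishes the proof. The route, however, is genuinely different. The paper proceeds by induction on $n$: it verifies $n=1,2$ from Theorems~\ref{thm_DT1} and~\ref{thm_DT2}, assumes the lemma at level $k$, uses that hypothesis together with the kernel conditions to write $T_k$ in determinant form via Cramer's rule, then applies $T_k$ to $\Phi_{k+1}$ to obtain $h_{k+1}^{[k]}$ explicitly as a ratio of determinants (times $H$ when $k$ is odd), and finally feeds this back into \eqref{P0withhn}. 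You instead work directly at level $n$: the $n$ first-row kernel equations form an $n\times(n+1)$ homogeneous system, the one-dimensional solution is read off as the vector of signed $n\times n$ minors, and comparing the extreme ratio $x_n/x_0$ from the minor formula with the same ratio computed from the telescoped leading coefficient $a_n^{[n]}=\sqrt{H/h_n^{[n-1]}}$ and the constant term in \eqref{P0withhn} yields the identity in one stroke, with the sign $(-1)^n$ absorbed into the parity split. Your approach is shorter and avoids the inductive scaffolding; the paper's induction has the compensating advantage that it makes the full determinant form of $T_k$ explicit at every intermediate stage, which is exactly what is reused immediately afterwards in Theorem~\ref{thm_nDT}.
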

\begin{proof}
  Basic step: For $n=1,\,2$, according to theorems 1 and 2, we have
  \begin{equation}\nonumber
    P_0=\left(\begin{matrix}
     \sqrt{H} &\\
     &\frac{1}{\sqrt{H}}
   \end{matrix}\right)\left(\begin{matrix}
     &-\lambda_1\sqrt{\frac{f_1}{g_1}} \\
    -\lambda_1\sqrt{\frac{g_1}{f_1}}&
   \end{matrix}\right)=\left(\begin{matrix}
     \sqrt{H} &\\
     &\frac{1}{\sqrt{H}}
   \end{matrix}\right)\left(\begin{matrix}
     &-\lambda_1\sqrt{\frac{|W_1|}{|\widehat{W_1}|}} \\
    -\lambda_1\sqrt{\frac{|\widehat{W_1|}}{|W_1|}}&
   \end{matrix}\right),\quad \mbox{n=1}.
  \end{equation}
and
  \begin{equation}\nonumber
  \begin{aligned}
    P_0&=\left(\begin{matrix}
    \lambda_1\lambda_2\sqrt{\frac{h_2^{[1]}}{H}}& \\ &\lambda_1\lambda_2\sqrt{\frac{H}{h_2^{[1]}}}
  \end{matrix}\right)\\&=\left(\begin{matrix}\lambda_1\lambda_2\sqrt{\frac{\begin{vmatrix}\lambda_1f_1 &g_1\\    \lambda_2f_2 &g_2  \end{vmatrix}}{\begin{vmatrix}\lambda_1g_1 &f_1\\ \lambda_2g_2 &f_2\end{vmatrix}}}&\\
  &\lambda_1\lambda_2\sqrt{\frac{\begin{vmatrix}\lambda_1g_1 &f_1\\ \lambda_2g_2 &f_2\end{vmatrix}}{\begin{vmatrix}\lambda_1f_1 &g_1\\    \lambda_2f_2 &g_2  \end{vmatrix}}}\end{matrix}\right)=\left(\begin{matrix}
     \lambda_1\lambda_2\sqrt{\frac{|W_2|}{\widehat{|W_2|}}} &\\
    &\lambda_1\lambda_2\sqrt{\frac{|\widehat{W_2|}}{|W_2|}}
   \end{matrix}\right),\quad \mbox{n=2}.
  \end{aligned}
  \end{equation}
So eq. \eqref{P0} is true when $n=1,\,2$.

Inductive Step: Now we assume \eqref{P0} is true when $n=k\,(k\in\mathbb{Z^+})$, that is,
\begin{equation}\label{induction}
    P_{0}=
   \begin{cases}
   \left(\begin{matrix}
    \lambda_1\lambda_2\lambda_3\ldots\lambda_k\sqrt{\frac{|W_k|}{\widehat{|W_k|}}}& \\
    &\lambda_1\lambda_2\lambda_3\ldots\lambda_k\sqrt{\frac{|\widehat{W_k|}}{|W_k|}}
   \end{matrix}\right)&\mbox{if $k$ is even},\\\\
   \left(\begin{matrix}
     \sqrt{H} &\\
     &\frac{1}{\sqrt{H}}
   \end{matrix}\right)\left(\begin{matrix}
     &-\lambda_1\lambda_2\lambda_3\ldots\lambda_k\sqrt{\frac{|W_k|}{\widehat{|W_k|}}} \\
    -\lambda_1\lambda_2\lambda_3\ldots\lambda_k\sqrt{\frac{|\widehat{W_k|}}{|W_k|}}&
   \end{matrix}\right) &\mbox{if $k$ is odd}.
   \end{cases}
  \end{equation}
  From this assumption we want to deduce
  \begin{equation} \label{Pok+1}
  P_{0}\hspace{-0.2cm}=\hspace{-0.2cm}
   \begin{cases}\hspace{-0.2cm}
   \left(\begin{matrix}
    \lambda_1\lambda_2\lambda_3\ldots\lambda_{k+1}\sqrt{\frac{|W_{k+1}|}{\widehat{|W_{k+1}|}}}& \\
    &\lambda_1\lambda_2\lambda_3\ldots\lambda_{k+1}\sqrt{\frac{|\widehat{W_{k+1}|}}{|W_{k+1}|}}
   \end{matrix}\right),&\mbox{\hspace{-0.3cm}if $k+1$ is even},\\\\
   \hspace{-0.2cm}\left(\begin{matrix}
     \sqrt{H} &\\
     &\frac{1}{\sqrt{H}}
   \end{matrix}\right)\left(\begin{matrix}
     &-\lambda_1\lambda_2\lambda_3\ldots\lambda_k\sqrt{\frac{|W_{k+1}|}{\widehat{|W_{k+1}|}}} \\
    -\lambda_1\lambda_2\lambda_3\ldots\lambda_k\sqrt{\frac{|\widehat{W_{k+1}|}}{|W_{k+1}|}}&
   \end{matrix}\right),&\mbox{\hspace{-0.3cm}if $k+1$ is odd}.
   \end{cases}
  \end{equation}

  Note that the kernel of $T_k$ consists of  $\Phi_j(j=1,2,\cdots,k)$, i.e.,
$T_k(\lambda;\lambda_1,\lambda_2,...,\lambda_k)\Phi_j|_{\lambda=\lambda_j}
 =0$. Substituting the above induction hypothesis \eqref{induction} into these algebraic equations,
we obtain the $k$-fold   DT by the Cramer's rule as
  \begin{equation*}
  T_k=T_k(\lambda;\lambda_1,\lambda_2,...,\lambda_k)=
   \begin{cases}
   \frac{1}{\sqrt{|W_k||\widehat{W_k}|}}
      \left(\begin{matrix}
      \widehat{(T_k)}_{11} &\widehat{(T_k)}_{12}\\
      \widehat{(T_k)}_{21} &\widehat{(T_k)}_{22}
      \end{matrix}\right) &\mbox{if $k$ is even},\\\\
   \frac{1}{\sqrt{|W_k||\widehat{W_k}|}}
      \left(\begin{matrix}
     \sqrt{H} &\\
     &\frac{1}{\sqrt{H}}
   \end{matrix}\right)\left(\begin{matrix}
      \widehat{(T_k)}_{11} &\widehat{(T_k)}_{12}\\
      \widehat{(T_k)}_{21} &\widehat{(T_k)}_{22}
      \end{matrix}\right) &\mbox{if $k$ is odd},
   \end{cases}
  \end{equation*}
where
    \begin{equation*}
    \begin{aligned}
    \widehat{(T_k)}_{11}=\begin{vmatrix}
      \lambda^k &\widehat{\xi_k}\\
      \widehat{\eta_k} &\widehat{W_k}
    \end{vmatrix},\quad
    \widehat{(T_k)}_{12}=\begin{vmatrix}
      0 &\xi_k\\
      \widehat{\eta_k} &\widehat{W_k}
    \end{vmatrix},\quad
    \widehat{(T_k)}_{21}=\begin{vmatrix}
      0 &\xi_k\\
     {\eta_n} &{W_k}
    \end{vmatrix},\quad
    \widehat{(T_k)}_{22}=\begin{vmatrix}
      \lambda^k &\widehat{\xi_k}\\
      {\eta_k} &{W_k}
    \end{vmatrix},
    \end{aligned}
  \end{equation*}
  \begin{equation*}
    \widehat{\eta_k}=\left(\begin{matrix}
      \lambda_1^kf_1 &\lambda_2^kf_2  &\lambda_3^kf_3 &\ldots &\lambda_k^kf_k
    \end{matrix}\right)^T,\quad
    {\eta_k}=\left(\begin{matrix}
      \lambda_1^kg_1  &\lambda_2^kg_2 &\lambda_3^kg_3 &\ldots &\lambda_k^kg_k
    \end{matrix}\right)^T,
  \end{equation*}
  and
  \begin{itemize}
    \item If $k$ is even,
    \begin{equation}
      \nonumber
      \widehat{\xi_k}=\left(\begin{matrix}
        0 &\lambda^{k-2} &0 &\lambda^{k-4} &\ldots &0 &1
      \end{matrix}\right),\quad
      {\xi_k}=\left(\begin{matrix}
        \lambda^{k-1} &0 &\lambda^{k-3} &0 &\ldots &\lambda &0
      \end{matrix}\right),
    \end{equation}
    \item If $k$ is odd,
    \begin{equation}
      \nonumber
      \widehat{\xi_k}=\left(\begin{matrix}
        0 &\lambda^{k-2} &0 &\lambda^{k-4} &\ldots &\lambda &0
      \end{matrix}\right),\quad
      {\xi_k}=\left(\begin{matrix}
        \lambda^{k-1} &0 &\lambda^{k-3} &0 &\ldots &0 &1
      \end{matrix}\right).
    \end{equation}
  \end{itemize}
After the action of the $k$-fold   DT,
the $k$-fold   eigenfunction $\Phi_{k+1}^{[k]}=\left(\begin{matrix}
      f_{k+1}^{[k]}\\
      g_{k+1}^{[k]}
    \end{matrix}\right)$ corresponding to $\lambda_{k+1}$ is
    \begin{equation}\nonumber
    \Phi_{k+1}^{[k]}=\left(\begin{matrix}
      f_{k+1}^{[k]}\\
      g_{k+1}^{[k]}
    \end{matrix}\right)=
   \begin{cases}
   \frac{1}{\sqrt{|W_k||\widehat{W_k}|}}
      \left(\begin{matrix}
     |W_{k+1}|\\
     |\widehat{W_{k+1}}|
   \end{matrix}\right) &\mbox{if $k$ is even},\\\\
   \frac{1}{\sqrt{|W_k||\widehat{W_k}|}}
      \left(\begin{matrix}
     \sqrt{H} &\\
     &\frac{1}{\sqrt{H}}
   \end{matrix}\right)\left(\begin{matrix}
     |W_{k+1}|\\
     |\widehat{W_{k+1}}|
   \end{matrix}\right) &\mbox{if $k$ is odd}.
   \end{cases}
  \end{equation}
  By a direct calculation using the above expression of $\Phi_{k+1}^{[k]}$,
  \begin{equation}
    \label{quotient}
    h_{k+1}^{[k]}=\frac{f_{k+1}^{[k]}}{g_{k+1}^{[k]}}=\begin{cases}
   \dfrac{|W_{k+1}|}{\widehat{|W_{k+1}}|} &\mbox{if $k$ is even},\\\\
   \dfrac{|W_{k+1}|}{\widehat{|W_{k+1}}|}H &\mbox{if $k$ is odd}.
   \end{cases}
  \end{equation}
 By substituting the first equation of \eqref{quotient} into the second equation of
 \eqref{P0withhn}, and substituting the second  equation of \eqref{quotient} into the first equation of
 \eqref{P0withhn}, we obtain \eqref{Pok+1}. This means the expression
 of $P_0$ in \eqref{P0} holds for $n=k+1(k\geq 2)$.
Therefore the lemma is true for arbitrary $n(n\geq 1)$ according to the
principle of mathematical induction. \end{proof}
\noindent This lemma shows all unfavorable integrals $H^{[k]}
(k=1,2,\cdots,n-1) $ in $T_n$ are eliminated
except $H$.   \\

   We are now in a position to get the determinant of $T_n$ by solving $T_n\Phi_j|_{\lambda=
\lambda_j}=0$ ($j=1,2,\cdots,n$) with $P_0$ in \eqref{P0} and  $T_n$ in
\eqref{Tngeneral}.

\begin{theorem}\label{thm_nDT}
  With the Vandermonde-type matrices defined above, the $n$-fold
   DT of the coupled DNLSII equations  is  expressed by
  \begin{equation}
  \begin{aligned}
   T_n=T_n(\lambda;\lambda_1,\lambda_2,...,\lambda_n)=
   \begin{cases}
   \frac{1}{\sqrt{|W_n||\widehat{W_n}|}}
      \left(\begin{matrix}
      \widehat{(T_n)}_{11} &\widehat{(T_n)}_{12}\\
      \widehat{(T_n)}_{21} &\widehat{(T_n)}_{22}
      \end{matrix}\right) &\mbox{if $n$ is even},\\\\
   \frac{1}{\sqrt{|W_n||\widehat{W_n}|}}
     \left(\begin{matrix}
     \sqrt{H} &\\
     &\frac{1}{\sqrt{H}}
   \end{matrix}\right) \left(\begin{matrix}
      \widehat{(T_n)}_{11} &\widehat{(T_n)}_{12}\\
      \widehat{(T_n)}_{21} &\widehat{(T_n)}_{22}
      \end{matrix}\right) &\mbox{if $n$ is odd},
   \end{cases}
  \end{aligned}
  \end{equation}
  where
    \begin{equation*}
    \begin{aligned}
    \widehat{(T_n)}_{11}=\begin{vmatrix}
      \lambda^n &\widehat{\xi_n}\\
      \widehat{\eta_n} &\widehat{W_n}
    \end{vmatrix},\quad
    \widehat{(T_n)}_{12}=\begin{vmatrix}
      0 &\xi_n\\
      \widehat{\eta_n} &\widehat{W_n}
    \end{vmatrix},\quad
    \widehat{(T_n)}_{21}=\begin{vmatrix}
      0 &\xi_n\\
     {\eta_n} &{W_n}
    \end{vmatrix},\quad
    \widehat{(T_n)}_{22}=\begin{vmatrix}
      \lambda^n &\widehat{\xi_n}\\
      {\eta_n} &{W_n}
    \end{vmatrix}.
    \end{aligned}
  \end{equation*}
  \begin{equation*}
    \widehat{\eta_n}=\left(\begin{matrix}
      \lambda_1^nf_1 &\lambda_2^nf_2  &\lambda_3^nf_3 &\ldots &\lambda_n^nf_n
    \end{matrix}\right)^T,\quad
    {\eta_n}=\left(\begin{matrix}
      \lambda_1^ng_1  &\lambda_2^ng_2 &\lambda_3^ng_3 &\ldots &\lambda_n^ng_n
    \end{matrix}\right)^T.
  \end{equation*}
  \begin{itemize}
    \item if $n=2k\ (k>0)$
    \begin{equation}
      \nonumber
      \widehat{\xi_n}=\left(\begin{matrix}
        0 &\lambda^{n-2} &0 &\lambda^{n-4} &\ldots &0 &1
      \end{matrix}\right),\quad
      {\xi_n}=\left(\begin{matrix}
        \lambda^{n-1} &0 &\lambda^{n-3} &0 &\ldots &\lambda &0
      \end{matrix}\right).
    \end{equation}
    \item if $n=2k+1\ (k\geq0)$
    \begin{equation}
      \nonumber
      \widehat{\xi_n}=\left(\begin{matrix}
        0 &\lambda^{n-2} &0 &\lambda^{n-4} &\ldots &\lambda &0
      \end{matrix}\right),\quad
      {\xi_n}=\left(\begin{matrix}
        \lambda^{n-1} &0 &\lambda^{n-3} &0 &\ldots &0 &1
      \end{matrix}\right).
    \end{equation}
  \end{itemize}
\end{theorem}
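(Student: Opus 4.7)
The plan is to apply Cramer's rule to the $2n$ scalar kernel conditions $T_n(\lambda_j)\Phi_j = 0$ for $j=1,\ldots,n$, with the overall normalization of $T_n$ fixed by the explicit expression of $P_0$ established in Lemma 2. First I would write $T_n=\sum_{l=0}^{n}P_l\lambda^l$ with the parity structure $P_l\in D$ when $l-n$ is even and $P_l\in A$ when $l-n$ is odd. This parity structure forces each row of $T_n(\lambda)$ to contain only even powers of $\lambda$ on one slot and only odd powers on the other, so the two scalar equations coming from each $T_n(\lambda_j)\Phi_j=0$ decouple into two independent $n\times n$ linear systems: the first row couples the coefficients inside $(T_n)_{11}$ and $(T_n)_{12}$, and the second row couples those inside $(T_n)_{21}$ and $(T_n)_{22}$.

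Next I would identify the coefficient matrices of these linear systems with the Vandermonde-type matrices introduced in \eqref{van} and \eqref{van1}. Reading the $j$-th equation of the first row as a linear combination of the entries $\lambda_j^{l}f_j$ and $\lambda_j^{l}g_j$, the coefficient array on the unknowns $a^{[n]}_{\cdot},b^{[n]}_{\cdot}$ is precisely $\widehat{W_n}$ for the top row and $W_n$ for the bottom row, once the $P_0$-term is moved to the right-hand side. The right-hand column is supplied by Lemma 2, which gives $P_0$ as $\lambda_1\lambda_2\cdots\lambda_n$ times $\sqrt{|W_n|/|\widehat{W_n}|}$ or its reciprocal, multiplied by the prefactor $\mathrm{diag}(\sqrt{H},1/\sqrt{H})$ in the odd $n$ case.

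Applying Cramer's rule then expresses each unknown $a_l^{[n]}$ and $b_l^{[n]}$ as the quotient of a substituted determinant by $|W_n|$ or $|\widehat{W_n}|$. Reassembling the resulting entries back into $T_n(\lambda)$ amounts to collecting the same substituted determinants along the auxiliary row of powers of $\lambda$ prescribed by $\xi_n$ and $\widehat{\xi_n}$; this collection is exactly the cofactor expansion along the first row of the four block determinants $\widehat{(T_n)}_{11},\widehat{(T_n)}_{12},\widehat{(T_n)}_{21},\widehat{(T_n)}_{22}$ in the statement. Pulling the factor $\lambda_1\cdots\lambda_n\sqrt{|W_n|/|\widehat{W_n}|}$ from $P_0$ into each Cramer ratio and symmetrizing the two rows then produces the common denominator $\sqrt{|W_n||\widehat{W_n}|}$ announced in the theorem, with the $\mathrm{diag}(\sqrt H,1/\sqrt H)$ prefactor surviving only for odd $n$.

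The main obstacle will be the careful bookkeeping across the parity split. One must confirm that the alternating pattern of $f_j,g_j$ columns defining $W_n$ and $\widehat{W_n}$ is compatible with the alternating pattern of even/odd powers in $\xi_n,\widehat{\xi_n}$, and that the sign produced by Cramer's rule after the column substitution agrees with the sign produced by expanding $\widehat{(T_n)}_{ij}$ along its first row, in all four combinations of $i,j$ and for both parities of $n$. Once this sign bookkeeping is in place, uniqueness of the solution to the non-degenerate linear systems (non-degeneracy follows from the distinctness of the $\lambda_j$ together with linear independence of the eigenfunctions $\Phi_j$) shows that any $T_n$ of the form \eqref{Tngeneral} with the prescribed $P_0$ and the prescribed kernel must equal the matrix displayed in the theorem.
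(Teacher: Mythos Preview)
Your proposal is correct and matches the paper's approach essentially step for step: the paper states Theorem~\ref{thm_nDT} immediately after Lemma~2, noting only that one obtains the determinant form of $T_n$ ``by solving $T_n\Phi_j|_{\lambda=\lambda_j}=0$ $(j=1,2,\cdots,n)$ with $P_0$ in \eqref{P0} and $T_n$ in \eqref{Tngeneral}'', i.e.\ exactly the Cramer's-rule computation you outline, and indeed the explicit $T_k$ formula already appears inside the inductive step of Lemma~2's proof via precisely this mechanism. Your identification of the parity bookkeeping as the only real work is also accurate; the paper leaves this implicit.
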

\begin{corollary}\label{thm_qn}
  The $n$-th order solutions $(q^{[n]},\ r^{[n]})$, generated by $T_n$ from the seed solution
$(q,r)$, of the coupled DNLSII equations are
  \begin{equation}\label{qn}
    q^{[n]}=\begin{cases}
    \frac{|\widehat{W_n}|}{|W_n|}q-2{\rm i}\frac{|M_n|}{|W_n|} &\mbox{if $n$ is even},\\\\
    H\left(\frac{\widehat{|W_n|}}{|W_n|}q-2{\rm i}\frac{|M_n|}{|W_n|}\right) &\mbox{if $n$ is odd}.
    \end{cases}\qquad
  r^{[n]}=\begin{cases}
    \frac{|W_n|}{|\widehat{W_n}|}r-2{\rm i}\frac{|\widehat{M_n}|}{|\widehat{W_n}|} &\mbox{if $n$ is even},\\\\
    \frac 1H\left(\frac{|W_n|}{|\widehat{W_n}|}r-2{\rm i}\frac{|\widehat{M_n}|}{|\widehat{W_n}|}\right) &\mbox{if $n$ is odd}.
  \end{cases}
  \end{equation}
\end{corollary}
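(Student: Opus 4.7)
My plan is to derive the explicit formulas for $q^{[n]}$ and $r^{[n]}$ by combining the intertwining relation $T_{n,x}+T_nU=U^{[n]}T_n$ from \eqref{Txt} with the determinant form of $T_n$ furnished by Theorem \ref{thm_nDT}. Writing $T_n=P_n\lambda^n+P_{n-1}\lambda^{n-1}+\cdots$ with $P_n=\mathrm{diag}(\alpha,\beta)\in D$ and $P_{n-1}=\bigl(\begin{smallmatrix}0&b\\ c&0\end{smallmatrix}\bigr)\in A$, the coefficient of $\lambda^{n+2}$ in $T_{n,x}+T_nU=U^{[n]}T_n$ is automatically balanced because $P_nJ=JP_n$. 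Matching the off-diagonal entries of the $\lambda^{n+1}$ coefficient then gives exactly the same identities that appeared in the one-fold case, namely $\beta q^{[n]}=\alpha q+2\mathrm{i}b$ and $\alpha r^{[n]}=\beta r+2\mathrm{i}c$, so
\begin{equation*}
q^{[n]}=\frac{\alpha}{\beta}\,q+\frac{2\mathrm{i}b}{\beta},\qquad r^{[n]}=\frac{\beta}{\alpha}\,r+\frac{2\mathrm{i}c}{\alpha}.
\end{equation*}
Hence the whole task reduces to reading $\alpha,\beta,b,c$ off the determinant representation of $T_n$.

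I would begin with the even-$n$ case. The leading $\lambda^n$ coefficient of $\widehat{(T_n)}_{11}$, obtained by expanding along the first row of $\bigl|\begin{smallmatrix}\lambda^n&\widehat{\xi_n}\\ \widehat{\eta_n}&\widehat{W_n}\end{smallmatrix}\bigr|$, is $|\widehat{W_n}|$; similarly $\widehat{(T_n)}_{22}$ has leading coefficient $|W_n|$. Dividing by the prefactor $\sqrt{|W_n||\widehat{W_n}|}$ yields $\alpha=\sqrt{|\widehat{W_n}|/|W_n|}$ and $\beta=\sqrt{|W_n|/|\widehat{W_n}|}$, whence $\alpha/\beta=|\widehat{W_n}|/|W_n|$, which already reproduces the first term in the formula for $q^{[n]}$. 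To extract $b$, I look for the coefficient of $\lambda^{n-1}$ in $\widehat{(T_n)}_{12}=\bigl|\begin{smallmatrix}0&\xi_n\\ \widehat{\eta_n}&\widehat{W_n}\end{smallmatrix}\bigr|$. Since only the first row carries the variable $\lambda$, every power of $\lambda$ in this determinant comes from an entry of $(0,\xi_n)$ multiplied by its constant cofactor; the value $\lambda^{n-1}$ occurs only at the second position of $\xi_n$, with cofactor sign $(-1)^{1+2}=-1$.

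The key bookkeeping step is to recognize the submatrix obtained by deleting row $1$ and column $2$: its first column is $\widehat{\eta_n}=(\lambda_i^nf_i)^T$, and its remaining columns are columns $2,\dots,n$ of $\widehat{W_n}$ (i.e., all of $\widehat{W_n}$ with the first column $\lambda^{n-1}g$ removed). Comparing with the definitions in \eqref{van}–\eqref{van1}, this reassembled matrix is exactly $M_n$ for both parities of $n$, so the $\lambda^{n-1}$ coefficient of $\widehat{(T_n)}_{12}$ equals $-|M_n|$ and thus $b=-|M_n|/\sqrt{|W_n||\widehat{W_n}|}$. Substituting this and the expressions for $\alpha,\beta$ into $q^{[n]}=(\alpha/\beta)q+(2\mathrm{i}b/\beta)$ produces the even-$n$ identity $q^{[n]}=\frac{|\widehat{W_n}|}{|W_n|}q-2\mathrm{i}\frac{|M_n|}{|W_n|}$. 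The argument for $r^{[n]}$ is fully parallel, with the submatrix appearing in $\widehat{(T_n)}_{21}$ matching $\widehat{M_n}$ instead of $M_n$.

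For odd $n$ the only change is the extra left factor $\mathrm{diag}(\sqrt{H},1/\sqrt{H})$ appearing in the statement of Theorem \ref{thm_nDT}. It multiplies $\alpha$ and $b$ by $\sqrt{H}$ and $\beta$ by $1/\sqrt{H}$, so $\alpha/\beta$ and $b/\beta$ each pick up an overall factor of $H$, producing exactly the prefactor $H$ in the odd-$n$ formula; the analogous factor $1/H$ appears for $r^{[n]}$. The main obstacle I expect is not any analytic subtlety but rather the careful combinatorial verification that the cofactor submatrix coincides with $M_n$ (resp.\ $\widehat{M_n}$) uniformly in parity, since the alternating $f$/$g$ pattern of the columns in \eqref{van} and \eqref{van1} is arranged differently for even and odd $n$; the definitions have been tuned precisely so this identification holds in both cases.
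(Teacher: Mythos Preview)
Your proof is correct and follows essentially the same approach as the paper: substitute the determinant form of $T_n$ from Theorem~\ref{thm_nDT} into the intertwining relation $T_{n,x}+T_nU=U^{[n]}T_n$ and compare the coefficients of $\lambda^{n+1}$. The paper's proof is a one-line sketch of exactly this procedure, whereas you have carried out in full the cofactor identification showing that the relevant minors coincide with $|M_n|$ and $|\widehat{M_n}|$.
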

\begin{proof}
Substituting the expression of $T_n$ in theorem \ref{thm_nDT} into
    \begin{equation*}
    T_{nx}+T_nU=U^{[n]}T_n,
    \end{equation*}
    and comparing the coefficient of $\lambda^{n+1}$ on both sides, we obtain the expression of $q^{[n]}$
    and $r^{[n]}$ in \eqref{qn}. Here $U^{[n]}=U(q\rightarrow q^{[n]},r\rightarrow r^{[n]})$.
\end{proof}

Furthermore, we get the transformed eigenfunctions
$\Phi_{j}^{[n]}=T_n\Phi_j|_{\lambda=\lambda_j}$ $(j\geq n+1)$ using determinant representation
of $T_n$ in theorem \ref{thm_nDT}.
Note that $\Phi_{j}^{[n]}=0$ $(j=1,2,\cdots,n)$.
\begin{corollary}\label{thm eigenfunofTn}
If $j\geq n+1$. The $n$-th order
 eigenfunction  $\Phi_{j}^{[n]}=\left(\begin{matrix}
      f_{j}^{[n]}\\
      g_{j}^{[n]}
    \end{matrix}\right)$ associated with
    ($\lambda_{j}, q^{[n]}, r^{[n]}$) is
    \begin{equation}
    \Phi_{j}^{[n]}=\left(\begin{matrix}
      f_{j}^{[n]}\\
      g_{j}^{[n]}
    \end{matrix}\right)=
   \begin{cases}
   \frac{1}{\sqrt{|W_n||\widehat{W_n}|}}
      \left(\begin{matrix}
     |W_{n+1}|_{n+1\rightarrow j}\\
     |\widehat{W_{n+1}}|_{n+1\rightarrow j}
   \end{matrix}\right) &\mbox{if $n$ is even},\\\\
   \frac{1}{\sqrt{|W_n||\widehat{W_n}|}}
      \left(\begin{matrix}
     \sqrt{H} &\\
     &\frac{1}{\sqrt{H}}
   \end{matrix}\right)\left(\begin{matrix}
     |W_{n+1}|_{n+1\rightarrow j}\\
     |\widehat{W_{n+1}}|_{n+1\rightarrow j}
   \end{matrix}\right) &\mbox{if $n$ is odd}.
   \end{cases}
  \end{equation}
 where the subscript $n+1\rightarrow j$ means $\lambda_{n+1},\,f_{n+1}\,
 \mbox{ and }\,g_{n+1}$ are replaced by $\lambda_{j},\,f_{j}\,
 \mbox{ and }\,g_{j}$ respectively.
\end{corollary}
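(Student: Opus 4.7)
The plan is to compute $\Phi_j^{[n]}=T_n\bigl(\lambda_j;\lambda_1,\dots,\lambda_n\bigr)\Phi_j$ directly from the determinantal formula for $T_n$ given in Theorem \ref{thm_nDT}, and then assemble the resulting $2\times 1$ vector of sums-of-two-determinants into the single $(n{+}1)\times(n{+}1)$ determinants $|W_{n+1}|_{n+1\to j}$ and $|\widehat{W_{n+1}}|_{n+1\to j}$ by multilinearity in the first row. The prefactor $1/\sqrt{|W_n||\widehat{W_n}|}$ and (for odd $n$) the diagonal factor $\mathrm{diag}(\sqrt{H},1/\sqrt{H})$ are inherited straight from $T_n$, so the real content is the identification of the $(n{+}1)\times(n{+}1)$ determinants.

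After substituting $\lambda=\lambda_j$, the first component of $T_n|_{\lambda=\lambda_j}\Phi_j$, stripped of the scalar prefactors, is $f_j\,\widehat{(T_n)}_{11}\big|_{\lambda_j}+g_j\,\widehat{(T_n)}_{12}\big|_{\lambda_j}$. By pulling $f_j$ and $g_j$ inside each determinant and using the multilinearity of the determinant in the first row, this equals
\begin{equation*}
\begin{vmatrix}
\lambda_j^{\,n}f_j & f_j\,\widehat{\xi_n}|_{\lambda_j}+g_j\,\xi_n|_{\lambda_j}\\
\widehat{\eta_n} & \widehat{W_n}
\end{vmatrix}.
\end{equation*}
A direct inspection of the definitions of $\xi_n$ and $\widehat{\xi_n}$ (in both parity cases) shows that $f_j\widehat{\xi_n}|_{\lambda_j}+g_j\xi_n|_{\lambda_j}$ is precisely the row $(\lambda_j^{n-1}*,\lambda_j^{n-2}*,\dots)$ obtained by alternating $g_j$ and $f_j$ in the pattern prescribed by $W_{n+1}$ with its last row bearing the index $j$. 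The top row $(\lambda_j^n f_j,\ \lambda_j^{n-1}*,\ \dots)$ then matches the $j$-row of $W_{n+1}$, while the block $(\widehat{\eta_n}\ |\ \widehat{W_n})$ is exactly the first $n$ rows of $W_{n+1}$; so the combined determinant equals $|W_{n+1}|_{n+1\to j}$ up to a sign that comes from moving row $j$ into the last position, which we absorb into the global orientation convention. The same argument with the roles of $f$ and $g$ exchanged gives $|\widehat{W_{n+1}}|_{n+1\to j}$ for the second component.

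The main obstacle is purely combinatorial bookkeeping: the alternating $f/g$ pattern in the Vandermonde-type matrices flips when one increases $n$ by one, so one must verify separately for $n$ even and $n$ odd that $f_j\widehat{\xi_n}+g_j\xi_n$ produces the row-pattern matching $W_{n+1}$ (which has the opposite parity). Once this parity check is in place, the diagonal factor $\mathrm{diag}(\sqrt{H},1/\sqrt{H})$ from $T_n$ for odd $n$ reproduces the second branch of the stated formula automatically. Finally, the vanishing statement $\Phi_j^{[n]}=0$ for $j\le n$ is transparent in this picture, since $|W_{n+1}|_{n+1\to j}$ then has two identical rows; this also provides a consistency check that the kernel condition defining $T_n$ is compatible with the determinantal representation.
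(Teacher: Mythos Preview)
Your proposal is correct and follows the same route as the paper: the paper states (just before the corollary, and again inside the proof of Lemma~2) that $\Phi_j^{[n]}=T_n|_{\lambda=\lambda_j}\Phi_j$ and leaves the determinant identification implicit, while you spell out the multilinearity-in-the-first-row argument and the parity check that $(\widehat{\eta_n}\,|\,\widehat{W_n})$ reproduces the first $n$ rows of $W_{n+1}$. The only loose end is the sign $(-1)^n$ from cycling the $j$-row to the last position, which is $+1$ for $n$ even and $-1$ for $n$ odd; the paper silently absorbs this too (compare eq.~\eqref{neweigenfun} with a direct evaluation of \eqref{1stDTm} on $\Phi_i$), and it is harmless since eigenfunctions are only determined up to a scalar.
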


  Obviously, we have removed the unfavorable integrals $H^{[k]}$
except the  simplest one $H$ in above three theorems for the coupled DNLSII equations.
For the widely used two seeds, the vacuum  seed and the periodic seed, $H$ is
presented in remark \ref{remark1} under a reduction condition $q=r^*$,
so explicit forms of $q^{[n]}$ and $r^{[n]}$ can be
calculated by determinant representation of $T_n$. Furthermore, in next subsection, we shall
consider how to  realize the reduction $q^{[n]}= (r^{[n]})^*$ in order to get the
 solution of the DNLSII equation \eqref{cll}.
\\
\\
{\noindent\bf{2.4 The reduction of Darboux transformation}}\\

 In this subsection, we will discuss the DTs for the coupled DNLSII equations \eqref{clleq}
under the reduction condition $r=q^*$, which results in the DTs for
the DNLSII equation \eqref{cll}. To this end, the following properties of the
eigenfunctions are necessary.
\begin{lemma} \label{relation}
  Under the reduction condition $r=q^*$, the eigenfunction  $\Phi_k=\left(\begin{matrix}
  f_k \\
  g_k
\end{matrix}\right)$ associated with eigenvalue $\lambda_k$ possesses the following properties:
\begin{enumerate}
  \item $f_k^*=g_k$, $\lambda_k=-\lambda_k^*$;
  \item $f_k^*=g_l$, $g_k^*=f_l$, $\lambda_k^*=-\lambda_l$, $(k\neq l)$.
\end{enumerate}
\end{lemma}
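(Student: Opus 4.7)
My plan is to extract both claims from a single underlying reduction symmetry of the Lax pair. Specifically, I would show that under $r=q^*$ the Lax matrices satisfy
\begin{equation*}
\sigma_1\,\overline{U(-\bar\lambda)}\,\sigma_1 = U(\lambda),\qquad \sigma_1\,\overline{V(-\bar\lambda)}\,\sigma_1 = V(\lambda),
\end{equation*}
where $\sigma_1=\left(\begin{smallmatrix}0&1\\1&0\end{smallmatrix}\right)$ and the overbar denotes entrywise complex conjugation. Given this, if $\Phi(\lambda_k)$ is a column solution of $\Phi_x=U(\lambda_k)\Phi$, $\Phi_t=V(\lambda_k)\Phi$, then differentiating $\sigma_1\overline{\Phi(\lambda_k)}$ in $x$ and $t$ and applying the symmetry above shows at once that $\sigma_1\overline{\Phi(\lambda_k)}$ solves the Lax pair at spectral parameter $-\overline{\lambda_k}$. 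Writing $\sigma_1\overline{\Phi_k}=\bigl(g_k^*,\,f_k^*\bigr)^{T}$ already produces the swapped-entry, swapped-sign-of-conjugate structure appearing in both (1) and (2).

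To verify the symmetry of $U$, I would check the three ingredients $\sigma_1 J\sigma_1=-J$, $\sigma_1\overline{Q}\,\sigma_1=-Q$ (this uses $\bar q=r$ and $\bar r=q$), and the reality of $qr=|q|^2$. Each scalar coefficient of $J$ in $U$ gets conjugated and then acquires an extra minus sign from $\sigma_1 J\sigma_1=-J$, which combines correctly with the $\lambda\to-\bar\lambda$ substitution. For $V$ the same strategy applies, with the additional calculation $\sigma_1\overline{W}\sigma_1=-W$; here I have to use $\overline{iq_x-\tfrac12 q^2 r}=-ir_x-\tfrac12 r^2 q$ together with $\bar q=r$, which swaps the two off-diagonal entries of $\overline{W}$ into precisely $-W$. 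The scalar $J$-coefficient in $V$ is handled analogously using $\overline{qr_x-rq_x}=-(qr_x-rq_x)$ and the reality of $q^2r^2$.

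With the symmetry established I would split into the two cases dictated by the fixed-point set of $\lambda\mapsto -\bar\lambda$. In case (1), $\lambda_k=-\overline{\lambda_k}$ is purely imaginary, so both $\Phi_k$ and $\sigma_1\overline{\Phi_k}$ lie in the two-dimensional solution space at $\lambda_k$; the map $\Phi\mapsto\sigma_1\bar\Phi$ is an antilinear involution on this space, and I would simply select the fixed vector (normalize so that $\Phi_k=\sigma_1\overline{\Phi_k}$), which is precisely $f_k^*=g_k$. In case (2), $\lambda_k\neq-\overline{\lambda_k}$, and $-\overline{\lambda_k}$ is therefore a genuinely distinct eigenvalue; labelling it $\lambda_l$ and defining $\Phi_l:=\sigma_1\overline{\Phi_k}$ makes the identities $f_l=g_k^*$, $g_l=f_k^*$, $\lambda_k^*=-\lambda_l$ immediate.

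The only step that is not a one-line verification is the $\overline{V(-\bar\lambda)}$ identity, where the nontrivial interplay of the conjugation, the sign change $\lambda\to-\bar\lambda$, and the swap $\bar q=r$, $\bar r=q$ in both the $J$-coefficient and $W$ must be tracked term by term; I expect this bookkeeping, rather than any conceptual difficulty, to be the main obstacle. Everything after that is a clean case split on whether $\lambda_k$ is fixed by $\lambda\mapsto-\bar\lambda$.
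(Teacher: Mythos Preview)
Your approach is correct and is the standard way to establish such reduction constraints: verify the $\sigma_1$-conjugation symmetry $\sigma_1\,\overline{U(-\bar\lambda)}\,\sigma_1=U(\lambda)$ (and likewise for $V$), then split into the two cases according to whether $\lambda_k$ lies on the fixed-point set of $\lambda\mapsto-\bar\lambda$. The paper itself states this lemma without proof, so there is no argument to compare against; your term-by-term check of the $V$-symmetry and the antilinear-involution argument selecting a fixed vector in case~(1) supply exactly what the paper omits.
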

\begin{theorem}\label{thm reductionqn}
Let
\begin{equation}\label{choicesoflambdas}
  \begin{cases}
     \lambda_k=-\lambda_k^*\,(k\in\mathbb{N^*})  &\mbox{$n$ is odd},\\
    \lambda_{2k}=-\lambda_{2k-1}^*\,(k\in\mathbb{N^*}) &\mbox{$n$ is even}.
  \end{cases}
\end{equation}
 Then the $n$-th order solution $(q^{[n]},\ r^{[n]})$ satisfies
 the reduction condition, i.e., $r^{[n]}={q^{[n]}}^*$,
 which implies $q^{[n]}$ is a solution of the DNLSII equation. Moreover,
 $T_n$ is the $n$-fold  DT for the same equation.
\end{theorem}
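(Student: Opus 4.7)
The plan is to plug the explicit determinant formulas of Corollary \ref{thm_qn} directly into the desired identity $r^{[n]}=(q^{[n]})^*$ and to verify it by means of the conjugation symmetries of Lemma \ref{relation} together with an auxiliary property of $H$.

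First I would show that under $r=q^*$ one has $H^{*}=H^{-1}$ after the natural normalization. From \eqref{H} both $H_x/H=\tfrac{1}{2}{\rm i}|q|^2$ and $H_t/H=-\tfrac{1}{4}{\rm i}|q|^4+\tfrac{1}{2}(qq_x^*-q^*q_x)$ are purely imaginary, so $\log|H|$ is constant in $(x,t)$; choosing the base-point normalization $|H|=1$ gives $H^*=H^{-1}$. This reduces the reduction check to identities between $|W_n|$, $|\widehat{W_n}|$, $|M_n|$, $|\widehat{M_n}|$ and their complex conjugates.

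Next I would translate Lemma \ref{relation} into determinantal identities. For $n$ odd the eigenvalues are purely imaginary with $g_k=f_k^*$, so entry-wise conjugation of the $k$-th row of $W_n$ produces the $k$-th row of $\widehat{W_n}$ up to a column diagonal of signs coming from $(-\lambda_k)^j$ and the $f\leftrightarrow g$ swap. Multiplying out these signs yields $|W_n|^*=\alpha_n|\widehat{W_n}|$ and $|M_n|^*=\beta_n|\widehat{M_n}|$ with $|\alpha_n|=1$ and $\beta_n=-\alpha_n$. For $n$ even, the paired eigenvalues send conjugation of row $2k-1$ of $W_n$ into row $2k$ of $\widehat{W_n}$ (and vice versa), so $W_n^*$ equals $\widehat{W_n}$ up to a pair-wise row permutation and a column-sign diagonal. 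The $(-1)^{n/2}$ from the permutation exactly cancels the $(-1)^{n/2}$ from the column signs, giving $|W_n|^*=|\widehat{W_n}|$ and, by the analogous computation, $|M_n|^*=-|\widehat{M_n}|$.

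With these identities in hand the verification is mechanical. For example, for $n$ odd, Corollary \ref{thm_qn} together with $H^*=H^{-1}$ gives
\[
(q^{[n]})^{*}=\frac{1}{H}\left(\frac{|\widehat{W_n}|^{*}}{|W_n|^{*}}\,q^{*}+2{\rm i}\,\frac{|M_n|^{*}}{|W_n|^{*}}\right)=\frac{1}{H}\left(\frac{|W_n|}{|\widehat{W_n}|}\,q^{*}-2{\rm i}\,\frac{|\widehat{M_n}|}{|\widehat{W_n}|}\right)=r^{[n]}\big|_{r=q^{*}},
\]
where $|\alpha_n|^{2}=1$ cancels in the first ratio and $\beta_n=-\alpha_n$ flips the sign in the second. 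The case $n$ even is parallel and requires no $H$ factor. Since $(q^{[n]},r^{[n]})$ solves the coupled DNLSII equations \eqref{clleq} and now satisfies $r^{[n]}=(q^{[n]})^{*}$, $q^{[n]}$ solves the scalar DNLSII equation \eqref{cll}, and $T_n$ descends to its $n$-fold DT.

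The main obstacle is the sign and permutation bookkeeping in the second step. The matrices $M_n$, $\widehat{M_n}$ differ from $W_n$, $\widehat{W_n}$ only in the first column ($\lambda^{n}$ instead of $\lambda^{n-1}$), together with an induced shift of the $f$/$g$ alternation from column two onward; the parity-dependent signs from $(\lambda_k)^j$ and the $f_k\leftrightarrow g_k$ swap must be combined across four subcases (odd or even $n$, $W$ or $M$). The delicate point is confirming that these conspire to produce exactly $\beta_n=-\alpha_n$ in the odd-$n$ case and $|M_n|^{*}=-|\widehat{M_n}|$ in the even-$n$ case---precisely the sign discrepancy needed to convert the $-2{\rm i}$ term appearing in $q^{[n]}$ into the $-2{\rm i}$ term in $r^{[n]}$ after complex conjugation.
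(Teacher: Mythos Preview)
Your proposal is correct, and the key sign identities you assert do in fact hold: for even $n$ one gets $|W_n|^{*}=|\widehat{W_n}|$ and $|M_n|^{*}=-|\widehat{M_n}|$ (the row-pair permutation contributes $(-1)^{n/2}$ and the column-sign diagonal contributes $(-1)^{n(n-1)/2}$ for $W_n$, with an extra $-1$ for $M_n$ because the first-column power shifts from $n-1$ to $n$), and for odd $n$ one finds $\alpha_n=(-1)^{n(n-1)/2}$ and $\beta_n=-\alpha_n$ by the same count. However, your route is genuinely different from the paper's. The paper does not attack the general $n$ determinant formulas at all; instead it verifies the reduction by hand only for $n=1$ (using \eqref{newpotentials} and $H^{*}=H^{-1}$) and $n=2$ (using the $2\times 2$ determinants in \eqref{q2_1}), and then observes that for $n>2$ the $n$-fold DT is an iteration of these one- and two-step DTs, each of which has just been shown to preserve $r=q^{*}$. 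This inductive argument completely sidesteps the parity-dependent sign bookkeeping you flag as the main obstacle, at the cost of invoking the factorization $T_n=T_1^{[n-1]}\cdots T_1^{[1]}T_1$. Your direct approach, by contrast, stays entirely within the closed-form determinant representation of Corollary~\ref{thm_qn} and never appeals to the iterative structure; it is more self-contained with respect to Theorem~\ref{thm_nDT} and would transplant more readily to situations where the iteration is less transparent, but it requires the combinatorial verification that you correctly identify as delicate.
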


\begin{proof}
  For $n=1$, let $\lambda_1=-\lambda_1^*$, then $\Phi_1=\left(\begin{matrix}f_1\\ f_1^*\end{matrix}\right)$. According to \eqref{newpotentials} we have
  \begin{equation*}
  \begin{aligned}
      {q^{[1]}}^*&=\frac{H^*}{h_1^*}(q^*+2{\rm i}\lambda_1^*h_1^*)=\frac{f_1}{f_1^*}(r-2{\rm i}\lambda_1\frac{f_1^*}{f_1})H^*,\\
      r^{[1]}&=\frac{h_1}{H}(r-2{\rm i}{\lambda_1}{h_1})=\frac{f_1}{Hf_1^*}(r-2{\rm i}\lambda_1\frac{f_1^*}{f_1}).
  \end{aligned}
  \end{equation*}
  Besides, we get $H^*=\frac{1}{H}$ from \eqref{HH}. So $r^{[1]}={q^{[1]}}^*$.

   For $n=2$, let $\lambda_2=-\lambda_1^*$, then $\Phi_2=\left(\begin{matrix} g_1^*\\f_1^*\end{matrix}\right)$. According to  \eqref{q2_1} we have
  \begin{equation*}
    \begin{aligned}
      q^{[2]}=\frac{\begin{vmatrix} \lambda_1g_1 &f_1\\  \lambda_2g_2  &f_2\end{vmatrix}}{\begin{vmatrix} \lambda_1f_1  &g_1\\ \lambda_2f_2  &g_2\end{vmatrix}}q-2{\rm i}\frac{\begin{vmatrix}\lambda_1^2f_1  &f_1\\  \lambda_2^2f_2  &f_2\end{vmatrix}}{\begin{vmatrix} \lambda_1f_1 &g_1\\  \lambda_2f_2 &g_2\end{vmatrix}}=\frac{\begin{vmatrix} \lambda_1g_1 &f_1\\  -\lambda_1^*f_1^*  &g_1^*\end{vmatrix}}{\begin{vmatrix} \lambda_1f_1  &g_1\\ -\lambda_1^*g_1^*  &f_1^*\end{vmatrix}}q-2{\rm i}\frac{\begin{vmatrix}\lambda_1^2f_1  &f_1\\  {\lambda_1^*}^2g_1^*  &g_1^*\end{vmatrix}}{\begin{vmatrix} \lambda_1f_1 &g_1\\  -\lambda_1^*g_1^* &f_1^*\end{vmatrix}},\\
      r^{[2]}=\frac{\begin{vmatrix} \lambda_1f_1 &g_1\\  \lambda_2f_2  &g_2\end{vmatrix}}{\begin{vmatrix} \lambda_1g_1  &f_1\\ \lambda_2g_2  &f_2\end{vmatrix}}r-2{\rm i}\frac{\begin{vmatrix}\lambda_1^2g_1  &g_1\\  \lambda_2^2g_2  &g_2\end{vmatrix}}{\begin{vmatrix} \lambda_1g_1 &f_1\\  \lambda_2g_2 &f_2\end{vmatrix}}=\frac{\begin{vmatrix} \lambda_1f_1  &g_1\\ -\lambda_1^*g_1^*  &f_1^*\end{vmatrix}}{\begin{vmatrix} \lambda_1g_1 &f_1\\  -\lambda_1^*f_1^*  &g_1^*\end{vmatrix}}r-2{\rm i}\frac{\begin{vmatrix}\lambda_1^2g_1  &g_1\\  {\lambda_1^*}^2f_1^*  &f_1^*\end{vmatrix}}{\begin{vmatrix} \lambda_1g_1 &f_1\\  -\lambda_1^*f_1^*  &g_1^*\end{vmatrix}},
    \end{aligned}
  \end{equation*}
  and
  \begin{equation*}
  {q^{[2]}}^*=\frac{\begin{vmatrix}\lambda_1^*g_1^* &f_1^*\\ -\lambda_1f_1 &g_1\end{vmatrix}}{\begin{vmatrix}\lambda_1^*f_1^* &g_1^*\\ -\lambda_1g_1 &f_1\end{vmatrix}}r+2{\rm i}\frac{\begin{vmatrix}{\lambda_1^*}^2f_1^* &f_1^*\\ \lambda_1^2g_1 &g_1\end{vmatrix}}{\begin{vmatrix}\lambda_1^*f_1^* &g_1^*\\ -\lambda_1g_1 &f_1\end{vmatrix}}=\frac{\begin{vmatrix} \lambda_1f_1  &g_1\\ -\lambda_1^*g_1^*  &f_1^*\end{vmatrix}}{\begin{vmatrix} \lambda_1g_1 &f_1\\  -\lambda_1^*f_1^*  &g_1^*\end{vmatrix}}r-2{\rm i}\frac{\begin{vmatrix}\lambda_1^2g_1  &g_1\\  {\lambda_1^*}^2f_1^*  &f_1^*\end{vmatrix}}{\begin{vmatrix} \lambda_1g_1 &f_1\\  -\lambda_1^*f_1^*  &g_1^*\end{vmatrix}}.
  \end{equation*}
  That is, ${q^{[2]}}^*=r^{[2]}$.

In the case $n>2$, ${q^{[n]}}^*=r^{[n]}$ can be verified by iteration of
$T_1$ for $n=2k+1$ and  $T_2$ for $n=2k$. Note that
 each step of iteration preserves the reduction condition.
\end{proof}

\section{Solutions of the DNLSII equation}


In this section, we shall construct several explicit solutions of the
DNLSII equation \eqref{cll} by applying the determinant representation
of $T_n$ for the DNLSII equation. We always choose eigenvalues as eq.(\ref{choicesoflambdas}).  \\

\noindent{\bf 3.1 Solutions from a vacuum seed by Darboux transformation}\\

  Set seed solution $q=r=0$,  then let $H=1$ without loss of generality
because $H$ is a constant according to remark \ref{remark1}. By solving the Lax pair
\eqref{laxpair}, we have an eigenfunction
  \begin{equation}
    \Phi_k=\left(\begin{matrix}
      f_k\\g_k
    \end{matrix}\right),
    \quad f_k={{\rm exp}({-{\rm i}{\lambda_k}^{2}x-2\,
    {\rm i}{\lambda_k}^{4}t})},\quad g_k={{\rm exp}({{\rm i}
    {\lambda_k}^{2}x+2\,{\rm i}{\lambda_k}^{4}t})}
  \end{equation}
for the  eigenvalue $\lambda_k$.\\


{\bf Case A:} Let $n=1$ and $\lambda_1={\rm i}\beta_1$, then we get
    \begin{equation}
     q^{[1]}=(\frac{f_1^*}{f_1}q-2{\rm i}\lambda_1)
     =-2{\rm i}\lambda_1=2\beta_1
    \end{equation}
according to eq.(\ref{newpotentials}). Unfortunately, this is a
trivial solution of the DNLSII equation.
\\


{\bf Case B:} Set $n=2$,  $\lambda_1=\alpha_1+{\rm i}\beta$ and
 $\lambda_2=-\lambda_1^*$.  Eq. \eqref{q2_1} provides  an usual
 single-soliton solution of the DNLSII equation:
    \begin{equation}\label{1soliton}
      q^{[2]}_1=\frac{\begin{vmatrix} \lambda_1g_1 &f_1\\  -\lambda_1^*f_1^*  &g_1^*\end{vmatrix}}{\begin{vmatrix} \lambda_1f_1  &g_1\\ -\lambda_1^*g_1^*  &f_1^*\end{vmatrix}}q-2{\rm i}\frac{\begin{vmatrix}\lambda_1^2f_1  &f_1\\  {-\lambda_1^*}^2g_1^*  &g_1^*\end{vmatrix}}{\begin{vmatrix} \lambda_1f_1 &g_1\\  -\lambda_1^*g_1^* &f_1^*\end{vmatrix}}=\dfrac{-4\alpha_1\beta_1{\rm exp}(-{\rm i}F_2)}{\alpha_1\cosh(F_1)+{\rm i}\beta_1\sinh(F_1)},
    \end{equation}
    with
    \begin{equation*}
      \begin{aligned}
      {F_1}&=4\,{\alpha_1}\,{\beta_1}\, \left( 4\,t{{\alpha_1}}^{2}-
               4\,t{{\beta_1}}^{2}+x \right),\\
      {F_2}&=4\,{{\alpha_1}}^{4}t-24\,{{\alpha_1}}^{2}t{{\beta_1}}^{
         2}+2\,{{\alpha_1}}^{2}x+4\,{{\beta_1}}^{4}t-2\,{{\beta_1}}^{2}x.
      \end{aligned}
    \end{equation*}
Set $\alpha_1\rightarrow 0$ in \eqref{1soliton}, then $q^{[2]}_1$ becomes a rational
single-soliton solution
    \begin{equation}\label{1rational}
q^{[2]}_2={\frac {{4\beta_1}\,{{\rm exp}({-2\,{\rm i}{{\beta_1}}^{2} ( 2\,t{
      {\beta_1}}^{2}-x ) })}}{1+4\,{\rm i}{{\beta_1}}^{2}x-16\,{\rm i}{{\beta_1}}^{4}t}}.
    \end{equation}
The trajectory of \eqref{1soliton} and \eqref{1rational}
are two lines  defined by
$x=-4(\alpha_1^2-\beta_1^2)t$ and $x=4\beta_1^2t$ respectively.
The profiles of them are depicted in Fig. \ref{fig_soliton}.
From Fig. \ref{fig_soliton} (c), we find the rational
single-soliton solution is narrower than the usual single-soliton
solution with the same parameters.

Comparing with the solutions of the DNLSI equation generated by DT
\cite{JPA44305203},  these two solutions and their modulses are able
to be obtained from the corresponding solutions of the DNLSI
equation by inverse gauge transformation \eqref{gt}, because
integral in \eqref{gt} containing square of modulus of solutions are
indeed calculable explicitly. This is consistent with the known
assertion mentioned in the introduction. So we omit the discussion
about higher order soliton solutions of the DNLSII equation.
\\

\noindent{\bf 3.2 Breathers from a periodic seed by Darboux transformation}\\

  Here, we shall discuss the solutions of the DNLSII equation generated from a
periodic solution. In general, set a periodic seed
  \begin{equation}
    q=c\,{\rm e}^{({\rm i}\rho)},\quad \rho=ax+bt,\quad b=-a^2-c^2a,\quad a,\ c\in\mathbb{R}.
  \end{equation}
According to remark \ref{remark1},
$H={{\rm exp}({-\frac{1}{4}\,{\rm i}{c}^{2}\left( 4\,ta+{c}^{2}t-2\,x\right) })}$
in this case. Solving the Lax pair \eqref{laxpair} associated with this seed solution,
and separating variables,  we get two different solutions
corresponding to an eigenfunction $\lambda$ as
  \begin{equation}\label{originaleigenfunforp}
  \begin{aligned}
    \Psi_1&=\left(\begin{matrix}
      \psi_{11}(x,t,\lambda)\\
      \psi_{12}(x,t,\lambda)
    \end{matrix}\right)=\left(\begin{matrix}
      {\frac {2\,{\rm i}a+4\,{\rm i}{\lambda}^{2}-{\rm i}{c}^{2}+{\rm i}s}{4\lambda}}{\rm exp}\left({\rm i}\left(\frac{\rho}{2}-\frac{s}{4}x+\frac{\left(2a+c^2-4\lambda^2\right)s}{8}t\right)\right)\\
      c\,{\rm exp}\left({\rm i}\left(-\frac{\rho}{2}-\frac{s}{4}x+\frac{\left(2a+c^2-4\lambda^2\right)s}{8}t\right)\right)
    \end{matrix}\right),\\
    \Psi_2&=\left(\begin{matrix}
      \psi_{21}(x,t,\lambda)\\
      \psi_{22}(x,t,\lambda)
    \end{matrix}\right)=\left(\begin{matrix}
      {\frac {2\,{\rm i}a+4\,{\rm i}{\lambda}^{2}-{\rm i}{c}^{2}-{\rm i}s}{4\lambda}}{\rm exp}\left({\rm i}\left(\frac{\rho}{2}+\frac{s}{4}x-\frac{\left(2a+c^2-4\lambda^2\right)s}{8}t\right)\right)\\
      c\,{\rm exp}\left({\rm i}\left(-\frac{\rho}{2}+\frac{s}{4}x-\frac{\left(2a+c^2-4\lambda^2\right)s}{8}t\right)\right)
    \end{matrix}\right)
  \end{aligned}
  \end{equation}
  with
  \begin{equation}\label{phasefactor}
    s=\sqrt{4\,{a}^{2}+16\,a{\lambda}^{2}-4\,{c}^{2}a+16\,{\lambda}^{4}+8\,{\lambda}^{2}{c}^{2}+{c}^{4}}.
  \end{equation}
  In condition of $r=q^*$, it is obvious that $\left(\begin{matrix}
     \psi^*_{12}(x,t,-\lambda^*)\\ \psi^*_{11}(x,t,-\lambda^*)
   \end{matrix}\right)$ and $\left(\begin{matrix}
     \psi^*_{22}(x,t,-\lambda^*)\\ \psi^*_{21}(x,t,-\lambda^*)
   \end{matrix}\right)$ are also solutions of the Lax pair \eqref{laxpair} associated with
    $\lambda$ corresponding to lemma \ref{relation}. Using  the principle of superposition, we select the first
solution in \eqref{originaleigenfunforp} to get  a new solution of
the Lax pair \eqref{laxpair} with $\lambda=\lambda_j$
  \begin{equation}\label{eigenfunnozero}
    \Phi_j=\left(\begin{matrix}
      f_j(x,t,\lambda_j)\\
      g_j(x,t,\lambda_j)
    \end{matrix}\right)=\left(\begin{matrix}
     \psi_{11}(x,t,\lambda_j)+\psi^*_{12}(x,t,-\lambda_j^*)\\
     \psi_{12}(x,t,\lambda_j)+\psi^*_{11}(x,t,-\lambda_j^*)\\
    \end{matrix}\right).
  \end{equation}
  Next, we shall  use this eigenfunction to construct new solutions of the DNLSII equation
by means of theorems \ref{thm_DT1} and \ref{thm_DT2} under the
choices of eigenvalues in eq.(\ref{choicesoflambdas}), or equivalently through
theorem \ref{thm reductionqn}.\\
 {\bf Case A:} $n=1$.  Let $\lambda_1={\rm i}{\beta}$, the first order solution is obtained
  through theorem \ref{thm_DT1}  and eq.(\ref{choicesoflambdas}).
        \begin{equation}\label{solitonfromnonzero}
        q^{[1]}=\left(2\beta+\frac{2ac-4\beta^2c-c^3+cs_1+4\beta{c^2}\Gamma_1}
        {(2a-4\beta^2-c^2+s_1)\Gamma_1+4\beta{c}}\right)
        {{\rm exp}({-\frac{1}{4}\,{\rm i}{c}^{2}\left( t{c}^{2}-2\,x+4\,ta \right) })}
         \end{equation}
        with\\[-20pt]
         \begin{equation*}
         \begin{aligned}
           \Gamma_1&=\exp\left(2{\rm i}s_1\left(-\frac{x}{4}+\frac{(2a+c^2+4\beta^2)t}{8}\right)\right),\\
           s_1&=\sqrt{4\,{a}^{2}-16\,a{\beta}^{2}-4\,{c}^{2}a+16\,{\beta}^{4}-8\,{\beta}^{2}
                  {c}^{2}+{c}^{4}}.
                \end{aligned}
         \end{equation*}

     In the case $4\,{a}^{2}-16\,a{\beta}^{2}-4\,{c}^{2}a+16\,{\beta}^{4}-8\,{\beta}^{2}
    {c}^{2}+{c}^{4}<0$,
    $q^{[1]}$ gives a soliton solution, which reaches its amplitude $|q_A^{[1]}|=|2\beta+c|$ at
    $x=\frac{2a+c^2+4\beta^2}{2}t$. When $x\rightarrow\infty,\,t\rightarrow\infty$,
    $|q_B^{[1]}|=|2\beta+{\frac {2\,a-4\,{\beta}^{2}-{c}^{2}+s_1}{4\beta}}|$. If $|q_B^{[1]}|> |q_A^{[1]}|$,
    it generates a dark soliton. Otherwise,
    it gives a bright soliton with a non-vanishing boundary.
    Moreover, $q^{[1]}$ generates a periodic solution while $4\,{a}^{2}-16\,a{\beta}^{2}-4\,{c}^{2}a+16\,
    {\beta}^{4}-8\,{\beta}^{2}
    {c}^{2}+{c}^{4}>0$. The dark soliton and bright soliton are showed in
    Fig. \ref{fig.1_solitonfromnonzero}.   \\
   {\bf Case B:}  $n=2$. To preserve the reduction condition
    $r^{[2]}={q^{[2]}}^*$,  set $\lambda_1=\alpha_1+{\rm i}\beta_1\,\mbox{and } \lambda_2=-\lambda_1^*=-\alpha_1
    +{\rm i}\beta_1\,(\alpha_1\neq0)$, then $f_2=g_1^*,\ g_2=f_1^*$
    according to lemma \ref{relation} and theorem \ref{thm reductionqn}. In order to compare with the breather solution of the DNLSI equation in \cite{JPA44305203},
    let $a=-2\,{\alpha_{{1}}}^{2}+2\,{\beta_{{1}}}^{2}-\frac{1}{2}\,{c}^{2}$ such that the imaginary part of
    $ {4\,{a}^{2}+16\,a{\lambda_1}^{2}-4\,{c}^{2}a+16\,{\lambda_1}^{4}+8\,{\lambda_1}^{2}{c}^{2}+{c}^{4}}$
     equals $0$.   \eqref{q2_1} in theorem 2 gives a breather solution
     of the DNLSII equation as follows:
        \begin{equation}\label{breather}
          q_1^{[2]}=\frac{L_1L_3}{L_2}
        \end{equation}
        with

          \begin{align*}
         K=&2\,\sqrt {{c}^{4}+4\,{\alpha_{{1}}}^{2}{c}^{2}-4\,{\beta_{{1}}}^{2}{c}
            ^{2}-16\,{\alpha_{{1}}}^{2}{\beta_{{1}}}^{2}},\\
         L_1=&A_1\sin\Theta_1+A_2\cos\Theta_1+A_3\sinh\Theta_2
         +A_4\cosh\Theta_2,\\
         L_2=&B_1\sin\Theta_1+B_2\cos\Theta_1+B_3\sinh\Theta_2
         +B_4\cosh\Theta_2,\\
         L_3=&\,{\rm exp}({\rm i}(\tilde{a}x+\tilde{b}t)), \quad
         \Theta_1=\frac{1}{2}K\left(x+4\left(\alpha_1^2-\beta_1^2\right)t\right),\quad
         \Theta_2=2K\alpha_1\beta_1t,\\
         A_1=&\,{\rm i}  \alpha_{{1}}\beta_{{1}}{c}\left(8\,{\beta_{{1}}}^{2}-2\,{c}^{2}+K \right),
        \ \  A_2=\alpha_{{1}}\beta_{{1}}{c}\left( 8\,{\alpha_{{1}}}^{2}+2\,{c}^{2}
              -K \right),\\
         A_3=&\frac{1}{2}{\rm i}\beta_{{1}}\left(8\,{\beta_{{1}}}^{2}-2\,{c}^{2}+K \right)  \left( 8\,{\alpha_{{1}}}^{2}+{c}^{2} \right),
\ A_4=\frac{1}{2}\,\alpha_{{1}}\left( 8\,{\alpha_{{1}}}^{2}+2\,{c}^{2}-K \right)\left( {c}^{2}-8\,{\beta_{{1}}}^{2} \right),\\
B_1=&\,{\rm i}\alpha_{{1}}
             \beta_{{1}}\left(8\,{\beta_{{1}}}^{2}-2\,{c}^{2}+K \right),
        \ B_2=-\alpha_{{1}}\beta_{{1}} \left(8\,{\alpha_{{1}}}^{2}+2\,{c}^{2}-K\right),\\
B_3=&-\frac{1}{2}{\rm i}{c}\beta_{{1}}\left( 8\,{\beta_{{1}}}^{2}-2\,{c}^{2}+K \right),
\ B_4=\frac{1}{2}\,{c}\alpha_{{1}} \left(8\,{\alpha_{{1}}}^{2}+2\,{c}^{2}-K\right),\\
\tilde{a}=&( -2\,{\alpha_{{1}}}^{2}+2\,{\beta_{{1}}}^{2}-\frac{1}{2}\,{c}^{2}
        ),
\ \tilde{b}=( -4\,{\alpha_{{1}}}^{4}+8\,{\alpha_{{1}}}^{2}{\beta_
        {{1}}}^{2}-4\,{\beta_{{1}}}^{4}+\frac{1}{4}\,{c}^{4} ).\\
         \end{align*}


        If $K^2<0$, it reaches its amplitude $|c+4\beta_1|$ at $x=4(\,\beta_1^2-\,\alpha_1^2)t$, i.e., it propagates along a line in the direction $x=4(\,\beta_1^2-\,\alpha_1^2)t$ on ($x,t$)-plane. Especially, when $|\alpha_1|=|\beta_1|$, the temporal periodic breather solution (Kuznetsov-Ma breather \cite{Kuznetsov1977,SAM6053}) is obtained. If $K^2>0$, it reaches its amplitude $|c+4\beta_1|$ at $t=0$, i.e., the
        spacial periodic breather solution (Akhmediev breather \cite{Akhmediev_1985_894}).
        Three kinds of breather solutions are shown in
        Fig. \ref{fig.breather} to confirm their periodicity. Furthermore, in order to compare with the
        result \cite{JPA44305203} of the DNLSI equation, we present
        an expression of the square of the modulus of a breather solution given by
         \eqref{breather} as
        \begin{equation}\label{breathermodulus}
          {|q_1^{[2]}|}^{2}=\frac{\widetilde{L_1}}{\widetilde{L_2}},
        \end{equation}
        with

          \begin{align*}
            \qquad\widetilde{L_1}=&\widetilde{A_1}\sin\Theta_1\sinh\Theta_2+\widetilde{A_2}\cos\Theta_1\cosh
            \Theta_2+\widetilde{A_3}\cos(2\Theta_1)+\widetilde{A_4}\cosh(2\Theta_2)+\widetilde{A_5},\\
            \qquad\widetilde{L_2}=&\widetilde{B_1}\sin\Theta_1\sinh\Theta_2+\widetilde{B_2}\cos\Theta_1\cosh
            \Theta_2+\widetilde{B_3}\cos(2\Theta_1)+\widetilde{B_4}\cosh(2\Theta_2)+\widetilde{B_5},\\
 \widetilde{A_1}=&(\alpha_{{1}}{\beta_{{1}}}^{2}{c}^{3}+8\,{\alpha_{{1}}}^{3}{
                             \beta_{{1}}}^{2}c) {k_{{1}}}^{2},
            \widetilde{A_2}=( {\alpha_{{1}}}^{2}\beta_{{1}}{c}^{3}-8\,{\alpha_{{1}}}^{2}{
                             \beta_{{1}}}^{3}c) {k_{{2}}}^{2},
            \widetilde{A_3}=\frac{1}{2}\,{\alpha_{{1}}}^{2}{\beta_{{1}}}^{2}{c}^{2} \left( {k_{{2}}}^{2}-{k_{{1}}}^{2} \right),\\
            \widetilde{A_4}=&\frac{1}{8}\,{c}^{4} \left( {\beta_{{1}}}^{2}{k_{{1}}}^{2}+{\alpha_{{1}}}^{2}{
                            k_{{2}}}^{2} \right) +8\,{\alpha_{{1}}}^{2}{\beta_{{1}}}^{2} \left( {
                            \alpha_{{1}}}^{2}{k_{{1}}}^{2}+{\beta_{{1}}}^{2}{k_{{2}}}^{2} \right)
                           +2\,{\alpha_{{1}}}^{2}{\beta_{{1}}}^{2}{c}^{2} \left( {k_{{1}}}^{2}+{k
                           _{{2}}}^{2} \right),\\
            \widetilde{A_5}=&\frac{1}{8}\,{c}^{4} \left( {\alpha_{{1}}}^{2}{k_{{2}}}^{2}-{\beta_{{1}}}^{2}{
                            k_{{1}}}^{2} \right) +8\,{\alpha_{{1}}}^{2}{\beta_{{1}}}^{2} \left( {
                            \beta_{{1}}}^{2}{k_{{2}}}^{2}-{\alpha_{{1}}}^{2}{k_{{1}}}^{2} \right)
                            -\frac{3}{2}\,{\alpha_{{1}}}^{2}{\beta_{{1}}}^{2}{c}^{2} \left( {k_{{1}}}^{2}+{k_{{2}}}^{2} \right),\\
            \widetilde{B_1}=&-c\alpha_{{1}}{\beta_{{1}}}^{2}{k_{{1}}}^{2},
           \  \widetilde{B_2}=-c{\alpha_{{1}}}^{2}\beta_{{1}}{k_{{2}}}^{2},
            \ \widetilde{B_3}=\frac{1}{2}\,{\alpha_{{1}}}^{2}{\beta_{{1}}}^{2} \left( {k_{{2}}}^{2}-{k_{{1}}}^{2} \right),\\
            \widetilde{B_4}=&\frac{1}{8}\,{c}^{2} \left( {\beta_{{1}}}^{2}{k_{{1}}}^{2}+{\alpha_{{1}}}^{2}{
                            k_{{2}}}^{2} \right),
           \  \widetilde{B_5}=\frac{1}{8}\,{c}^{2} \left( {\alpha_{{1}}}^{2}{k_{{2}}}^{2}-{\beta_{{1}}}^{2}{
                            k_{{1}}}^{2} \right) +\frac{1}{2}\,{\alpha_{{1}}}^{2}{\beta_{{1}}}^{2} \left(
                            {k_{{1}}}^{2}+{k_{{2}}}^{2} \right),\\
            k_{1}=&8\,{\beta_{{1}}}^{2}-2\,{c}^{2}+K, \ k_{2}=8\,{\alpha_{{1}}}^{2}+2\,{c}^{2}-K.
          \end{align*}

        It is obvious that ${|q_1^{[2]}|}^{2}$ is different from the
        square of modulus of a breather (see eq.(53) of
        ref.\cite{JPA44305203}) for the  DNLSI equation.
        So $q_1^{[2]}$ can not be
        obtained from the above mentioned breather of the DNLSI equation by the
        gauge transformation \eqref{gt} because this transformation keeps
        the modulus of two solutions invariant.\\

Note that the period of the breather solution
\eqref{breather} is proportional to $\frac{1}{K}$.
Let $K$ go to zero, then the period of the breather goes to infinity
 and leave only one  peak located around the origin
 of the $(x,\,t)$-plane.  Since $c=2\beta_1$ is a common zero of
$K$ and  the eigenfunction $\Phi_1$ in \eqref{eigenfunnozero},
$q_1^{[2]}$ in \eqref{breather} becomes an indeterminate form
 $\frac{0}{0}$. By adopting L'Hospital's rule
 for this breather solution at $c=2\beta_1$, we obtain the first order
 rogue wave solution $q^{[2]}_2$ of the DNLSII equation:
        \begin{equation}\label{1rw}
          q^{[2]}_2=-\frac{L_1}{L_2}2\beta_1{\rm exp}\left({-2\,{\rm i}{\alpha_{{1}}}^{2} \left( x+2\,t{\alpha_{{1}}}^{2}-4\,t{\beta_{{1}}}^{2} \right) }\right),
        \end{equation}
        with

          \begin{align*}
            L_1=&16\,{\rm i}{\beta_{{1}}}^{4}{x}^{2}+16\,{\rm i}{\alpha_{{1}}}^{2}{\beta_{{1}}}^{2}
                 {x}^{2}-128\,{\rm i}{\beta_{{1}}}^{6}xt+128\,{\rm i}{\beta_{{1}}}^{2}t{\alpha_{{1}
                 }}^{4}x+8\,{\beta_{{1}}}^{2}x\\
                 &+96\,t{\alpha_{{1}}}^{2}{\beta_{{1}}}^{2}
                 +256\,{\rm i}{\beta_{{1}}}^{2}{t}^{2}{\alpha_{{1}}}^{6}-3\,{\rm i}+256\,{\rm i}{\beta_{{
                 1}}}^{8}{t}^{2},\\
            L_2=&16\,{\rm i}{\alpha_{{1}}}^{2}{\beta_{{1}}}^{2}{x}^{2}+16\,{\rm i}{\beta_{{1}}}^{4}
                  {x}^{2}+128\,{\rm i}{\beta_{{1}}}^{2}t{\alpha_{{1}}}^{4}x-128\,{\rm i}{\beta_{{1}}
                  }^{6}xt-8\,{\beta_{{1}}}^{2}x\\
                  &+256\,{\rm i}{\beta_{{1}}}^{2}{t}^{2}{\alpha_{{
                  1}}}^{6}+256\,{\rm i}{\beta_{{1}}}^{8}{t}^{2}-32\,t{\alpha_{{1}}}^{2}{\beta_
                  {{1}}}^{2}+{\rm i}+64\,t{\beta_{{1}}}^{4}.
          \end{align*}

       While $x$ (or $t$) goes to infinity, $|q_2^{[2]}|^2$ goes to $4\beta_1^2$. The maximum amplitude of $|q_2^{[2]}|^2$ equals to $36\beta_1^2$ locating at $(0,\,0)$, and the minimum amplitude of $|q_2^{[2]}|^2$ is $0$ locating at $(\,{\frac {3\sqrt {3}{\alpha_{{1}}}^{2}}{4\sqrt {4\,{\alpha_{{1}}}^{2}
              +{\beta_{{1}}}^{2}} \left( {\alpha_{{1}}}^{2}+{\beta_{{1}}}^{2}
              \right) \beta_{{1}}}}$,\\
       $-{\frac {\sqrt {3}}{16\sqrt {4\,{\alpha_{{1}}}^{2}+{\beta_{{1}}}^{
        2}} \left( {\alpha_{{1}}}^{2}+{\beta_{{1}}}^{2} \right) \beta_{{1}}}}
       )$ and $(-\,{\frac {3\sqrt {3}{\alpha_{{1}}}^{2}}{4\sqrt {4\,{\alpha_{{1}}}^{2}
              +{\beta_{{1}}}^{2}} \left( {\alpha_{{1}}}^{2}+{\beta_{{1}}}^{2}
              \right) \beta_{{1}}}}
       ,\,{\frac {\sqrt {3}}{16\sqrt {4\,{\alpha_{{1}}}^{2}+{\beta_{{1}}}^{
        2}} \left( {\alpha_{{1}}}^{2}+{\beta_{{1}}}^{2} \right) \beta_{{1}}}}
       )$. Two figures of the rogue
       wave solution \eqref{1rw} with special parameters are
       displayed in Fig. \ref{fig.1_rogue wave} in order to show its
       typical features: localized profile on whole plane and
       a remarkable peak over asymptotical plane.\\

  Note that the first order rogue wave solution $q^{[2]}_2$ of
  the DNLSII equation is different from the rogue wave of the
   DNLSI equation (see eq.(56) of ref.\cite{JPA44305203}). The degree
   of the polynomial in denominator for the latter is four,
   which is double of former. But the transformation \eqref{gt} can not change this degree of
  two related functions.  Nevertheless, the modulus
  of above mentioned two first order rogue waves are the same.
 That is, they possess the same asymptotical plane and amplitude.  \\

\noindent{\bf 3.3 Rogue waves from breathers by higher order Taylor expansion}\\


In this subsection, we shall construct  higher order rogue waves of the
DNLSII equation according to theorems \ref{thm_nDT} and
\ref{thm reductionqn} under double degeneration
\cite{PRE85026601} including eigenvalue degeneration $\lambda_j
\rightarrow \lambda_0$ $(j=1,3,5,\dots, 2k-1)$
and eigenfunction degeneration  $\Phi_j(\lambda_0)=0$ in $T_n$ with $n=2k$.

According to the principle of linear superposition and lemma \ref{relation},
we construct a new general eigenfunction associated with $\lambda_j$ as
  \begin{equation}\label{eigenfunnozero_1}
    \Phi_j=\left(\begin{matrix}
      f_j(x,t,\lambda_j)\\
      g_j(x,t,\lambda_j)
    \end{matrix}\right)=\left(\begin{matrix}
     C_1\psi_{11}(x,t,\lambda_j)+C_2{{\psi}_{12}}^*(x,t,-{\lambda_j}^*)\\
     C_1\psi_{12}(x,t,\lambda_j)+C_2{{\psi}_{11}}^*(x,t,-{\lambda_j}^*)\\
    \end{matrix}\right),
  \end{equation}
from $\Psi_1(\lambda)$ in \eqref{originaleigenfunforp}. Here two
superposition coefficients are $C_1={\rm exp}\left({\rm
i}s\sum_0^{k-1}\epsilon^{2m}l_{m}\right),\ C_2={\rm
exp}\left({\rm -i}s\sum_0^{k-1}\epsilon^{2m}l_{m}\right), \
m=0,1,2,3\cdots k-1, l_m\in\mathbb{C}$. If
$l_m=0$ $(m=0,1,2,\cdots,k-1)$, $\Phi_j$ in \eqref{eigenfunnozero_1} is
the same as that in \eqref{eigenfunnozero}.  Note that $\Phi_j$ $(j=1,3,5,\cdots,2k-1)$ have the same expression derived from \eqref{originaleigenfunforp} with different eigenvalues $\lambda_j$.
This fact is important to guarantee the degeneration of $T_n$ when eigenvalues
are the same as a special one $\lambda_0$.

It is not difficult to check that $\lambda_0=\sqrt{\frac{-a}{2}}+{\rm i}c$ $(a<0)$ is only one
common zero of eigenfunctions $\Phi_j$ in \eqref{eigenfunnozero_1} and $s$ in
\eqref{phasefactor}. When $\Phi_j$ be given by \eqref{eigenfunnozero_1}, let $\lambda_j\to\lambda_0$, then theorem 4, theorem 5 and corollary 1 imply that double degeneration occurs in $T_n$, and $q^{[n]}$ is of indeterminate form $\frac 00$. By higher order Taylor
expansion as  in \cite{arxivhe,PRE85026607,hexukp2012, liwuwanghe2013,wangkphe2013,
xuhechengkp2012,Z,G}, $q^{[n]}$ generates the $k$-th order rogue wave of the DNLSII equation.
Here $n=2k$.
\begin{theorem}\label{thm nrogue}
Set $n=2k$, $j=1,3,5,\dots,k$, $a<0$, $\lambda_j=\sqrt{\frac{-a}{2}}+{\rm i}c+\epsilon^2$. Let $\Phi_j$ be given by \eqref{eigenfunnozero_1}.  The  $k$-th order
 rogue wave solution for the DNLSII equation is expressed by
    \begin{equation}\label{qnrogue}
      q^{[n]}=\frac{|\widehat{W_n}'|}{|W_n'|}q-2{\rm i}\frac{|M_n'|}{|W_n'|}
    \end{equation}
    with
    \begin{equation}
    \begin{aligned}
      \widehat{W_n}'=&\left(\left.\frac{\partial^{n_i}}{\partial\epsilon^{n_i}}\right|_{\epsilon=0}
      (\widehat{W_n})_{ij}(\lambda_0+\epsilon^2)\right)_{n\times n},\\
       {W_n}'=&\left(\left.\frac{\partial^{n_i}}{\partial\epsilon^{n_i}}\right|_{\epsilon=0}
      \left({W_n}\right)_{ij}(\lambda_0+\epsilon^2)\right)_{n\times n},\\
      {M_n}'=&\left(\left.\frac{\partial^{n_i}}{\partial\epsilon^{n_i}}\right|_{\epsilon=0}
      \left({M_n}\right)_{ij}(\lambda_0+\epsilon^2)\right)_{n\times n}.\\
    \end{aligned}
    \end{equation}
Here $n_i=i$ if $i$ is odd and $n_i=i-1$ if $i$ is even. This solution has
 free parameters $a,c,l_i$ $(i=0,1,2,\cdots,k-1)$.
  \end{theorem}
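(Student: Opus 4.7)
The plan is to combine Corollary \ref{thm_qn} at $n=2k$ with the reduction from Theorem \ref{thm reductionqn}, which gives
$q^{[n]}=\frac{|\widehat{W_{n}}|}{|W_{n}|}q-2\mathrm{i}\frac{|M_{n}|}{|W_{n}|}$ together with $\lambda_{2m}=-\lambda_{2m-1}^{*}$. Substituting the degenerate parametrization $\lambda_{2m-1}=\lambda_{0}+\epsilon^{2}$ with $\lambda_{0}=\sqrt{-a/2}+\mathrm{i}c$ and using the eigenfunctions $\Phi_{j}$ in \eqref{eigenfunnozero_1}, one checks that $\lambda_{0}$ is simultaneously a zero of $s(\lambda)$ in \eqref{phasefactor} and of every $\Phi_{j}(\lambda_{0})$. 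Consequently each row of $W_{n},\widehat{W_{n}},M_{n}$ vanishes at $\epsilon=0$, and direct substitution yields the indeterminate form $0/0$. The rogue wave will emerge by resolving this indeterminacy through a higher-order Taylor expansion, as has been done for the NLS hierarchy in \cite{arxivhe,PRE85026607,hexukp2012}.

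Concretely, I would perform elementary row operations on the three determinants simultaneously, subtracting appropriate linear combinations of earlier rows from row $i$ and rescaling by $n_{i}!/\epsilon^{n_{i}}$, with $n_{i}=i$ for odd $i$ and $n_{i}=i-1$ for even $i$. Because these operations act only on the $(f_{j},g_{j})$-structure of a row and not on its $\lambda$-power pattern, the very same operations apply verbatim to $W_{n}$, $\widehat{W_{n}}$ and $M_{n}$. Taking $\epsilon\to 0$, row $i$ becomes $\left.\partial^{n_{i}}/\partial\epsilon^{n_{i}}\right|_{\epsilon=0}$ of the original row, producing exactly $W_{n}',\widehat{W_{n}}',M_{n}'$. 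The common normalization $\prod_{i}n_{i}!/\epsilon^{n_{i}}$ then cancels between numerator and denominator of both ratios, yielding the formula \eqref{qnrogue}.

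The index pattern $(n_{1},n_{2},n_{3},n_{4},\dots)=(1,1,3,3,\dots)$ is dictated by the conjugation-related pairing of rows. For the $m$-th pair, rows $2m-1$ and $2m$ are tied by $\lambda_{2m}=-\lambda_{2m-1}^{*}$ and the $f\leftrightarrow g$ swap of Lemma \ref{relation}; thanks to this swap, each pair contributes \emph{two} linearly independent directions at the \emph{same} Taylor order $2m-1$, so both members of a pair must be differentiated to the same odd order. The even orders $0,2,4,\dots$ give no new rows because $\Phi_{j}$ depends on $\epsilon$ only through $\epsilon^{2}$ while satisfying $\Phi_{j}(\lambda_{0})=0$. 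The free parameters $l_{0},\dots,l_{k-1}$ hidden inside $C_{1},C_{2}$ of \eqref{eigenfunnozero_1} parametrize precisely these surviving higher-order Taylor directions and supply the generic multi-parameter family of $k$-th order rogue waves.

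The main obstacle will be the precise bookkeeping of the vanishing orders: one must verify inductively on $k$ that the three determinants each vanish to order exactly $\sum_{i=1}^{n}n_{i}=2k^{2}$ in $\epsilon$ and that the leading coefficient matrix $W_{n}'$ is generically non-degenerate, so that \eqref{qnrogue} is well defined. Once this is established, that $q^{[n]}$ solves the DNLSII equation and has the rogue-wave profile is automatic: the Darboux procedure produces a solution of \eqref{cll} for every small $\epsilon\neq 0$, the limit of this family is smooth in $(x,t)$, and rogue-wave boundary behaviour at $|x|,|t|\to\infty$ is inherited from the periodic seed $q=c\mathrm{e}^{\mathrm{i}\rho}$ used in the background.
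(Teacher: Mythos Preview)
Your outline follows the same route the paper sketches: the paper gives no formal proof of this theorem, only the paragraph preceding it, which notes that $\lambda_0$ is the common zero of $s$ and of the $\Phi_j$, that the double degeneration produces $0/0$ in $q^{[n]}$, and that the rogue wave then emerges ``by higher order Taylor expansion'' as in the cited references. Your plan to resolve the indeterminacy by simultaneous row operations on $W_n,\widehat{W_n},M_n$, with a common rescaling that cancels in both ratios of \eqref{qn}, is exactly the standard mechanism behind such confluent-limit formulas.

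One point in your justification of the $(1,1,3,3,\dots)$ pattern is internally inconsistent and should be repaired. You assert that ``$\Phi_j$ depends on $\epsilon$ only through $\epsilon^2$''; but if that were so, $\Phi_j$ would be \emph{even} in $\epsilon$, every odd $\epsilon$-derivative at $0$ would vanish, and the prescription $n_i\in\{1,3,5,\dots\}$ would produce identically zero rows rather than the desired ones. What actually happens is the opposite parity: while $\lambda_j=\lambda_0+\epsilon^2$ is even in $\epsilon$, the quantity $s$ of \eqref{phasefactor} is not, because $s^2$ has a simple zero at $\lambda_0$ and hence $s\sim\mathrm{const}\cdot\epsilon$ is odd in $\epsilon$. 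The eigenfunction \eqref{eigenfunnozero_1} depends on $s$ through the exponentials and through the prefactor of $\psi_{11}$, and replacing $s$ by $-s$ interchanges $\Psi_1\leftrightarrow\Psi_2$ and $C_1\leftrightarrow C_2$; tracking this symmetry is what forces the nontrivial Taylor coefficients of each row to sit at the odd orders $1,3,5,\dots$ and makes the two members of each conjugate pair (rows $2m-1$ and $2m$, linked by Lemma~\ref{relation}) share the same order $2m-1$. Once you replace your parity claim by this one, the remainder of your bookkeeping (total vanishing order $\sum_i n_i=2k^2$, cancellation of the common factor, persistence of the DNLSII equation under the smooth limit, rogue-wave asymptotics inherited from the periodic seed) is in line with the argument the paper defers to \cite{arxivhe,PRE85026607,hexukp2012,Z,G}.
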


For convenience, let $a=-1$ and $c=1$ in following all rogue waves.
Let $n=4$, $l_0=0$ in theorem \ref{thm nrogue}.
The second order rogue wave of the DNLSII equation  is given by
\begin{equation}
    \label{2rw}
  q^{[4]}=\frac{L_1}{L_2}{\rm exp}(-{\rm i}x),
  \end{equation}
  where

  \begin{align*}
    L_1=&-15-9\,{x}^{6}-243\,{t}^{6}+81\,{x}^{2}+207\,{t}^{2}+1323\,{t}^{4}+39
        \,{x}^{4}+1296\,{\rm i}{x}^{2}{t}^{3}\\
        &+1458\,{\rm i}x{t}^{4}+180\,{\rm i}{x}^{4}t+612\,{\rm i}{
        t}^{2}{x}^{3}-54\,{x}^{5}t-567\,{t}^{4}{x}^{2}-486\,{t}^{5}x\\
        &-396\,{t}^{3}{x}^{3}-189\,{x}^{4}{t}^{2}-54\,xt
         +1242\,{x}^{2}{t}^{2}+2052\,x{t}^       {3}+276\,{x}^{3}t-30\,{\rm i}x\\
         &-216\,{\rm i}{t}^{3}-28\,{\rm i}{x}^{3}-228\,{\rm i}t-288\,{\rm i}{x}^
         {2}t-972\,{\rm i}x{t}^{2}+18\,{\rm i}{x}^{5}+972\,{\rm i}{t}^{5}\\
         &+(324\,\sqrt {2}{
           x}^{2}t-216\,{\rm i}xt\sqrt {2}+324\,\sqrt {2}x{t}^{2}-216\,{\rm i}\sqrt {2}{x}^{2}
           +432\,{\rm i}{t}^{2}\sqrt {2}\\
           &+36\,\sqrt {2}t-108\,\sqrt {2}{t}^{3}+36\,
       \sqrt {2}{x}^{3}-96\,{\rm i}\sqrt {2}-36\,\sqrt {2}x) l_{{1}}-216\,{l
            _{{1}}}^{2},\\
    L_2=&-3-9\,{x}^{6}-243\,{t}^{6}-39\,{x}^{2}-441\,{t}^{2}-297\,{t}^{4}+3\,{x
         }^{4}+108\,{\rm i}x{t}^{2}-54\,{x}^{5}t\\
         &-567\,{t}^{4}{x}^{2}-486\,{t}^{5}x-
          396\,{t}^{3}{x}^{3}-189\,{x}^{4}{t}^{2}-54\,xt+162\,{x}^{2}{t}^{2}+324
         \,x{t}^{3}\\
         &-12\,{x}^{3}t-18\,{\rm i}{x}^{5}-18\,{\rm i}x-20\,{\rm i}{x}^{3}-162\,{\rm i}x{t}^{4
         }-216\,{\rm i}{x}^{2}{t}^{3}-72\,{\rm i}{x}^{2}t\\
         &-180\,{\rm i}{t}^{2}{x}^{3}-72\,{\rm i}{x}^{4}
         t+432\,{\rm i}{t}^{3}+48\,{\rm i}t-216\,{l_{{1}}}^{2}+( 216\,{\rm i}xt\sqrt {2}+36\\
          &\ +324\,\sqrt {2}x{t}^{2}+24\,{\rm i}\sqrt {2}-396\,\sqrt {2
         }t+324\,\sqrt {2}{x}^{2}t+216\,{\rm i}{t}^{2}\sqrt {2}-36\,\sqrt {2}x\\
         &\ -108\,
        \sqrt {2}{t}^{3}) l_{{1}}.
  \end{align*}
  There exists only one free parameter $l_1$ in these expressions. Here we have set $l_0=0$
without loss of generality, because it only changes
the location of the rogue wave with the same profile. Other parameters $l_j$ for $j\geq 2$
disappear automatically in $q^{[2]}$ because of the limit $\epsilon \rightarrow 0$.
While $l_1=0$, we get the fundamental pattern of the second order rogue wave,  whose maximum amplitude is
equal to $5$ and locates at (0, 0).  While $l_1=100$, the second order rogue
wave splits up into three first order rogue waves rather than two, which locate
at approximately $(5.77,\, 1.74)$, $(3.63,\,-5.29)$, and $(-9.62,\,3.66)$. The profiles
of two second-order rogue wave are plotted in Fig. \ref{fig.2_rw}.

When $n$ increases, more patterns of $k$-th
rogue wave solution can be obtained from theorem \ref{thm nrogue}.
For instance, when $k=3$, the third order rogue wave with three
parameters, i.e. $l_0,l_1,l_2$, possesses a circular pattern apart from  the fundamental pattern and the triangular pattern. Furthermore,
set $l_j=0$ $(j=0,1,2)$, the third order fundamental pattern of the DNLSII equation
is given by
 \begin{equation}\label{3rw}
  q^{[6]}=\frac{L_1}{L_2}\exp(-{\rm i}x)
\end{equation}
with
  \begin{align*}
L_1=&-81\,{x}^{12}-972\,{x}^{11}t-6318\,{x}^{10}{t}^{2}-27540\,{x}^{9}{t}^{3}
-88695\,{x}^{8}{t}^{4}-219672\,{x}^{7}{t}^{5}-428004\,{x}^{6
}{t}^{6}\\
&-659016\,{x}^{5}{t}^{7}-798255\,{x}^{4}{t}^{8}-743580\,{x}^{3}
{t}^{9}-511758\,{x}^{2}{t}^{10}-236196\,x{t}^{11}-59049\,{t}^{12}\\
&+324\,{\rm i}{x}^{11}+5184\,{\rm i}{x}^{10}t+37260\,{\rm i}{x}^{9}{t}^{2}+171720\,{\rm i}{x}^{8}{t
}^{3}+560520\,{\rm i}{x}^{7}{t}^{4}+1371168\,{\rm i}{x}^{6}{t}^{5}\\
&+2560248\,{\rm i}{x}^{5}{t}^{6}+3674160\,{\rm i}{x}^{4}{t}^{7}+3980340\,{\rm i}{x}^{3}{t}^{8}+3149280\,{\rm i}
{x}^{2}{t}^{9}+1653372\,{\rm i}x{t}^{10}+472392\,{\rm i}{t}^{11}\\
&+1026\,{x}^{10}+15660\,{x}^{9}t+127170\,{x}^{8}{t}^{2}+619920\,{x}^{7}{t}^{3}+2057940
\,{x}^{6}{t}^{4}+4865832\,{x}^{5}{t}^{5}\\
&+8393220\,{x}^{4}{t}^{6}+
10478160\,{x}^{3}{t}^{7}+9163530\,{x}^{2}{t}^{8}+5117580\,x{t}^{9}+
1430298\,{t}^{10}-1620\,{\rm i}{x}^{9}\\
&-29160\,{\rm i}{x}^{8}t-252720\,{\rm i}{x}^{7}{t}^
{2}-1296000\,{\rm i}{x}^{6}{t}^{3}-4247640\,{\rm i}{x}^{5}{t}^{4}-9311760\,{\rm i}{x}^{4
}{t}^{5}-13860720\,{\rm i}{x}^{3}{t}^{6}\\
&-13296960\,{\rm i}{x}^{2}{t}^{7}-7479540\,
{\rm i}x{t}^{8}-1312200\,{\rm i}{t}^{9}+2745\,{x}^{8}-2520\,{x}^{7}t-194220\,{x}^{
6}{t}^{2}\\
&-1344600\,{x}^{5}{t}^{3}-4167450\,{x}^{4}{t}^{4}-7597800\,{x}
^{3}{t}^{5}-9083340\,{x}^{2}{t}^{6}-6561000\,x{t}^{7}-127575\,{t}^{8}\\
&-8760\,{\rm i}{x}^{7}-85920\,{\rm i}{x}^{6}t-212760\,{\rm i}{x}^{5}{t}^{2}-97200\,{\rm i}{x}^{4
}{t}^{3}+113400\,{\rm i}{x}^{3}{t}^{4}+3732480\,{\rm i}{x}^{2}{t}^{5}\\
&+8135640\,{\rm i}x{t}^{6}+4024080\,{\rm i}{t}^{7}+21020\,{x}^{6}+12600\,{x}^{5}t-56700\,{x}^{4}
{t}^{2}-874800\,{x}^{3}{t}^{3}-413100\,{x}^{2}{t}^{4}\\
&+4791960\,x{t}^{5
}+9977580\,{t}^{6}-33480\,{\rm i}{x}^{5}-313200\,{\rm i}{x}^{4}t-630000\,{\rm i}{x}^{3}{
t}^{2}-1425600\,{\rm i}{x}^{2}{t}^{3}\\
&-3450600\,{\rm i}x{t}^{4}-9545040\,{\rm i}{t}^{5}-
41775\,{x}^{4}-318300\,{x}^{3}t-2031750\,{x}^{2}{t}^{2}-3410100\,x{t}^
{3}\\
&-3983175\,{t}^{4}+16500\,{\rm i}{x}^{3}+237600\,{\rm i}{x}^{2}t+1147500\,{\rm i}x{t}^
{2}-1171800\,{\rm i}{t}^{3}-28350\,{x}^{2}+121500\,xt\\
&-166950\,{t}^{2}+6300\,
{\rm i}x+63000\,{\rm i}t+1575,\\
L_2=&81\,{x}^{12}+972\,{x}^{11}t+6318\,{x}^{10}{t}^{2}+27540\,{x}^{9}{t}^{3
}+88695\,{x}^{8}{t}^{4}+219672\,{x}^{7}{t}^{5}+428004\,{x}^{6}{t}^{6}\\
&+659016\,{x}^{5}{t}^{7}+798255\,{x}^{4}{t}^{8}+743580\,{x}^{3}{t}^{9}+
511758\,{x}^{2}{t}^{10}+236196\,x{t}^{11}+59049\,{t}^{12}\\
&+324\,{\rm i}{x}^{11}+3240\,{\rm i}{x}^{10}t+17820\,{\rm i}{x}^{9}{t}^{2}+64800\,{\rm i}{x}^{8}{t}^{3}+
171720\,{\rm i}{x}^{7}{t}^{4}+340848\,{\rm i}{x}^{6}{t}^{5}\\
&+515160\,{\rm i}{x}^{5}{t}^{6}+583200\,{\rm i}{x}^{4}{t}^{7}+481140\,{\rm i}{x}^{3}{t}^{8}+262440\,{\rm i}{x}^{2}{t}^
{9}+78732\,{\rm i}x{t}^{10}-378\,{x}^{10}\\
&-2700\,{x}^{9}t-20250\,{x}^{8}{t}^{2}-101520\,{x}^{7}{t}^{3}
-340740\,{x}^{6}{t}^{4}-775656\,{x}^{5}{t}^{5}-1180980\,{x}^{4}{t}^{6}\\
&-1146960\,{x}^{3}{t}^{7}-503010\,{x}^{2}{t}^{8}+131220\,x{t}^{9}
+301806\,{t}^{10}+540\,{\rm i}{x}^{9}+6480\,{\rm i}{x}^{8}t+
6480\,{\rm i}{x}^{7}{t}^{2}\\
&-73440\,{\rm i}{x}^{6}{t}^{3}-372600\,{\rm i}{x}^{5}{t}^{4}-
933120\,{\rm i}{x}^{4}{t}^{5}-1419120\,{\rm i}{x}^{3}{t}^{6}-1516320\,{\rm i}{x}^{2}{t}^
{7}-1006020\,{\rm i}x{t}^{8}\\
&-524880\,{\rm i}{t}^{9}+495\,{x}^{8}+2520\,{x}^{7}t+
51660\,{x}^{6}{t}^{2}+255960\,{x}^{5}{t}^{3}+506250\,{x}^{4}{t}^{4}+
443880\,{x}^{3}{t}^{5}\\
&+2201580\,{x}^{2}{t}^{6}+4228200\,x{t}^{7}+
3189375\,{t}^{8}+5640\,{\rm i}{x}^{7}+35760\,{\rm i}{x}^{6}t+145800\,{\rm i}{x}^{5}{t}^{
2}\\
&+324000\,{\rm i}{x}^{4}{t}^{3}-145800\,{\rm i}{x}^{3}{t}^{4}-3013200\,{\rm i}{x}^{2}{t
}^{5}-3411720\,{\rm i}x{t}^{6}-2099520\,{\rm i}{t}^{7}+12340\,{x}^{6}\\
&+70920\,{x}^{
5}t+369900\,{x}^{4}{t}^{2}+874800\,{x}^{3}{t}^{3}+4009500\,{x}^{2}{t}^
{4}+7727400\,x{t}^{5}+4544100\,{t}^{6}\\
&+23640\,{\rm i}{x}^{5}+43200\,{\rm i}{x}^{4}
t+61200\,{\rm i}{x}^{3}{t}^{2}-842400\,{\rm i}{x}^{2}{t}^{3}+631800\,{\rm i}x{t}^{4}+
583200\,{\rm i}{t}^{5}-7425\,{x}^{4}\\
&+30300\,{x}^{3}t-322650\,{x}^{2}{t}^{2}+
40500\,x{t}^{3}+2330775\,{t}^{4}+11700\,{\rm i}{x}^{3}+23400\,{\rm i}{x}^{2}t-
245700\,{\rm i}x{t}^{2}\\
&-993600\,{\rm i}{t}^{3}+8550\,{x}^{2}+900\,xt+291150\,{t}^{
2}+2700\,{\rm i}x-18000\,{\rm i}t+225.
  \end{align*}

All patterns of the third order rogue are
displayed in Fig. \ref{fig.3_rw} by using formula (\ref{qnrogue}).

Set $k=4$ in theorem \ref{thm nrogue}, then $q^{[8]}$ in \eqref{qnrogue}
gives the fourth order rogue wave of the DNLSII equation. This solution
possesses as least four patterns: a fundamental pattern,
a triangular pattern, a circular-fundamental pattern and
a circular-triangular pattern. These four patterns of rogue wave
are displayed in Fig. \ref{fig.4_rw} by using formula (\ref{qnrogue}).

All the above-mentioned rogue waves are plotted by using the analytical
formulas according to theorem \ref{thm nrogue}. Their validity has been
confirmed analytically by a simple symbolic computation. In addition,  we have got explicit expressions of
other higher order rogue waves and their figures. However, they are too long to
write out here. Based on these figures and analytical formulas of
the rogue waves, we can provide following conjectures for the DNLSII
 equation.\\[-20pt]
\begin{remark}
\begin{itemize}
\mbox{\hspace{-1cm}}
\item 1) Under the fundamental pattern, the height of above-mentioned four rogue waves is 3, 5, 7, 9,
respectively. We conjecture that the maximum amplitude  of the
$k$-th order rogue wave of the DNLSII equation is $(2k+1)c$. Here $c$ is the
asymptotical height of rogue wave.
\item 2) The degree of the polynomials in above-mentioned four rogue waves
is 2, 6, 12, 20 respectively. We conjecture that the degree of the
polynomials in the $k$-th order rogue wave of the DNLSII equation is $k(k+1)$.
\item 3) The maximum number of peaks in above-mentioned four rogue waves
is 1, 3, 6, 10, respectively. We conjecture that the
$k$-th order rogue wave of the DNLSII equation can be decomposed thoroughly
into $\dfrac{k(k+1)}{2}$ first order rogue waves.
\item 4) From Figs.\ref{fig.3_rw}(c),\ref{fig.4_rw}(c) and \ref{fig.4_rw}(d), we
conjecture that the $k$-th order rogue wave of the DNLSII equation can be decomposed into a
pattern of a $(k-2)$-th order rogue wave plus an outer ring on which there are
$2k-1$ first-order rogues. Note that the central profile of these three figures
is a first-order rogue wave, a fundamental pattern and a triangular pattern of the
second-order rogue wave, respectively.\\[-20pt]
\end{itemize}
\end{remark}
These conjectures can be regarded  as natural extensions of the
counterparts \cite{Ankiewicz_2010_122002,Kedziora_2011_56611,matveev2011nlsKPI,Gaillard_2011_435204} for the NLS equation.
It is worth mentioning again that the item 2 is not true for the DNLSI
equation. This fact shows again that rogue waves of this equation can not
be obtained by transformation (\ref{gt}) from rogue waves of the DNLSI equation,
because transformation (\ref{gt}) does not change the degree in
two related solutions.
\section{Conclusion and discussion}
In this paper, the DNLSII equation, which is an important model to
describe the propagation of light pulse involving self-steepening without
concomitant self-phase-modulation \cite{PRA76021802}, has been
studied from the point of view of the analytical theory for integrable
systems. The main results of the paper are:\\
1) Present the determinant representation $T_n$ of the coupled DNLSII equations (theorem \ref{thm_nDT});\\
2) Under choice \eqref{choicesoflambdas} on the eigenvalues and
eigenfunctions, $T_n$ is reduced to the  $n$-fold   DT of the DNLSII
equation and $q^{[n]}$ generated by this reduced $T_n$ is a
solution of it (theorem \ref{thm reductionqn});\\
3) Obtain an analytical formula of the $k$-th order   rogue waves for
the DNLSII equation (theorem \ref{thm nrogue}).\\
In addition, several breathers and lower order rogue waves are given
through analytical expressions and their profiles are plotted
analytically in figures. As $n$-times iteration of the $T_1$, these
$T_n$ of the DNLSII equation include complicated integrals
 like $\int q^{[j-1]}r^{[j-1]}\mathrm{d}x$ $(j=1,2,\cdots,n)$.
To our best knowledge, this kind of DT is unusual by comparing
with known DT for soliton equations. These integrals are not
calculable explicitly in general case, and thus result in a
remarkable difficulty to construct $T_n$ in terms of determinants
and to get explicit solutions of the DNLSII equation. We have
introduced functions $H^{[i]}$ $(i=0,1,2, \cdots,n-1)$ and proved that
$\dfrac{g^{[k]}_i}{f^{[k]}_i}H^{[k+1]}$ is a constant for $i\geq
k+1$ in theorem \ref{thm_simplify}. Finally, we have eliminated
these unfavorable integrals in $T_n$ by using this crucial fact.

It is a well-accepted idea \cite{Matveev,Gu2} in the research
community of the mathematical physics that an integrable model can
be solved by means of DT. From this sense, the DNLSII equation,
which was introduced as an integrable model at 1979 \cite{PS20490},
is a rare exceptional example because its DT has not been
established over the past 30 years. Thus, it is a long-standing
problem to construct the DT of this equation. We have solved this
problem by establishment of the determinant representation of $T_n$
in theorems \ref{thm_nDT} and \ref{thm reductionqn} for the DNLSII
equation. Moreover, breather in \eqref{breathermodulus} and rogue
waves in eqs.(\ref{1rw}),(\ref{2rw}) and (\ref{3rw}), can not be
obtained from counterparts of the DNLSI equation by gauge
transformation (\ref{gt}). This observation shows that it is
necessary to get explicit solution $q$ of the DNLSII equation, and
then clarifies a well-known doubt of the necessity for solving
DNLSII equation. This doubt originates from gauge transformation
(\ref{gt}) \cite{ JMP253433,JPSJ52394} between the DNLSII equation
and the DNLSI equation.

The solutions presented in this paper have wide relevance of physics
because the DNLSII equation is a physical model of light pulses
\cite{PRA76021802}. On the one side, two typical features, i.e. high
amplitude (or high energy equivalently in optics) and localized
property of the rogue wave pulse, may result in high possibility
to observe the self-steepening effect in light pulses. It is
difficult to realize this in normal state for nonlinear propagation
of light \cite{PRA76021802}. On the other side, with the solutions in this
paper we have a new opportunity to observe optical rogue wave by
using the balance between the dispersion and self-steepening instead
of the self-phase-modulation in optical fiber. This will be helpful
to use and control rogue wave in the future. For example, as  in the
case of the NLS equation \cite{arxivhe}, different patterns of the
second order rogue wave here  can also be controlled by phase
modulation at the interaction area of two first order breathers.
This can be realized by choosing different values of $l_0$ in
eigenfunction (\ref{eigenfunnozero_1}) with $l_i=0$ $(i>0)$.
More specifically, set $n = 4$, $a = -1$, $c = 1$, and let $\Phi_j$ $ (j = 1,3)$ be given by \eqref{eigenfunnozero_1}, then by theorem \ref{thm reductionqn}, $q^{[4]}$  in \eqref{qn}
generates a second-order breather $q_{2b}$
of the DNLSII equation with parameters $a,c,\lambda_1,\lambda_3$ and
$l_0$.  Firstly, if $l_0=0$ in $\Phi_1$ and $\Phi_3$, the two
breathers in $q_{2b}$ have no phase difference. Set
$\lambda_1=0.7+0.5{\rm i}$ and $\lambda_3=0.75+0.5{\rm i}$ in
$q_{2b}$, we get a ``fundamental pattern" at interaction area of the
second-order breather, which is plotted in fig.
\ref{fig.2_breather_1}. It is obvious that the central profile is
similar to a fundamental pattern in fig. \ref{fig.2_rw} (a).
Secondly, let $l_0=6.25$ and $\lambda_1=0.68+0.5{\rm i}$ in
$\Phi_1$, and $l_0=-8{\rm i}$ and $\lambda_3=0.8+0.5{\rm i}$ in
$\Phi_3$, then $q_{2b}$ displays a ``triangular pattern" at
interaction area  which is plotted in fig. \ref{fig.2_breather_2}.
It is obvious that the central profile is similar to the triangular
pattern in fig. \ref{fig.2_rw}(b). Of course, we can get more
interesting patterns by tuning the phase of breathers at interaction
area for higher order
case.  More detailed studies  will be done in the near future.

Further more, based on the DT we constructed, we can study other interesting phenomena in such systems from unstable plane wave background, such as the so-called super-regular solitonic solutions \cite{zakharov2013}, and the integrable soliton turbulence \cite{Shrira2010}. The study of the evolution of rational rogue waves in DNLSII in presence of an additional noise by a numerical algorithm will be realized in near future \cite{calini2014}.
\\
{\bf Acknowledgments}  This work is supported by the NSF of China under Grant No.11271210 and No.11171073,
the K. C. Wong Magna Fund in Ningbo University and the Natural
Science Foundation of Ningbo under Grant No. 2011A610179. J.S. He thanks
sincerely Prof. A.S. Fokas for arranging the visit to Cambridge
University in 2012-2014 and for many useful discussions. We thank very much two referee for helpful and detailed suggestions
on our initial submission.

\newpage
\begin{figure}[!htbp]
\centering
\raisebox{20 ex}{$|q_1^{[2]}|\,$}\subfigure[]{\includegraphics[height=5cm,width=5cm]{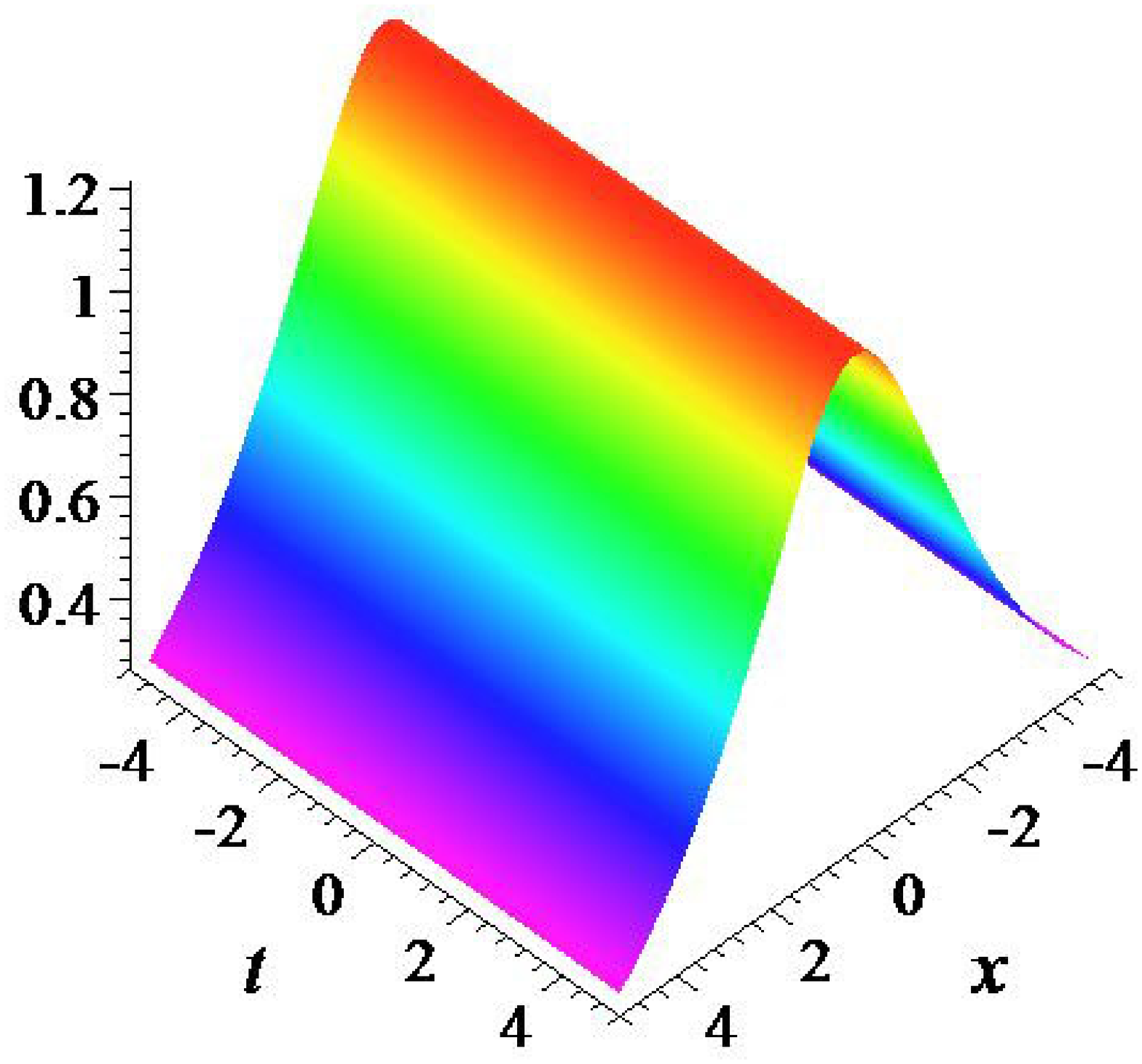}}
\qquad
\raisebox{20 ex}{$|q_2^{[2]}|\,$}\subfigure[]{\includegraphics[height=5cm,width=5cm]{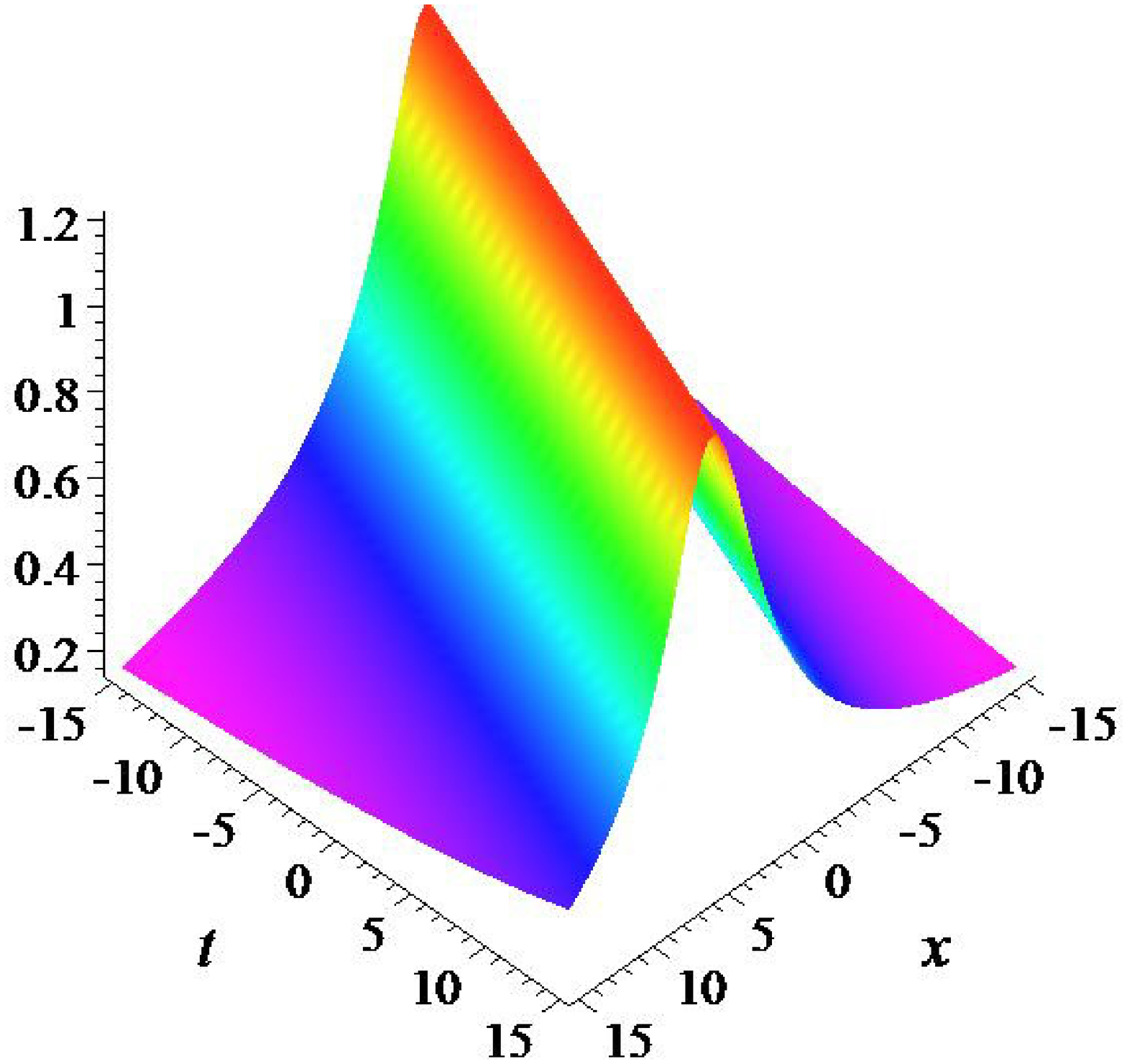}}
\\
\raisebox{20 ex}{$|q_1^{[2]}|\&|q_2^{[2]}|\,$}\subfigure[]{\includegraphics[height=5cm,width=5cm]{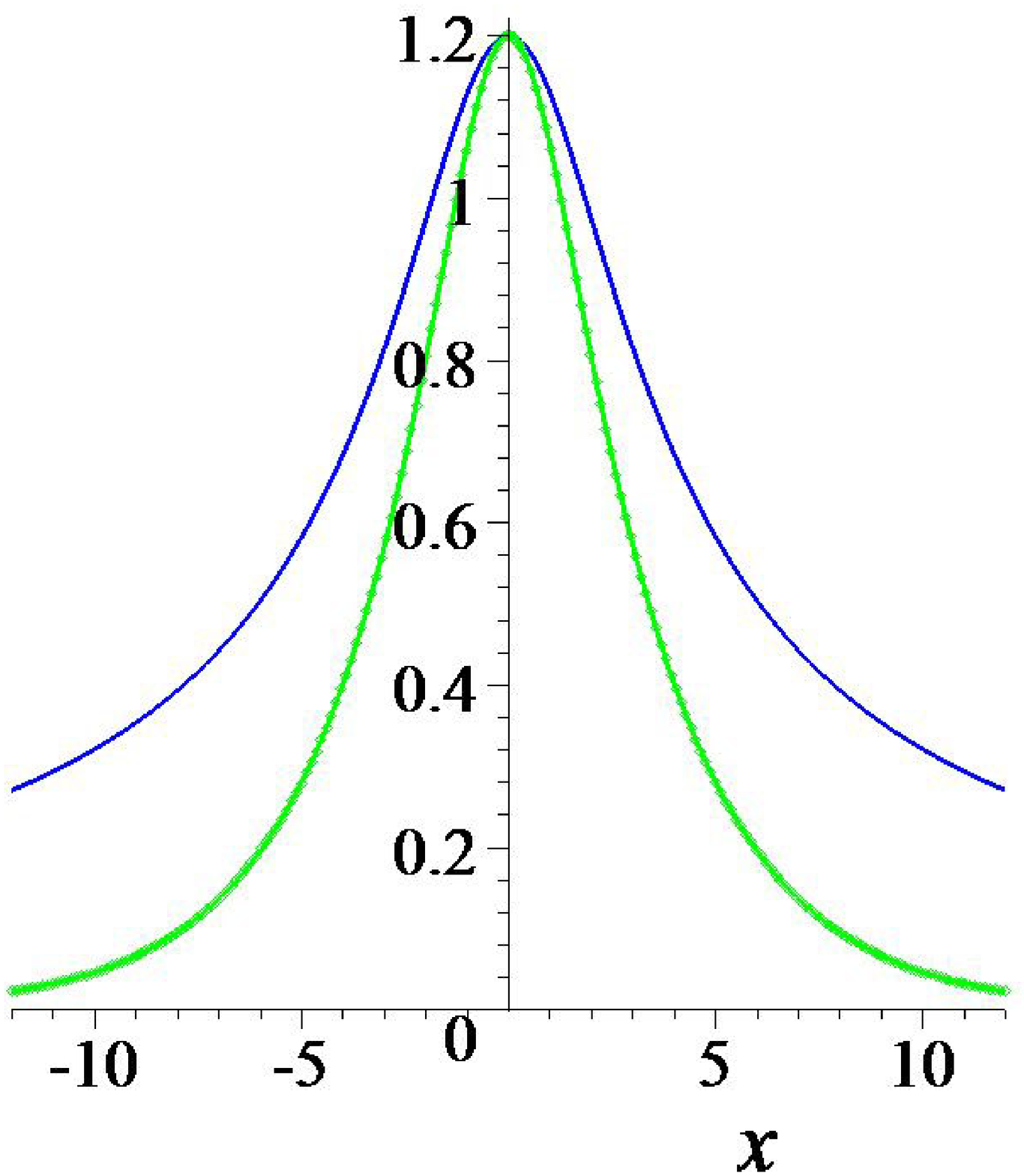}}
\qquad
\raisebox{20 ex}{$|q_2^{[2]}|\,$}\subfigure[]
{\includegraphics[height=5cm,width=5cm]{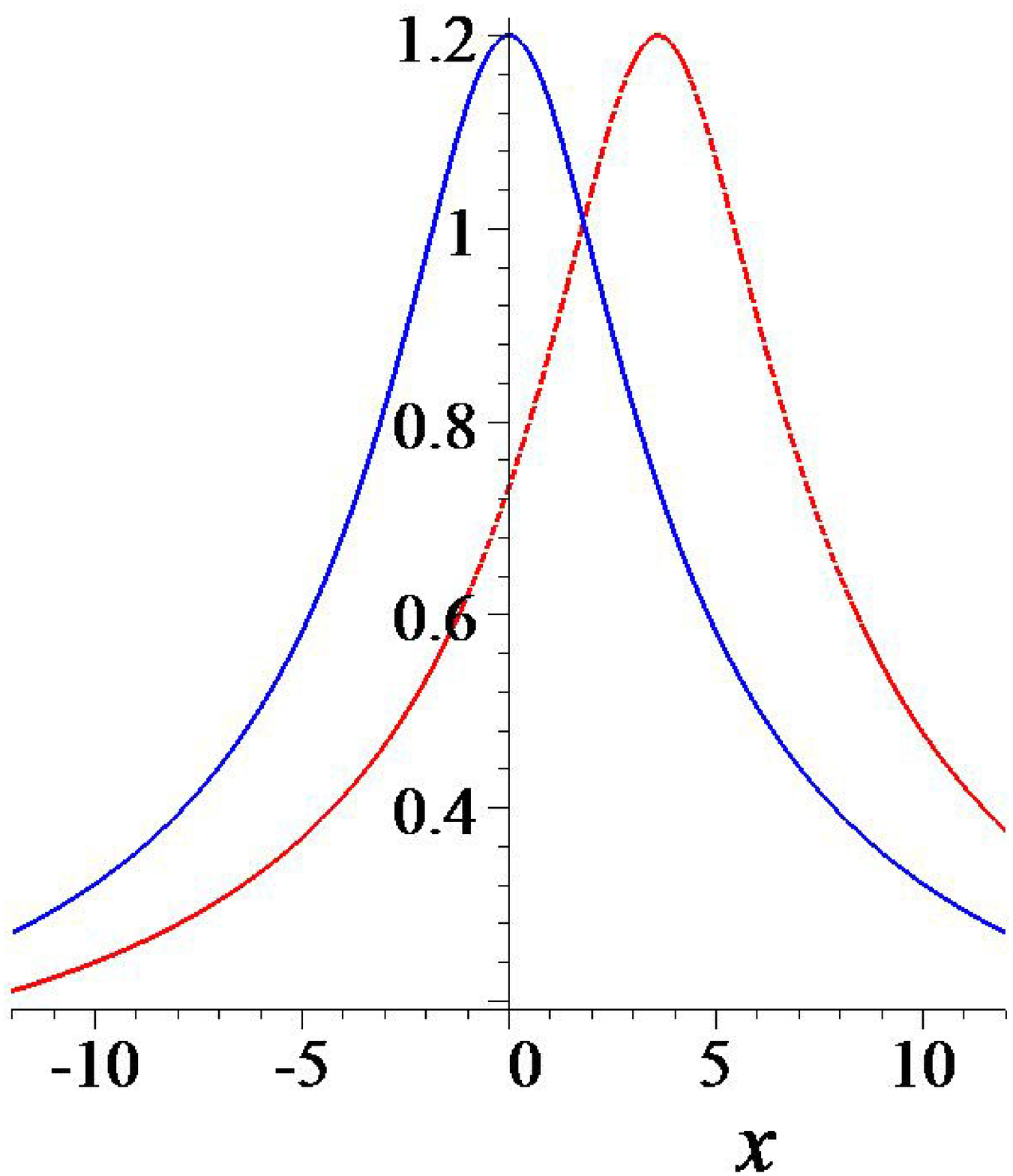}}
\caption{ Profiles of the singles-solitons generated from a vacuum solution.
(a) A usual single-soliton
with $\alpha_1=0.3,\,\beta_1=0.3$ locating near $x=0$.
(b) A rational single-soliton with $\beta_1=0.3$
locating near $x=0.36t$.
(c) A usual single-soliton (the green dot line) and
a rational single-soliton (the blue solid line) at $t=0$.
 They possess the same amplitude, but the rational
 single-soliton is steeper than  the usual one.
 (d) The rational single-soliton at $t=0$
 (the blue solid line) and $t=10$ (the red dash line).
 The blue line reaches its peak at $x=0$,
 the red one at $x=3.6$.}\label{fig_soliton}
\end{figure}


\begin{figure}[!htbp]
\centering
\raisebox{20 ex}{${|q^{[1]}|}^{2}\,$}\subfigure[]{\includegraphics[height=4cm,width=4cm]{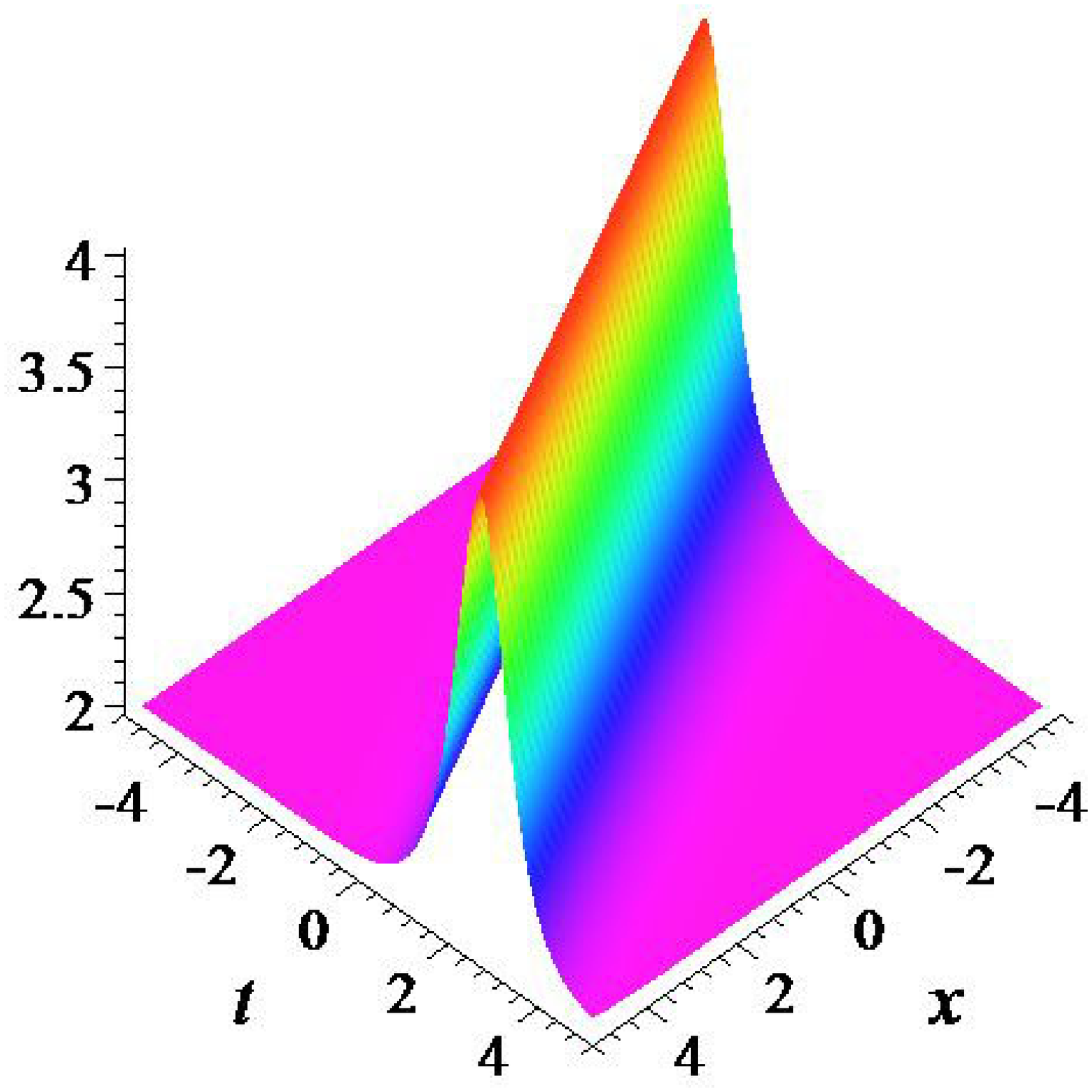}}
\quad
\raisebox{20 ex}{${|q^{[1]}|}^{2}\,$}\subfigure[]{\includegraphics[height=4cm,width=4cm]{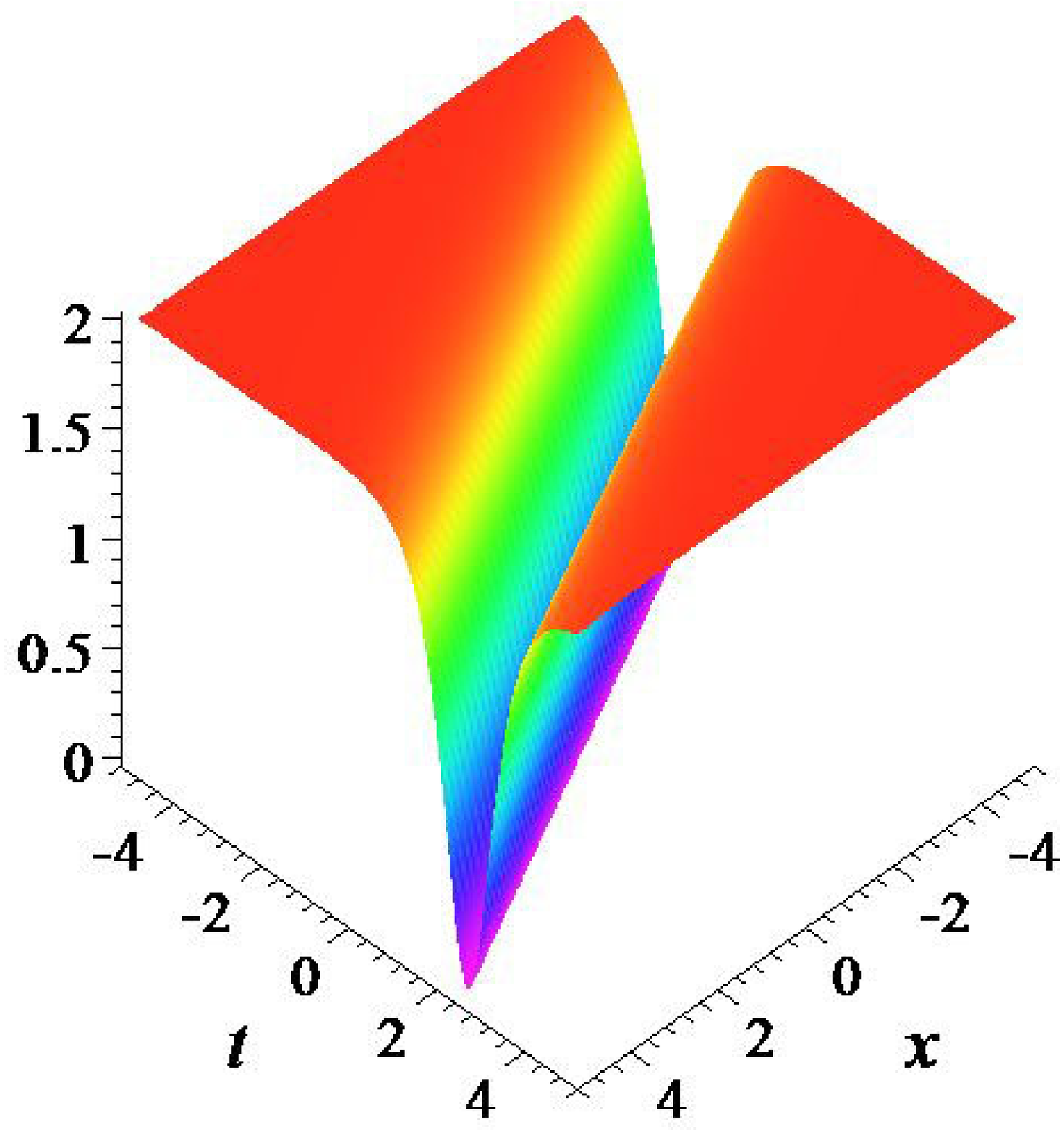}}
\quad
\raisebox{20 ex}{${|q^{[1]}|}^{2}\,$}\subfigure[]{\includegraphics[height=4cm,width=4cm]{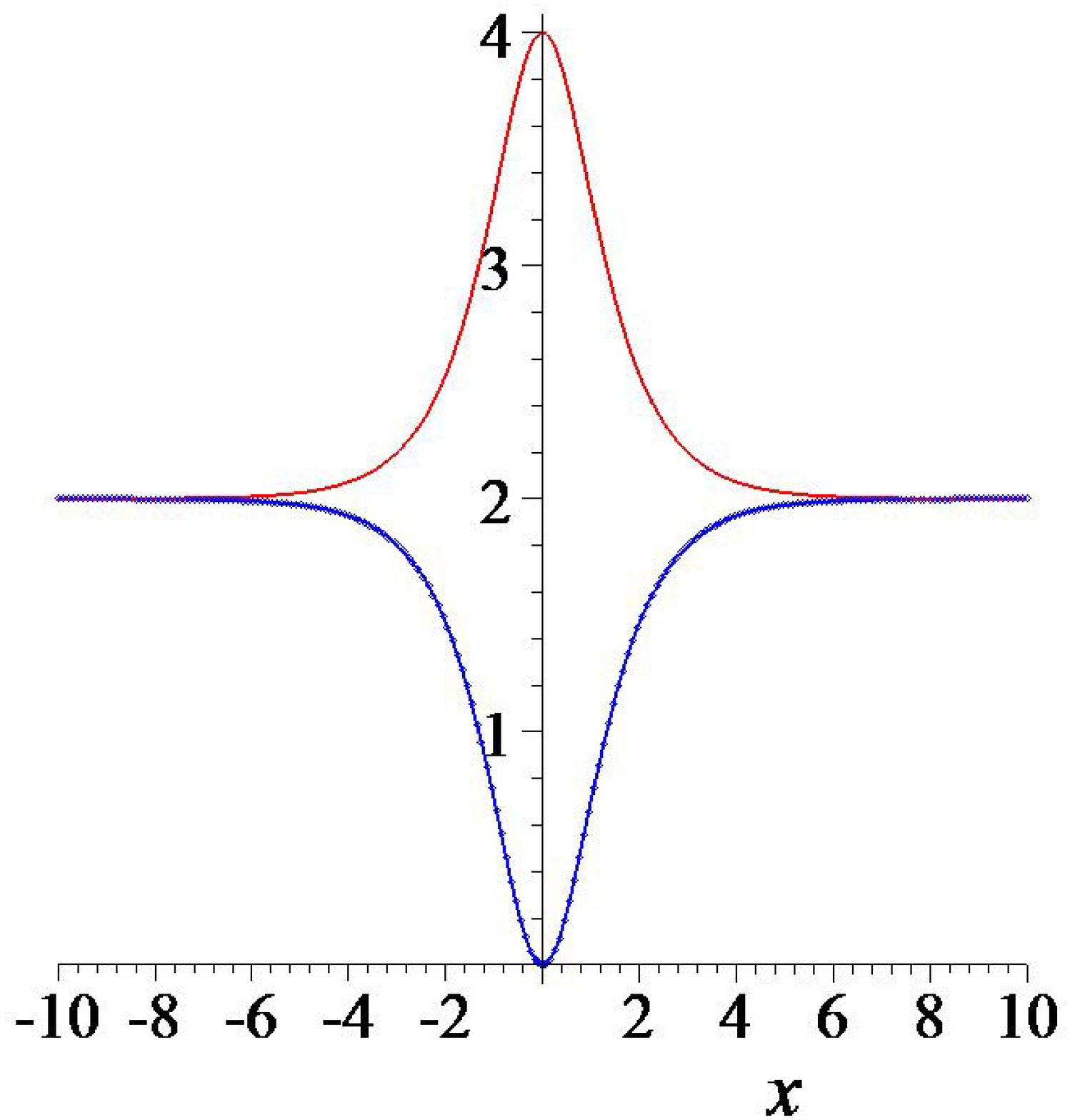}}
\caption{ Profiles of single-solitons generated from a periodic seed.
(a) A bright soliton with  $a=1,\,c=1$ and $\beta=0.5$. (b) A dark soliton
with $a=1,\,c=1$ and $\beta=-0.5$.
(c) The graphical superposition of  a bright soliton and a
 dark soliton at $t=0$.
 Both the bright soliton and the dark soliton achieve
 their peaks at the same time, and share the same non-vanishing boundary
 condition.}\label{fig.1_solitonfromnonzero}
\end{figure}


\begin{figure}[!htbp]
\centering
\raisebox{20 ex}{${|q_1^{[2]}|}\,$}\subfigure[]{\includegraphics[height=4cm,width=4cm]{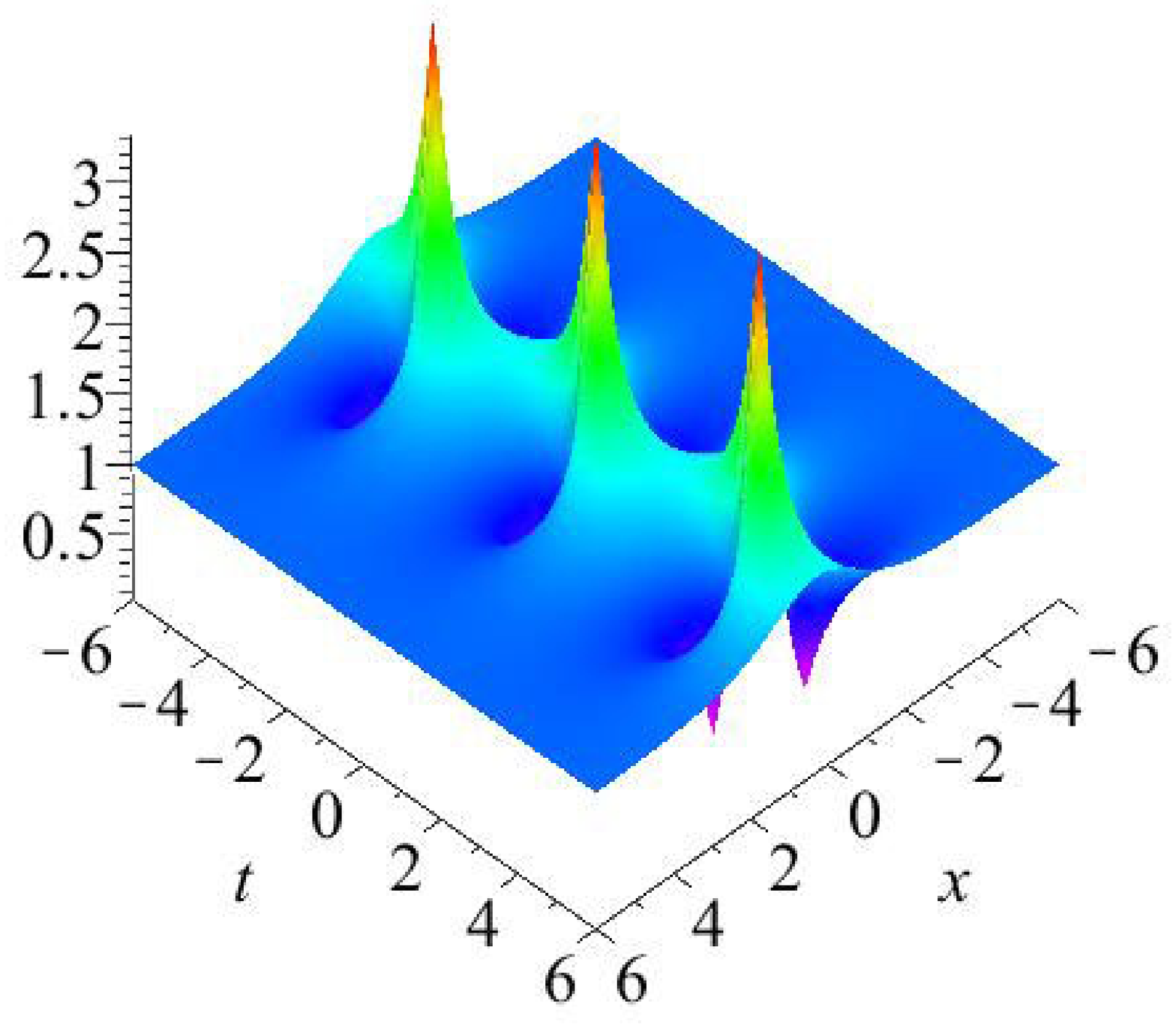}}\qquad
\raisebox{20 ex}{${|q_1^{[2]}|}\,$}\subfigure[]{\includegraphics[height=4cm,width=4cm]{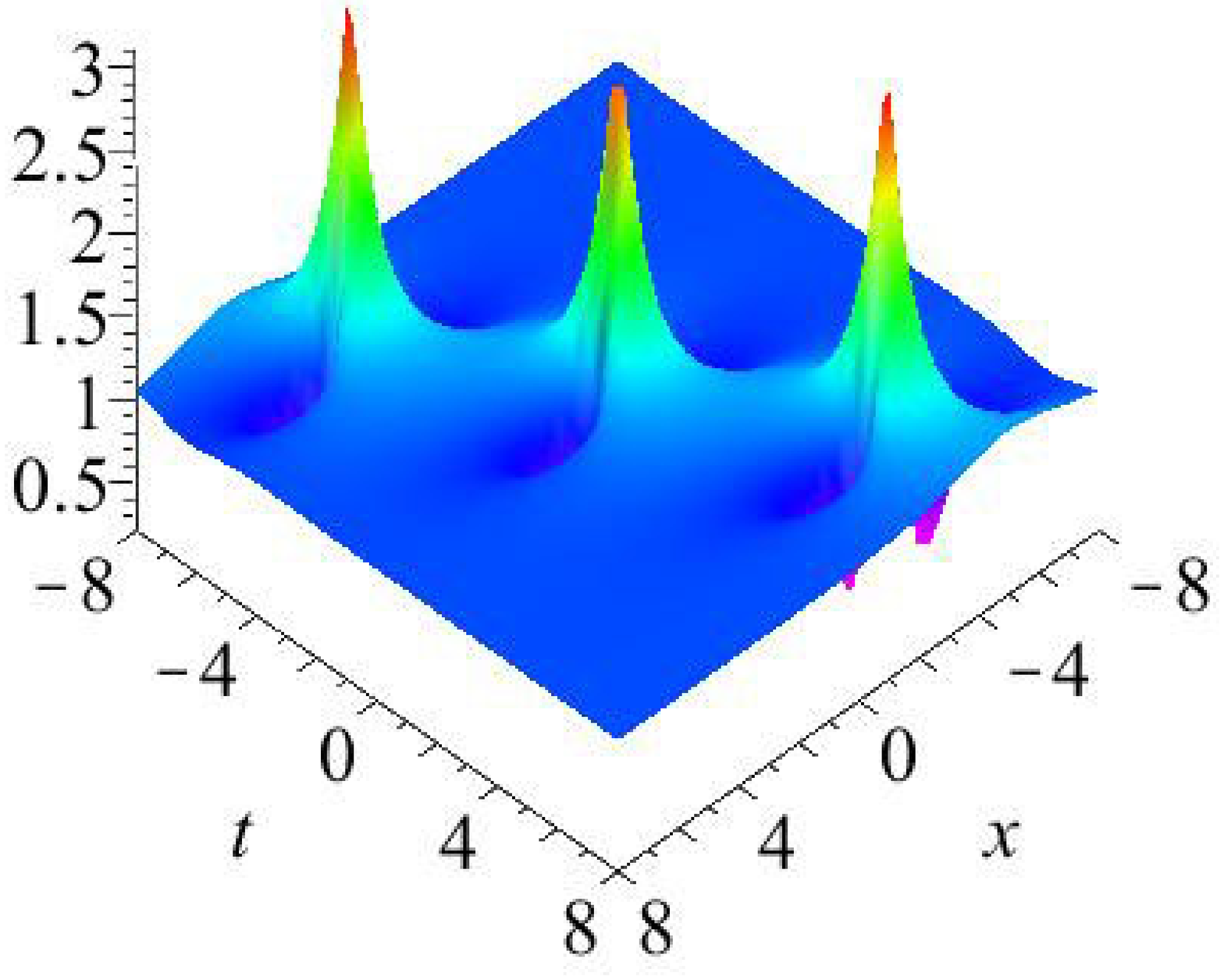}}\qquad
\raisebox{20 ex}{${|q_1^{[2]}|}\,$}\subfigure[]{\includegraphics[height=4cm,width=4cm]{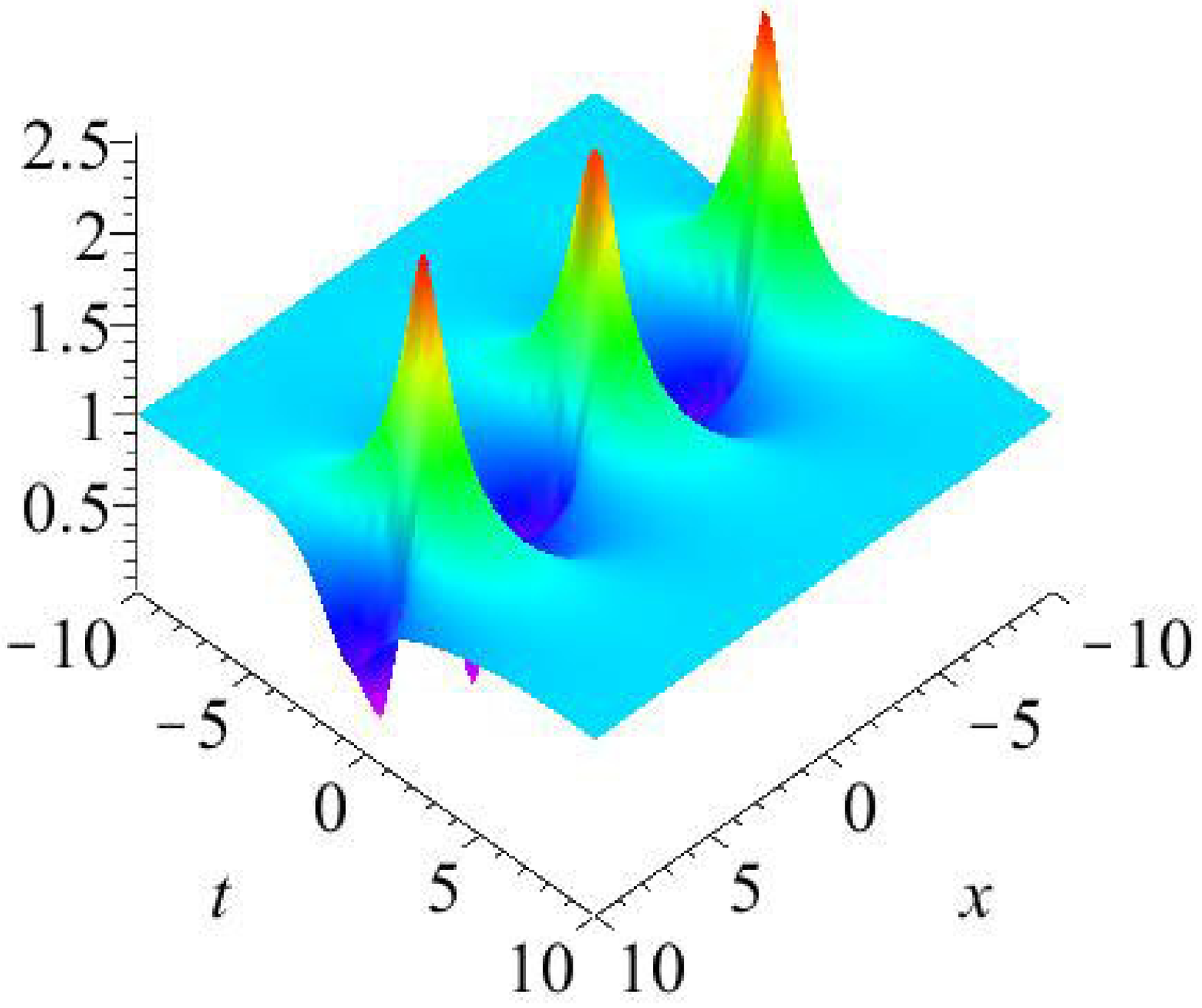}}\quad
\caption{The evolution of three breathers.
(a) A temporal periodic breather with $\alpha_1=0.6,\,\beta_1=0.6$ and $c=1$.
(b) A breather with a certain angle with $x$-axis and $t$-axis under
specific parameters $\alpha_1= 0.68, \beta_1= 0.55$ and $c = 1$.
(c) A spacial periodic breather with $\alpha_1=0.5,\,\beta_1=0.4$ and $c=1$.}\label{fig.breather}
\end{figure}


\begin{figure}[!htbp]
\centering
\raisebox{20 ex}{$|q_2^{[2]}|\,$}\subfigure[]{\includegraphics[height=5cm,width=5cm]{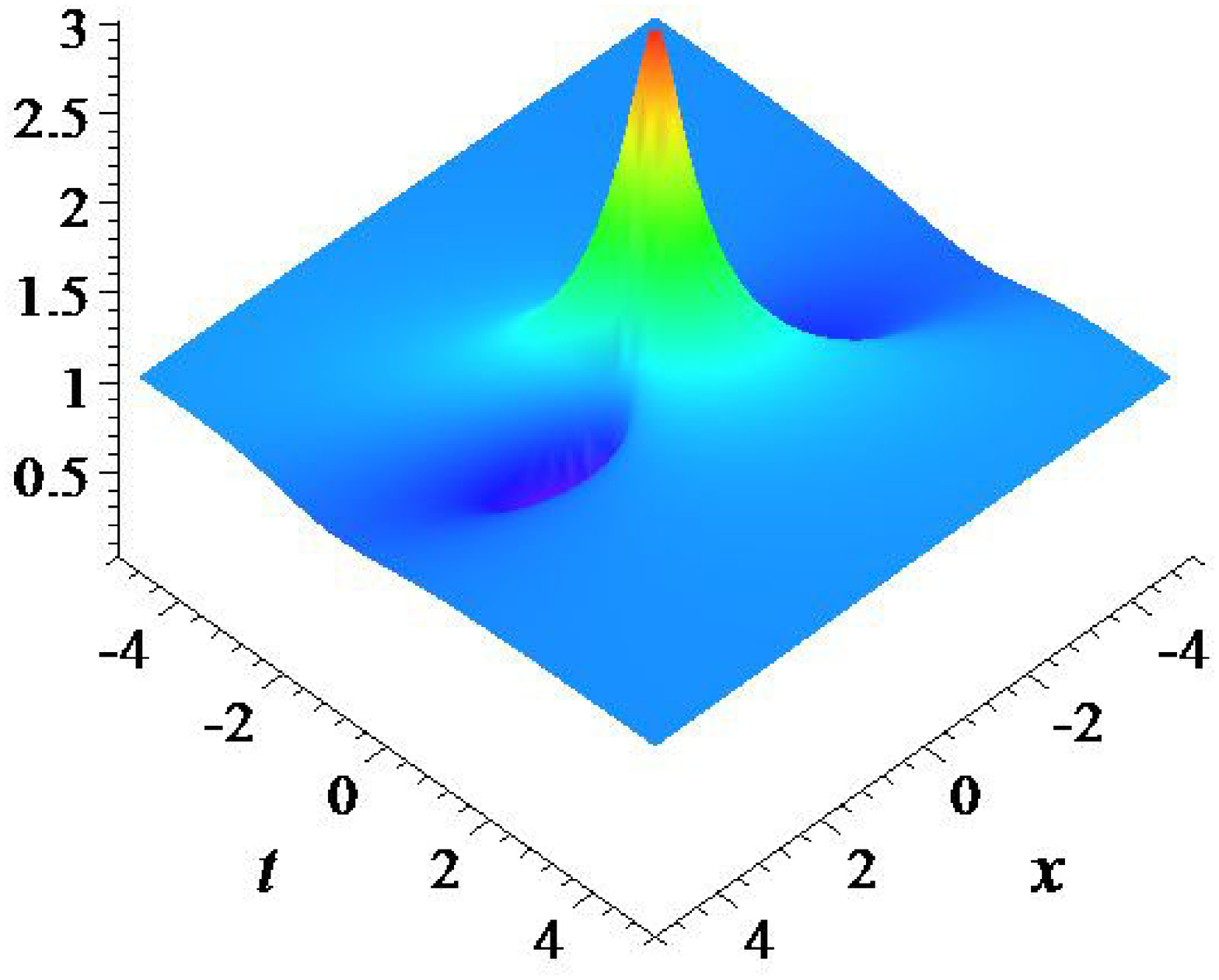}}
\qquad
\raisebox{20 ex}{$|q_2^{[2]}|\,$}\subfigure[]{\includegraphics[height=5cm,width=5cm]{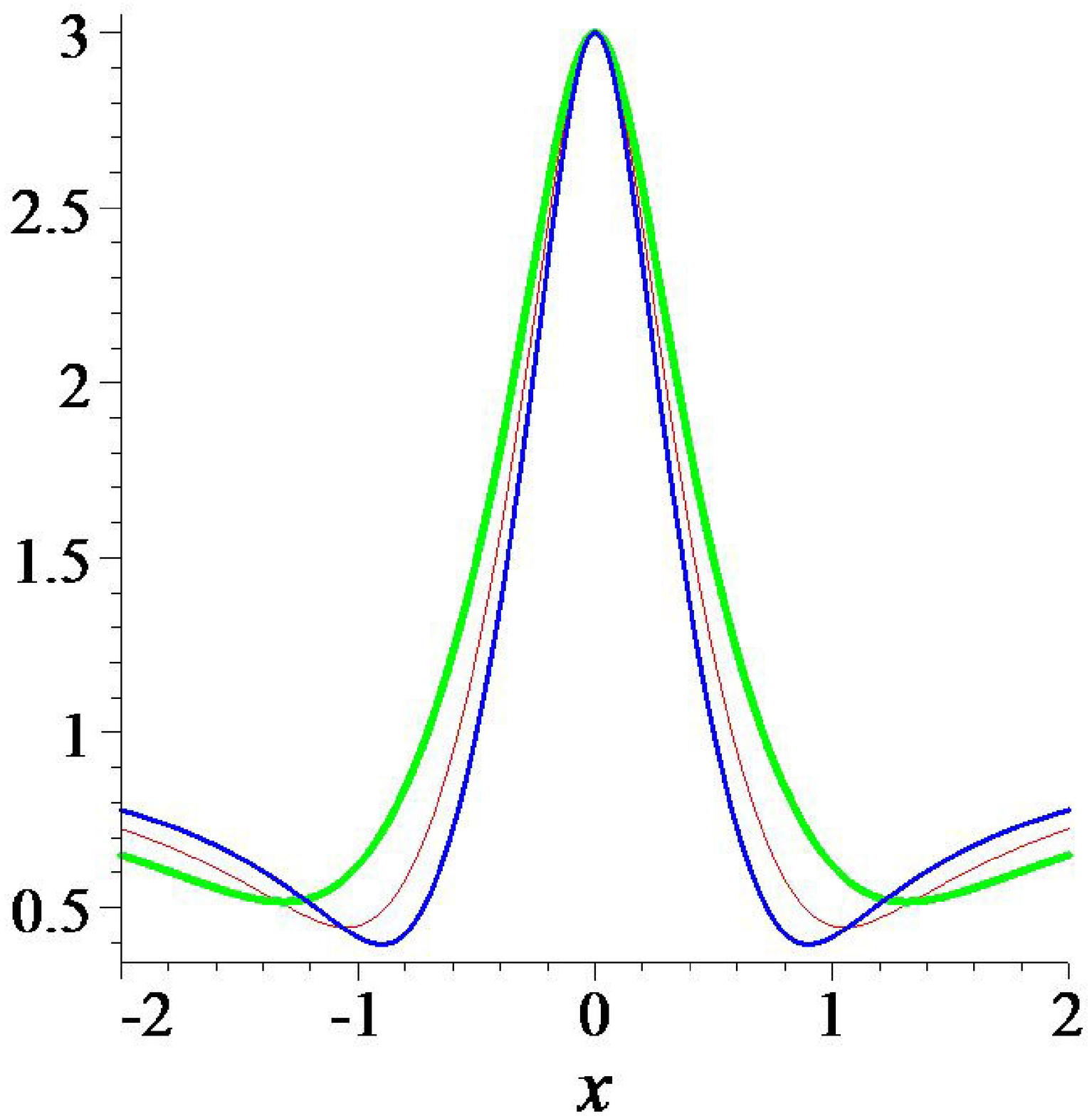}}
\caption{
Profile of the first order rogue wave.
(a) The evolution of rogue wave \eqref{1rw} on ($x,t$)-plane
with $\alpha_1=\frac{\sqrt{2}}{2}$  and $\beta_1=\frac{1}{2}$.
Its asymptotic height  is  $|q_{rw}^{[2]}|=1$,
the maximum amplitude is equal to $3$ and locates
at $(0,\,0)$, and the minimum amplitude occurs at
$(\pm\frac{2\sqrt{3}}{3},\,\mp\frac{\sqrt{3}}{9})$.
 (b) Set $\beta_1=\frac{1}{2}$ in \eqref{1rw}, the profile of the
 rogue wave  becomes steeper when the absolute
 value of $\alpha_1$ become smaller. $\alpha_1=\frac{\sqrt{3
}}{2},\,\frac{\sqrt{2}}{2}$ and $\frac{1}{2}$ for the green thick line, the red thin line, the blue bold line respectively.}\label{fig.1_rogue wave}
\end{figure}

\begin{figure}[!htbp]
\centering
\raisebox{20 ex}{$|q^{[4]}|\,$}\subfigure[]{\includegraphics[height=5cm,width=5cm]{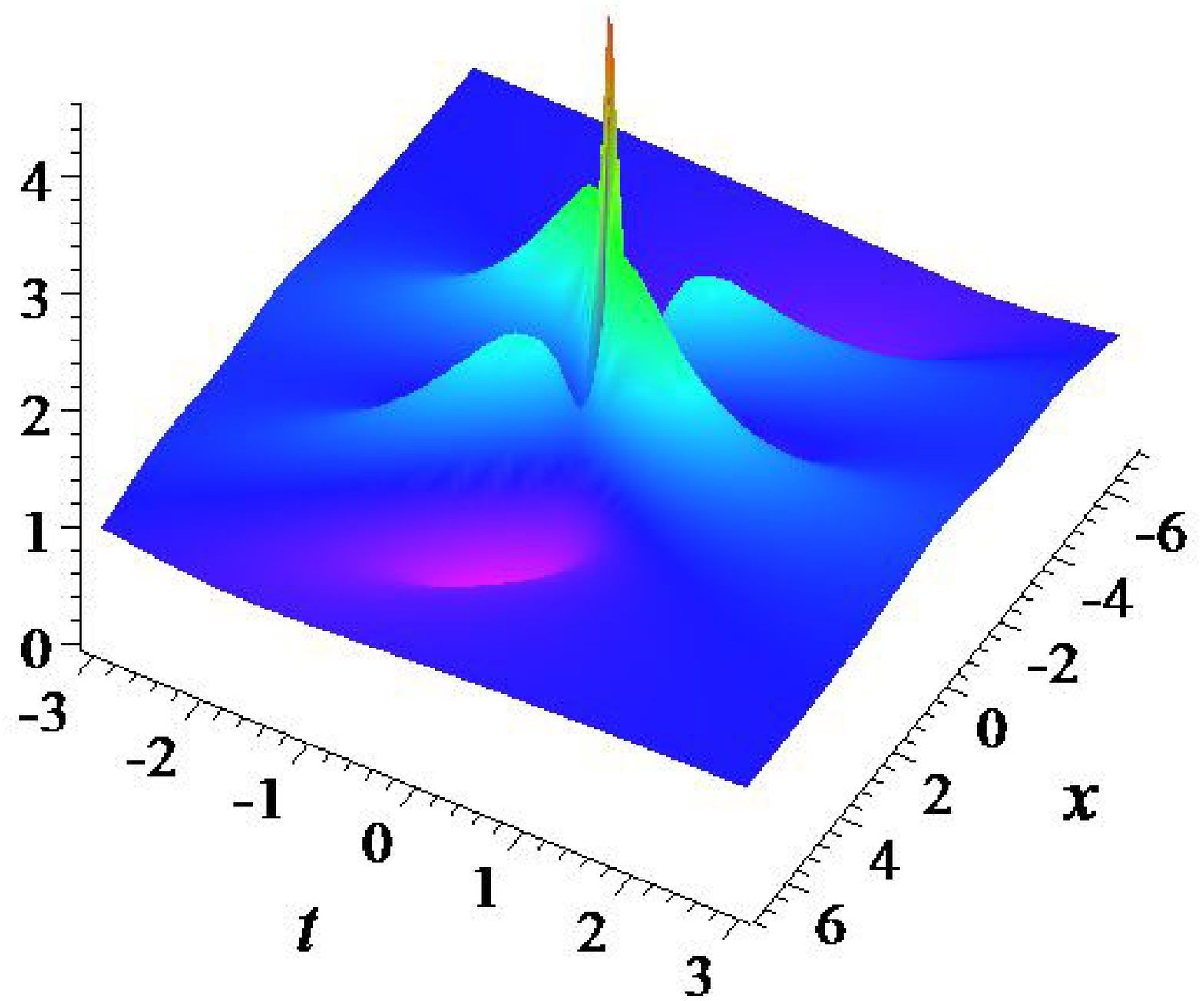}}
\qquad
\raisebox{20 ex}{$|q^{[4]}|\,$}\subfigure[]{\includegraphics[height=5cm,width=5cm]{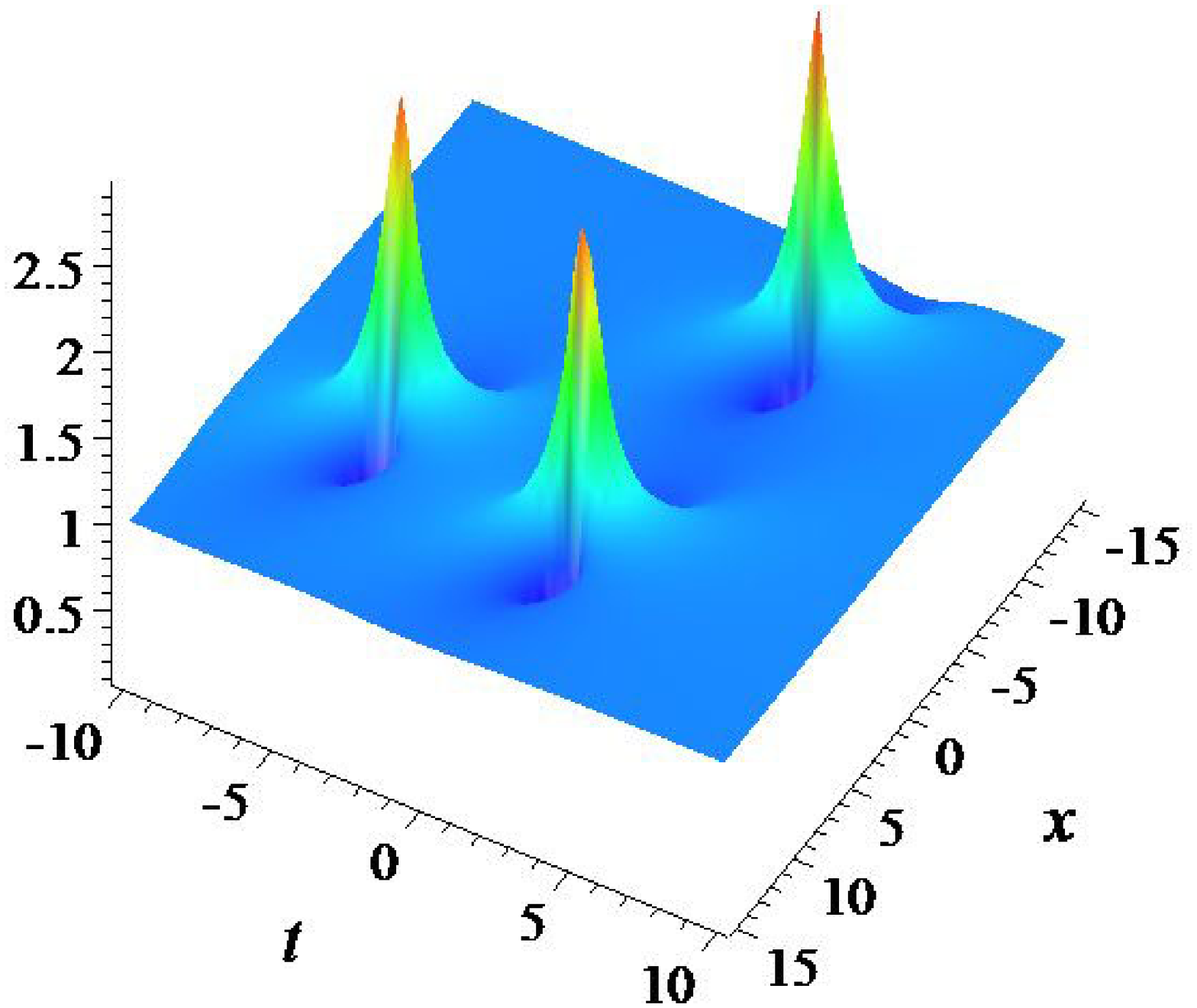}}
\caption{Two second order rogue waves in \eqref{2rw}.
(a) The fundamental pattern with $l_1=0$.
(b) The triangle pattern with $l_1=100$.}\label{fig.2_rw}
\end{figure}


\begin{figure}[!htbp]
\centering
\raisebox{20 ex}{$|q^{[6]}|\,$}\subfigure[]{\includegraphics[height=4cm,width=4cm]{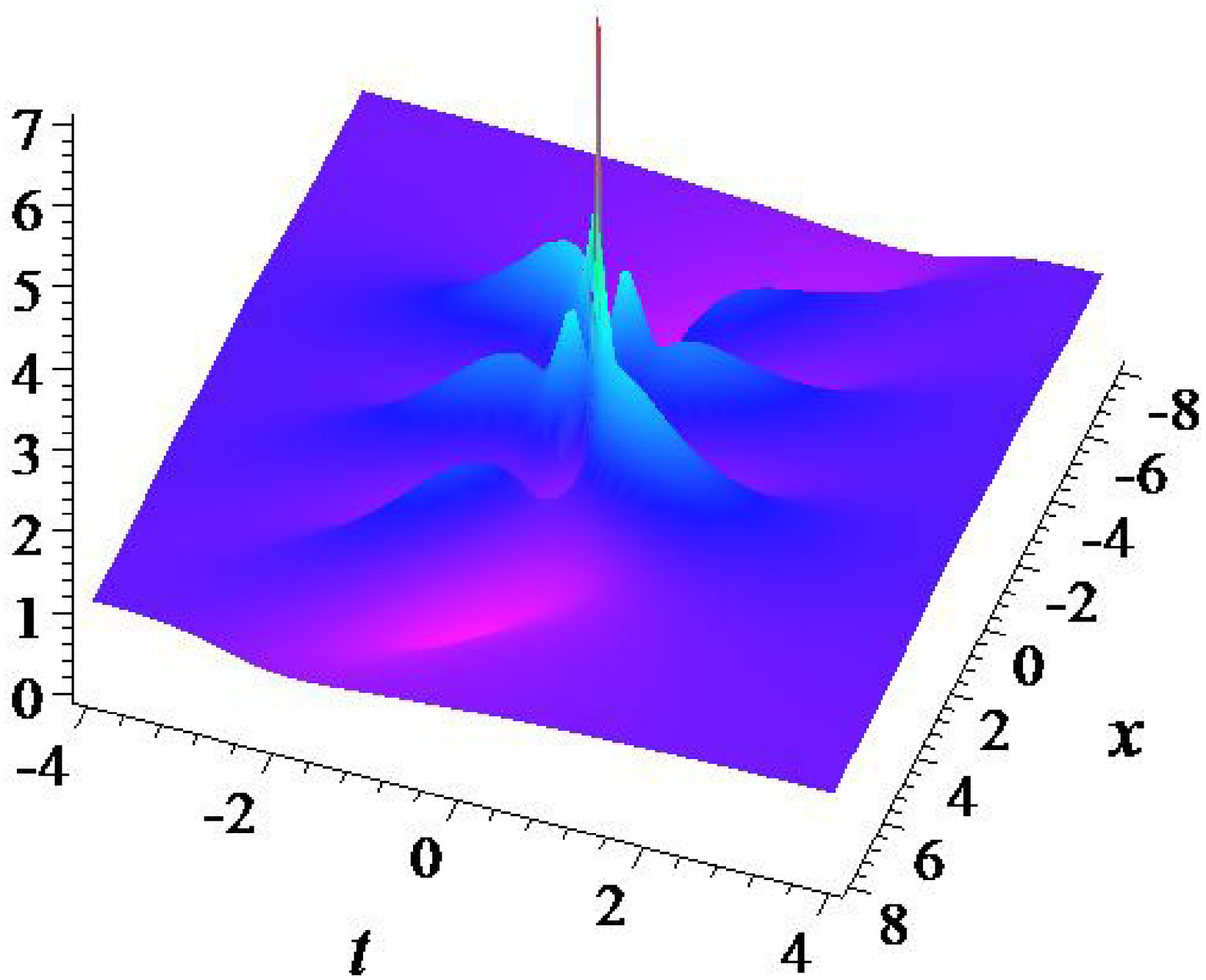}}
\quad
\raisebox{20 ex}{$|q^{[6]}|\,$}\subfigure[]{\includegraphics[height=4cm,width=4cm]{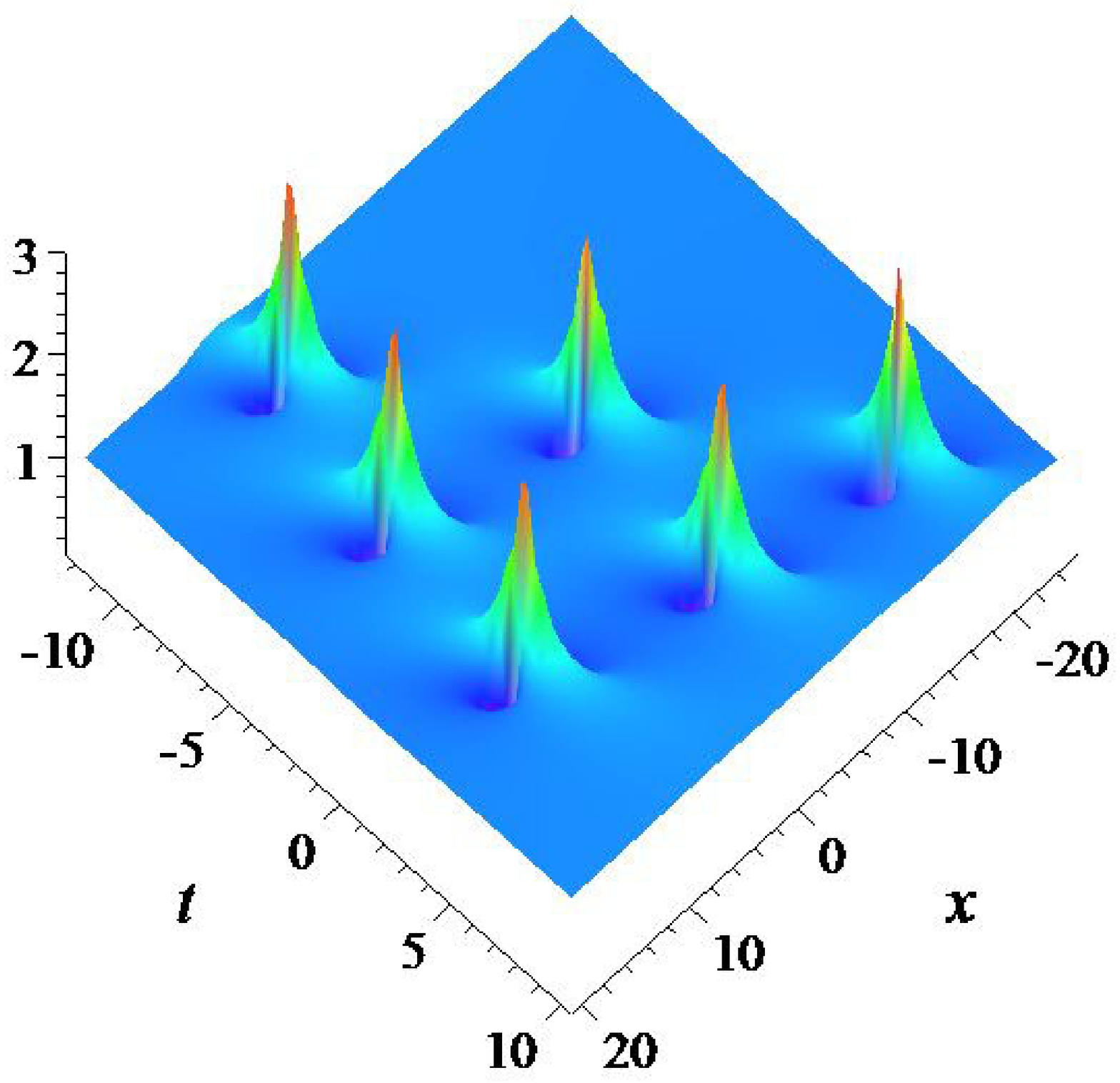}}
\quad
\raisebox{20 ex}{$|q^{[6]}|\,$}\subfigure[]{\includegraphics[height=4cm,width=4cm]{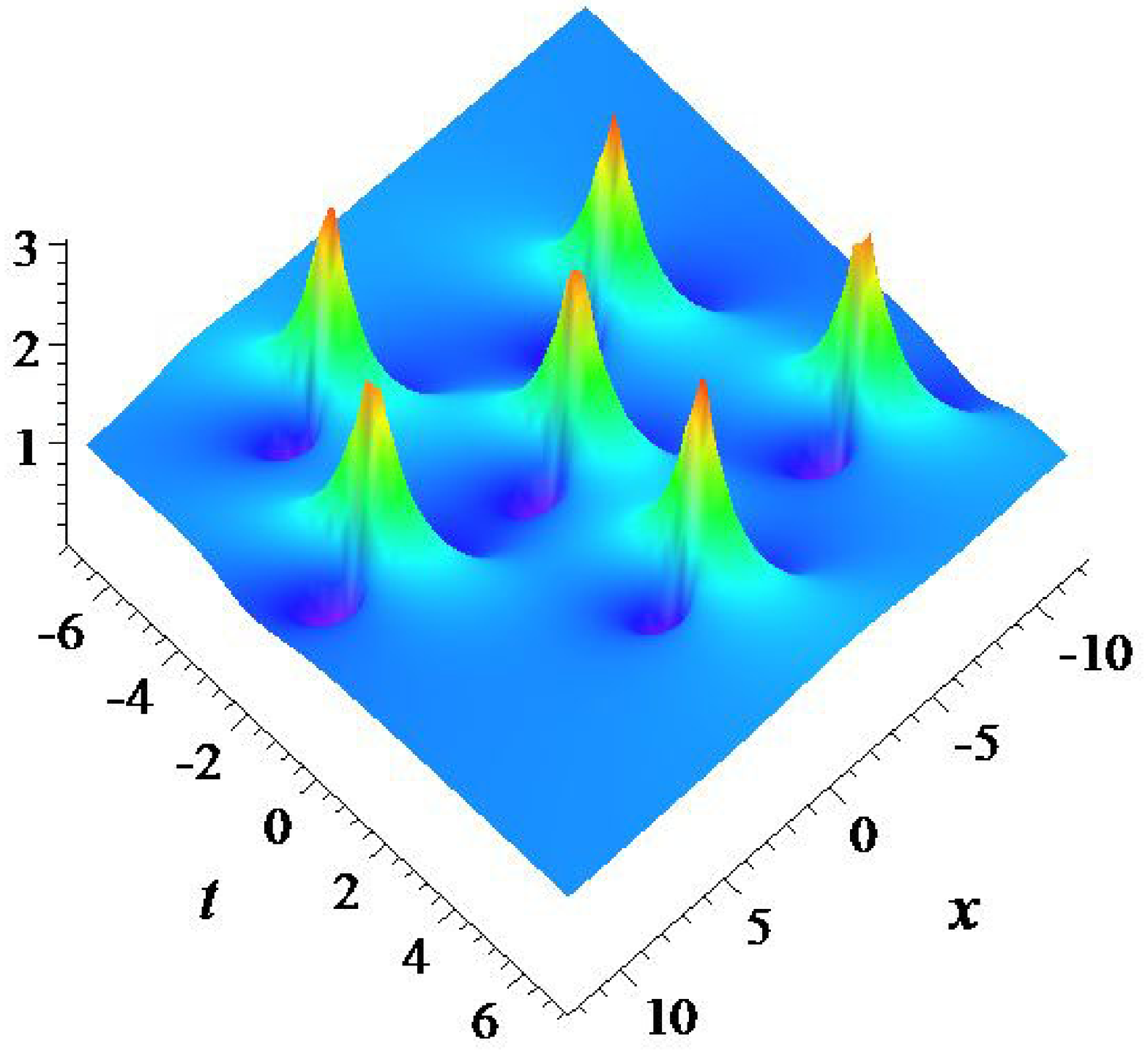}}
\caption{The third order rogue waves.
(a) The fundamental pattern with $l_0=l_1=l_2=0$;
(b) The triangular pattern with $l_1=100,\,l_0=l_2=0$;
(c) The circular pattern with $l_0=l_1=0,\,l_2=500$.}\label{fig.3_rw}
\end{figure}


\begin{figure}[!htbp]
\centering
\raisebox{20 ex}{$|q^{[8]}|\,$}\subfigure[]{\includegraphics[height=5cm,width=5cm]{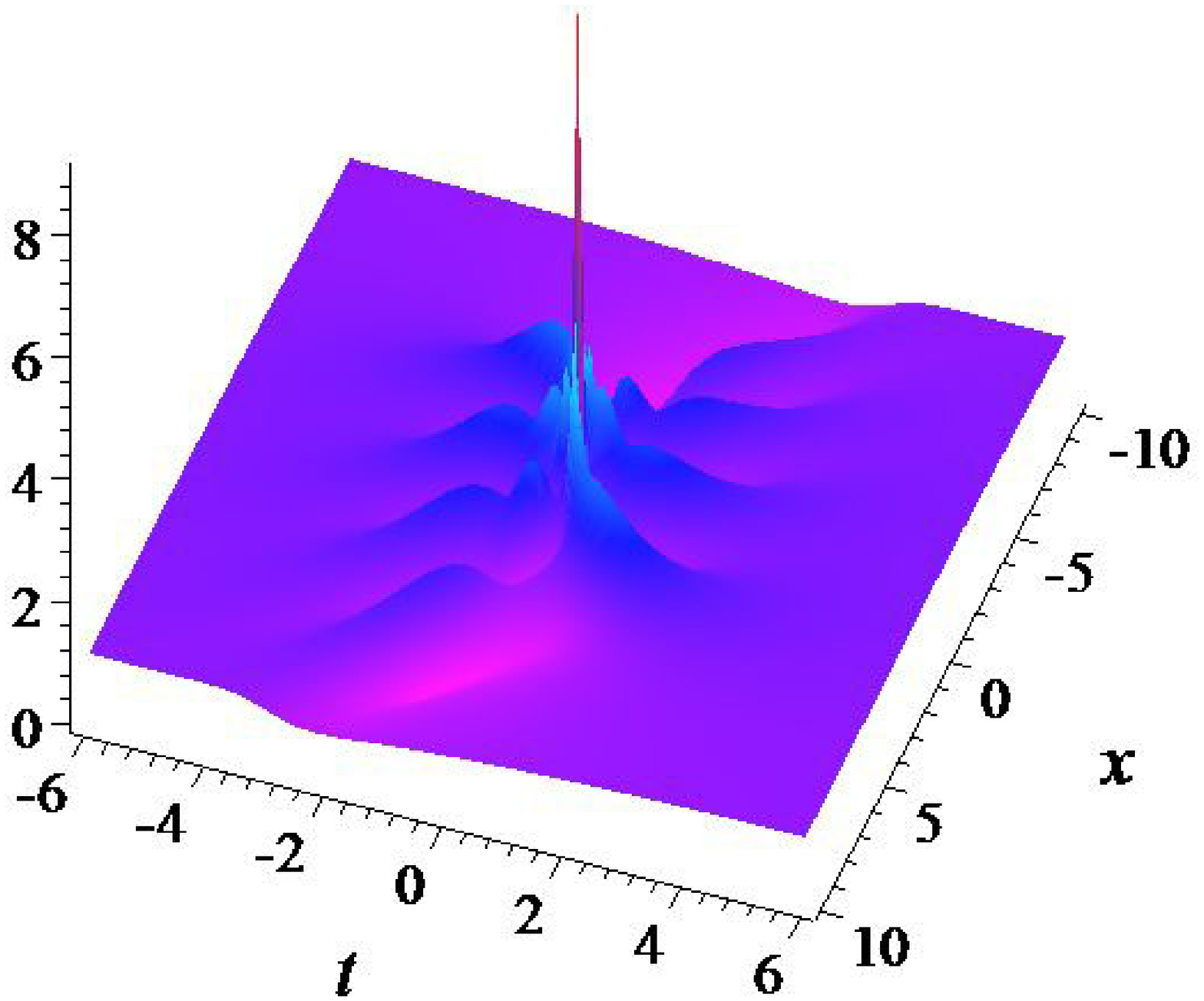}}
\quad
\raisebox{20 ex}{$|q^{[8]}|\,$}\subfigure[]{\includegraphics[height=5cm,width=5cm]{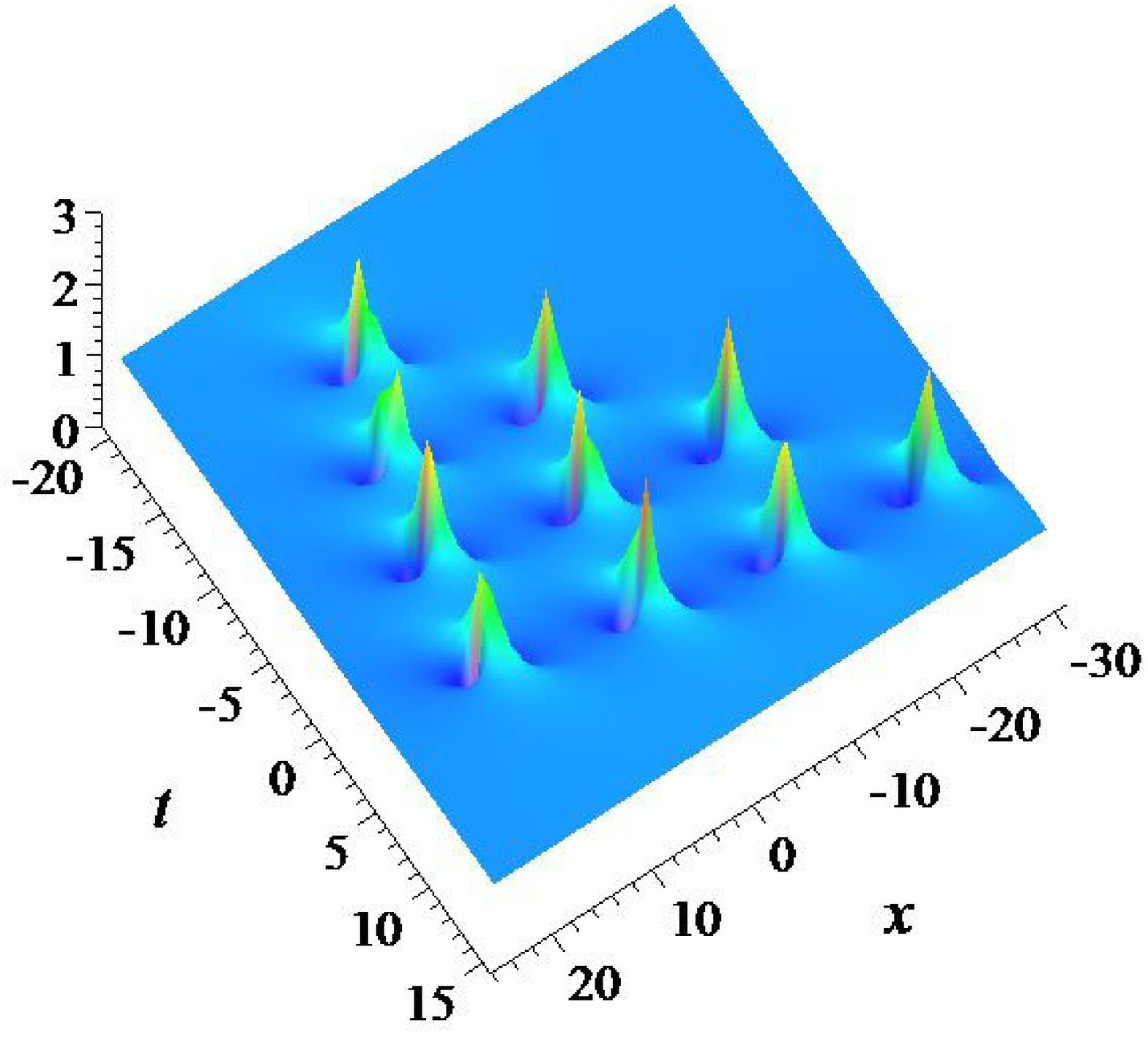}}
\\
\raisebox{20 ex}{$|q^{[8]}|\,$}\subfigure[]{\includegraphics[height=5cm,width=5cm]{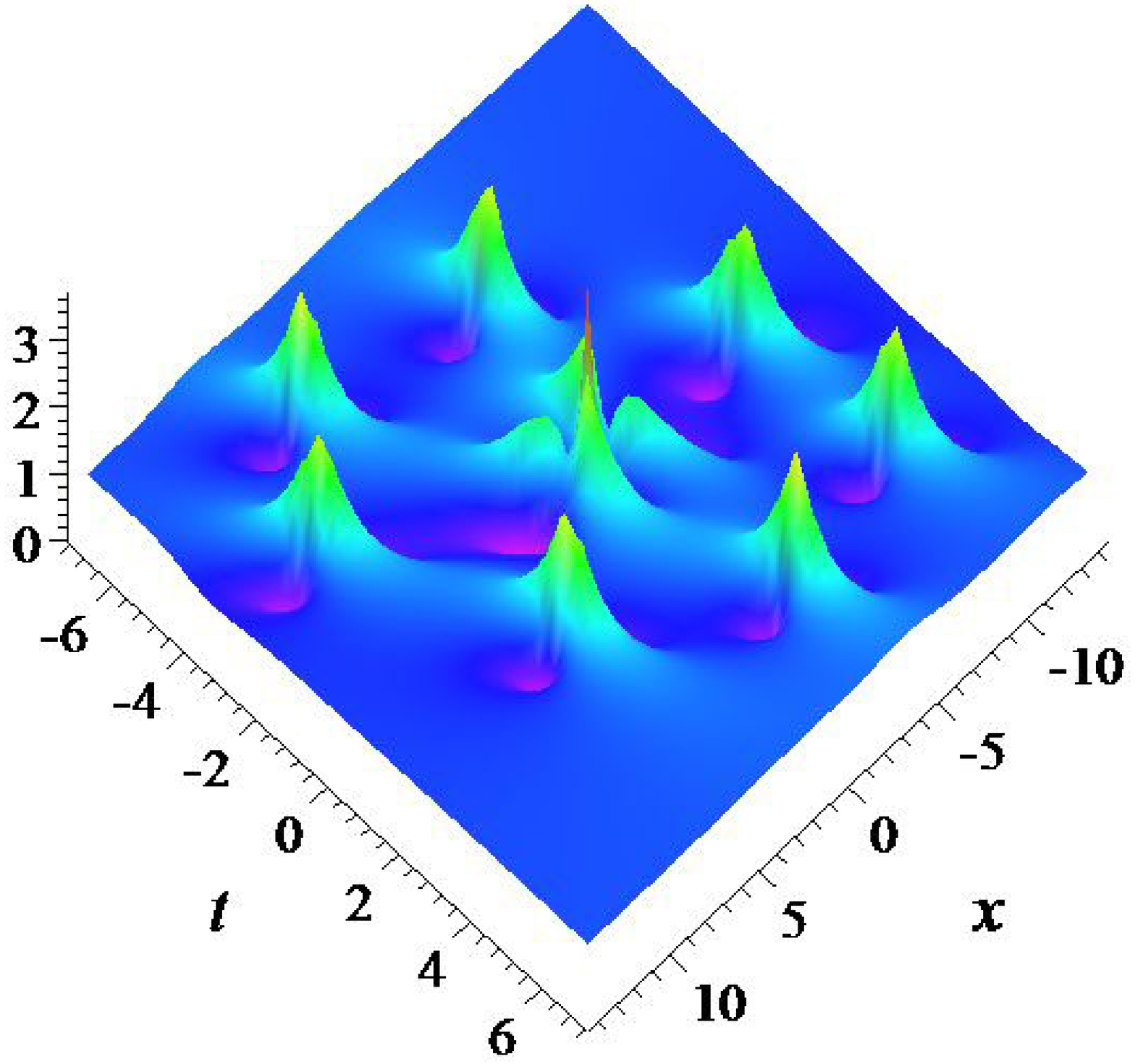}}
\quad
\raisebox{20 ex}{$|q^{[8]}|\,$}\subfigure[]{\includegraphics[height=5cm,width=5cm]{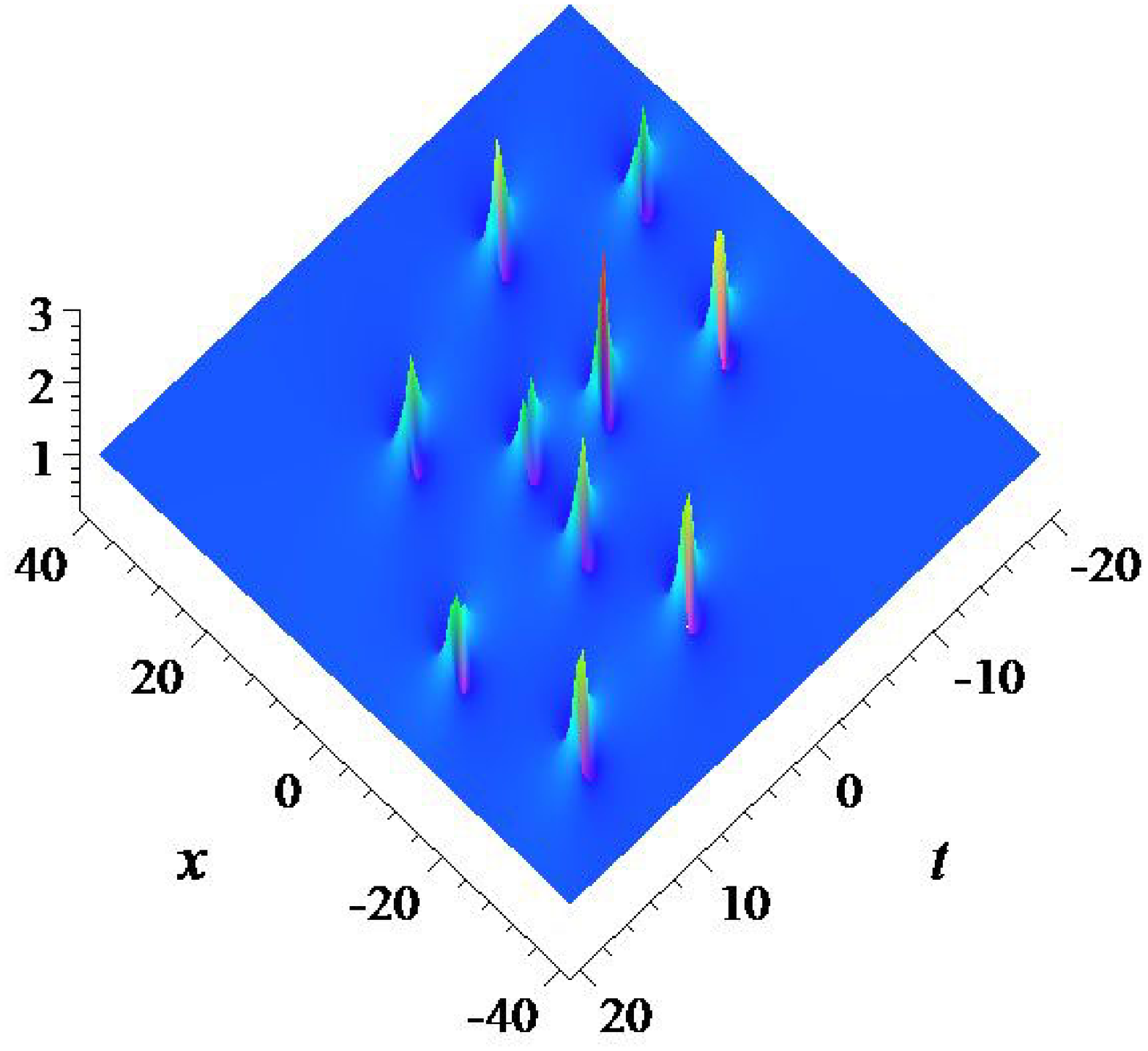}}
\caption{
The fourth order rogue waves.
(a) The fundamental pattern with $l_0=l_1=l_2=l_3=0$;
(b) The triangular pattern with $l_1=100,\,l_0=l_2=l_3=0$;
(c) The circular-fundamental pattern with $l_0=l_1=0=l_3=0,\,l_2=5000$;
(d) The circular-triangular pattern with
$l_0=l_2=0,l_1=50,\,l_3=5000000$.}\label{fig.4_rw}
\end{figure}


\begin{figure}[!htbp]
\centering
\raisebox{20 ex}{$|q^{[4]}|\,$}\subfigure[]{\includegraphics[height=5cm,width=5cm]{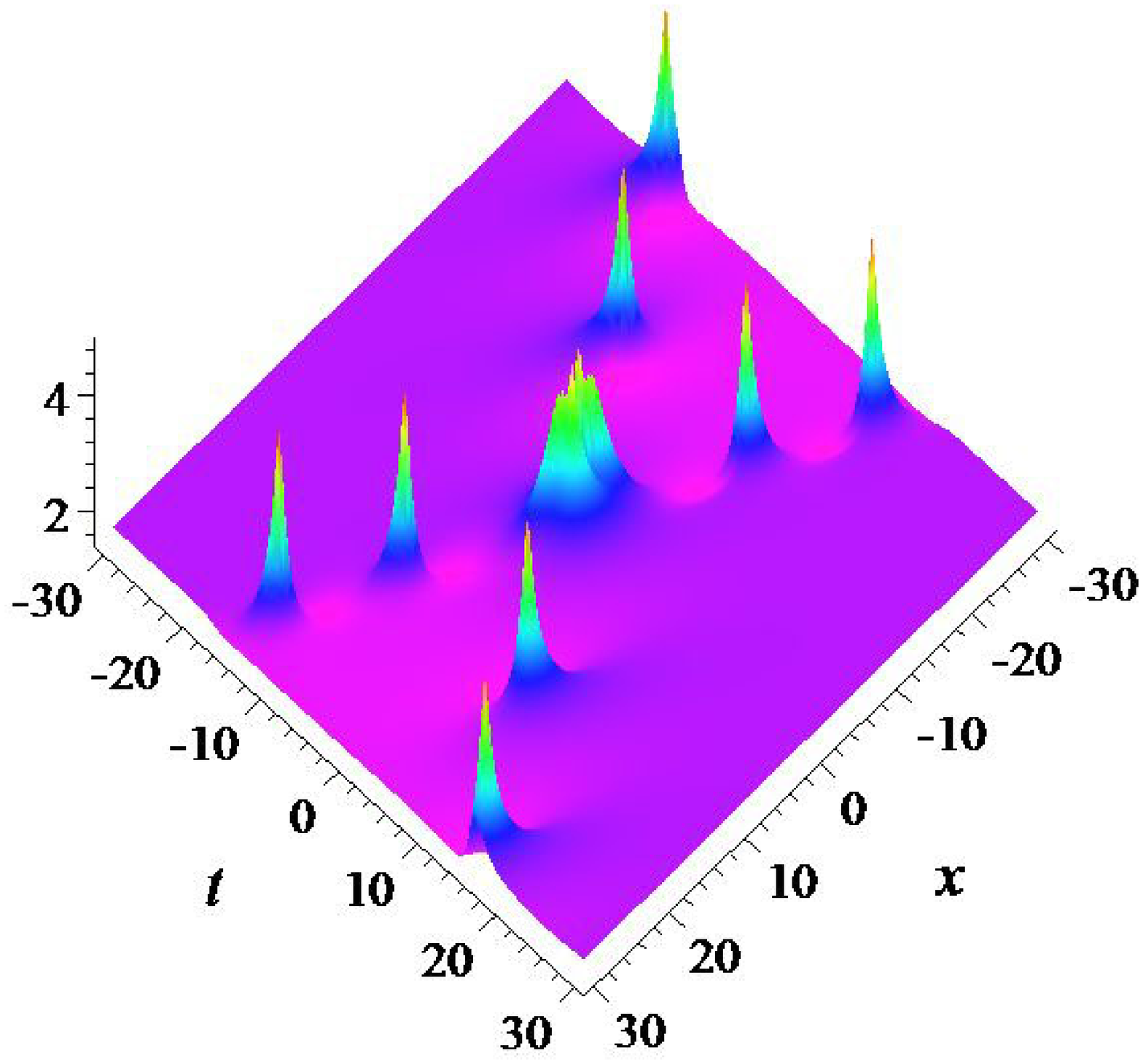}}
\qquad
\raisebox{20 ex}{$|q^{[4]}|\,$}\subfigure[]{\includegraphics[height=5cm,width=5cm]{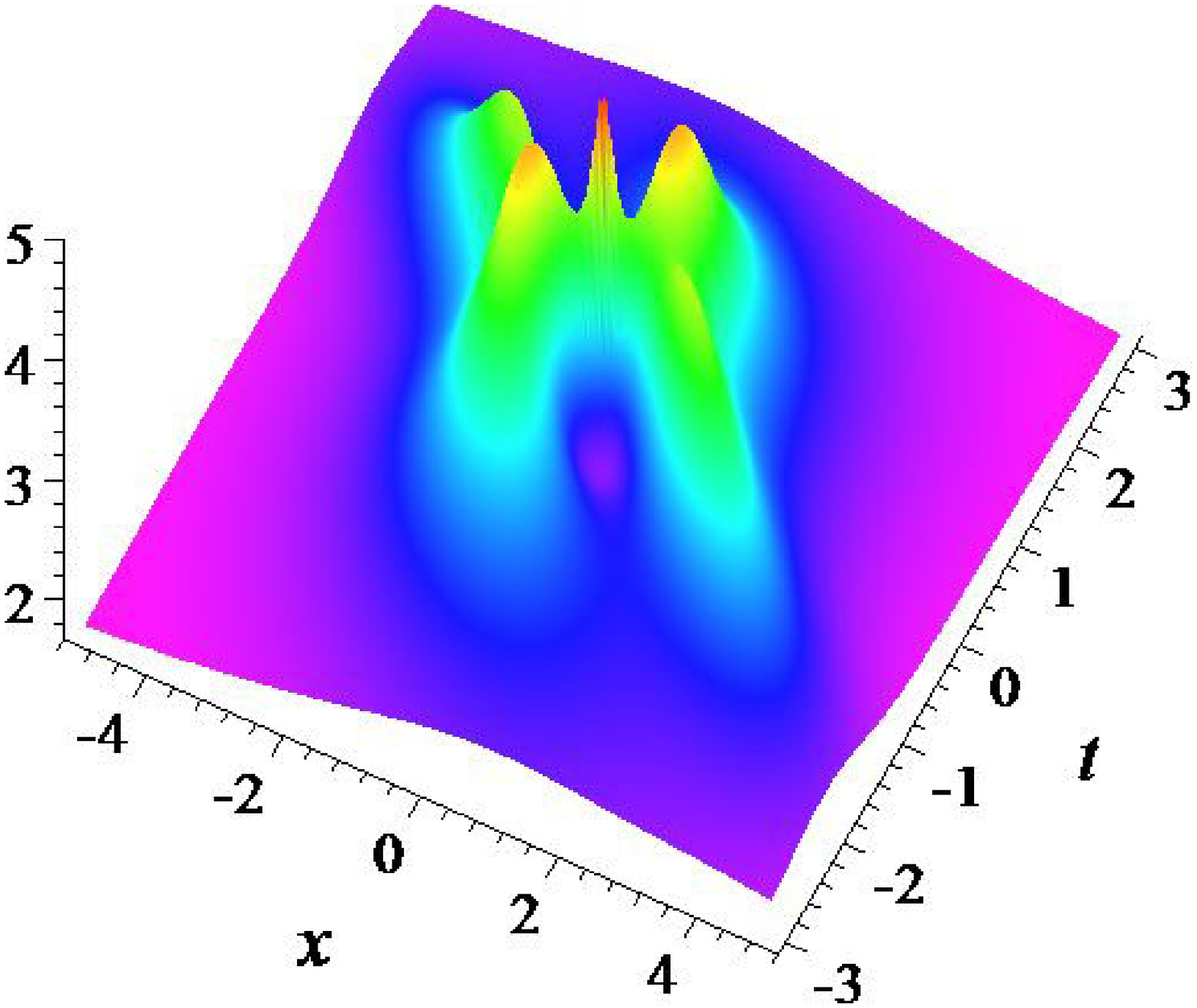}}
\caption{The ``fundamental pattern" of the second order breather.
(a) The  intersection of two breather solutions without phase shift;
(b) The central profile of the left panel, which is very similar
to the fundamental pattern of the second-order rogue wave in fig.
\ref{fig.2_rw}(a). } \label{fig.2_breather_1}
\end{figure}

\begin{figure}[!htbp]
\centering
\raisebox{20 ex}{$|q^{[4]}|\,$}\subfigure[]{\includegraphics[height=5cm,width=5cm]{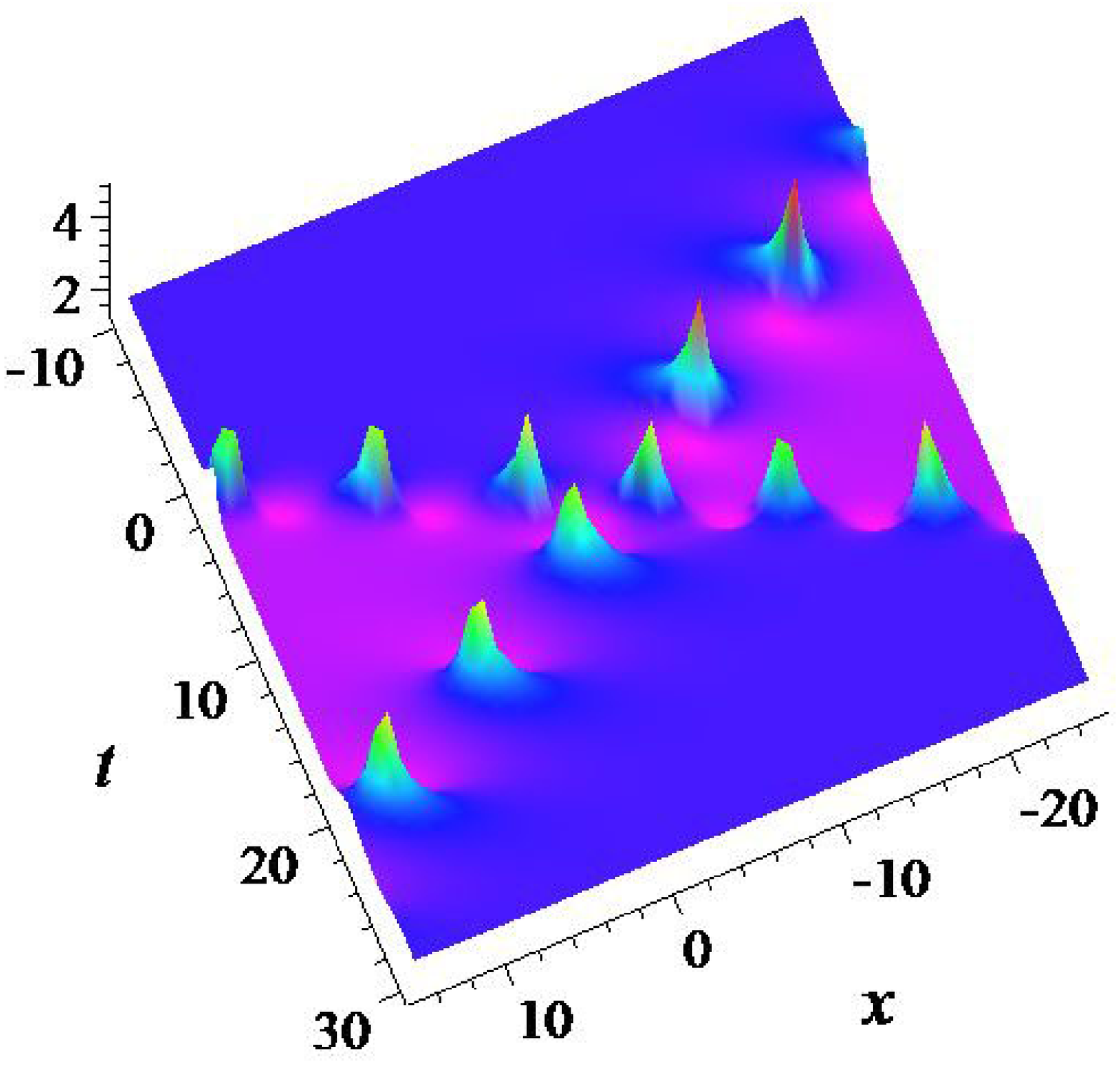}}
\qquad
\raisebox{20 ex}{$|q^{[4]}|\,$}\subfigure[]{\includegraphics[height=5cm,width=5cm]{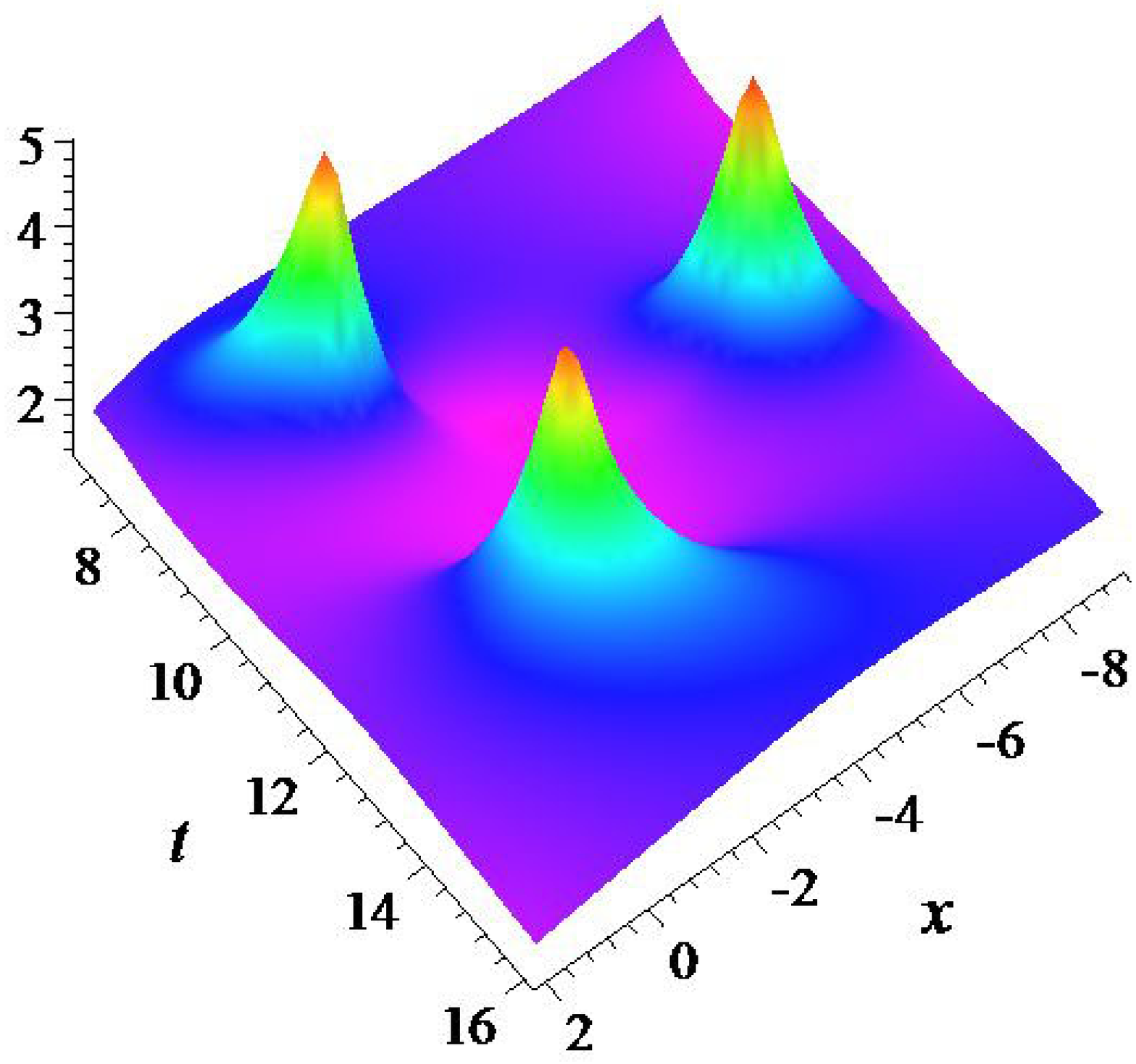}}
\caption{The ``triangle structure" of the second order breather.
(a) The intersection of two breather solutions with phase shift.
(b) The central profile of the left panel, which  is very similar
to the triangular pattern of the second order rogue wave in fig.
 \ref{fig.2_rw}(b).}\label{fig.2_breather_2}
\end{figure}



\begin{thebibliography}{200}

\bibitem{PS20490}H. H. Chen, Y. C. Lee and C. S. Liu, Integrability of
nonlinear Hamiltonian systems by inverse scattering
method. Phys. Scr. 20, 490--492(1979).


\bibitem{JPSJ49813}A. Nakamura and H. H. Chen, Multi-soliton
solutions of a derivative nonlinear Schr\"odinger equation, J.  Phys. Soc. Jpn.
49, 813--816(1980).

\bibitem{JPSJ641519}S. Kakei, N. Sasa and J. Satsuma, Bilinearization of a
 generalized derivative nonlinear Schr\"odinger equation, J.  Phys. Soc. Jpn. 64,
  1519--1523(1995).

\bibitem{PLA25753}T. Tsuchida and M. Wadati, New integrable systems of derivative nonlinear Schr\"odinger equations with
multiple components. Phys. Lett. A 257, 53--64(1999).

\bibitem{jhlee}J. H. Lee, C. K. Lin, The behaviour of solutions of NLS
equation of derivative type in the semiclassical limit,
Chaos, Solitons and Fractals 13, 1475-1492(2002).

\bibitem{JMP19798}D. J. Kaup and A. C. Newell,
An exact solution for a derivative nonlinear Schr\"odingere quation,
J. Math. Phys. 19, 798-801(1978).

\bibitem{JPSJ52394}M. Wadati and K. Sogo,
 Gauge transformations in soliton theory, J. Phys. Soc. Jpn. 52, 394--398(1983).

\bibitem{JMP253433}A. Kundu, Landau-Lifshitz and higher-order nonlinear systems
gauge generated from nonlinear Schr\"odinger-type equations, J. Math. Phys. 25,
 3433--3438(1984).
\bibitem{ablowitzbook}M. J. Ablowitz and P. A. Clarkson,
Solitons, Nonlinear evolution equations and inverse scattering
(Cambridge: Cambridge University Press, 1991).

\bibitem{rogister}A. Rogister, Parallel propagation of nonlinear low frequency waves in high $\beta$
plasma, Phys. Fluids 14, 2733-2739(1971).

\bibitem{molhus}E. Mj${\O}$lhus,  On the modulational instability of
hydromagnetic waves parallel to the magnetic field,
 J. Plasma Physics 16, 321--334(1976).

\bibitem{mio}K. Mio, T. Ogino, K. Minami and S. Takeda, Modified
nonlinear Schr\"odinger equation for Alfv\'{e}n waves propagating
along the magnetic field in cold plasmas, J. Phys. Soc. Jpn. 41,
265--271(1976).

\bibitem{johnson}R. S. Johnson, On the modulation of water waves in the
neighbourhood of $kh$ $\approx $ 1.363, Proc. R. Soc. Lond. A 357,
131--141(1977).

\bibitem{tzoar}N. Tzoar and M. Jain, Self-phase modulation in
long-geometry optical waveguides, Phys. Rev. A 23, 1266--1270(1981).

\bibitem{anderson}D. Anderson and M. Lisak, Nonlinear asymmetric
self-phase modulation and self-steepening of pulses  in long optical
waveguides, Phys. Rev. A 17, 1393--1398(1983).

\bibitem{buti}F. Verheest and B. Buti, Parallel solitary Alfv\'{e}n waves in
 warm multispecies beam-plasma systems. Part 1, J. Plasma Physics 47,
 15--24(1992).

\bibitem{PRA76021802}J. Moses, B. A. Malomed, and F. W. Wise,
Self-steepening of ultrashort optical pulses without self-phase-
modulation, Phys. Rev. A 76, 021802(R)(2007), 4pp.




\bibitem{JPSJ461008}T. Kawata, N. Kobayashi and H. Inoue,
Soliton solution of the derivative nolinear Schr\"odinger equation,
J. Phys. Soc. Jpn. 46, 1008--1015(1979).


\bibitem{JPA4013607}G. Q. Zhou and N. N. Huang, An N-soliton solution to the
DNLS equation based on revised inverse scattering transform, J. Phys. A: Math. Theor. 40,
13607--23(2007).

\bibitem{PRE69066604}X. J. Chen and W. K. Lam, Inverse scattering transform for the
 derivative nonlinear Schr\"odinger equation with nonvanishing boundary conditions,
  Phys. Rev. E 69, 066604(2004).


\bibitem{JPSJ68355}K. Imai, Generlization of Kaup-Newell inverse scattering
formulation and Darboux transformation  J. Phys. Soc. Jpn. 68, 355--359(1999).

\bibitem{JPA361931}H. Steudel, The hierarchy of multi-soliton solutions of
 the derivative nonlinear Schr\"odinger equation, J. Phys. A: Math. Gen. 36,
 1931--1946(2003).


\bibitem{JPA44305203}S. W. Xu, J. S. He, and L. H. Wang,
The Darboux transformation of the derivative nonlinear Schr\"odinger
equation, J. Phys. A: Math. Theor. 44, 305203(2011), 22pp.

\bibitem{guolingliudnlsI}B. L. Guo, L. M. Ling, Q. P. Liu,
High-order solutions and generalized Darboux transformations of
derivative nonlinear Schr\"odinger equations, Stud. Appl. Math. 130,
317--344(2013).

\bibitem{PR164312}F. DeMartini, C. H. Townes, T. K. Gustafson, and
P. L. Kelley, Self-steepening of light pulses, Phys. Rev. 164, 312--323(1967).

\bibitem{PRL31422} D. Grischkowsky, E. Courtens, and J. A.
Armstrong, Observation of self-steepening of optical pulses with
possible shock formation, Phys. Rev. Lett. 31, 422--425(1973).

\bibitem{PRL97073903}J.  Moses and F. W. Wise, Controllable self-steepening of
 ultrashort pulses in quadratic nonlinear media, Phys. Rev. Lett. 97, 073903(2006), 4pp.


















\bibitem{PLA373675}N. Akhmediev, A. Ankiewicz,
 and M. Taki, Waves that appear from nowhere and disappear without
  a trace, Phys. Lett. A 373, 675--678(2009).

\bibitem{arxivhe}J. S. He, H. R. Zhang, L. H. Wang. K. Porsezian, and A. S. Fokas,
A generating mechanism for higher order rogue waves,  Phys. Rev. E
87, 052914(2013),10pp.


\bibitem{nature2007}
D. R. Solli, C. Ropers, P. Koonath,  B. Jalali, Optical rogue waves,
Nature 450, 1054--1057(2007).

\bibitem{NP6790}B. Kibler,  J. Fatome, C. Finot, G. Millot,  F. Dias, G. Genty,  N. Akhmediev and J. M. Dudley, The
Peregrine soliton in nonlinear fibre optics, Nat. Phys. 6, 790--795(2010).

\bibitem{PRL106204502}A. Chabchoub, N. P. Hoffmann and N. Akhmediev,
Rogue wave observation in a water wave tank, Phys. Rev. Lett. 106,
204502(2011), 4pp.

\bibitem{PRL107255005}H. Bailung, S. K. Sharma, and Y. Nakamura, Observation of Peregrine solitons in a multicomponent
plasma with negative ions, Phys. Rev. Lett. 107, 255005(2011), 4pp.

\bibitem{Matveev_1973_217}V. B. Matveev, Darboux transformation and explicit solutions of the Kadomtcev-Petviaschvily equation, depending on functional parameters. Lett. Math. Phys. 3, 213-216 (1979).
\bibitem{Matveev_1973_219}V. B. Matveev, Darboux transformation and the explicit solutions of differential-difference and difference-difference evolution equations I, Lett. Math. Phys. 3, 217-222 (1979).
\bibitem{Matveev_1973_425}V. B. Matveev, Salle M A. Differential-difference evolution equations. II (Darboux transformation for the Toda lattice). Lett. Math. Phys. 3, 425-429 (1979).
\bibitem{Matveev_1973_503}V. B. Matveev, Some comments on the rational solutions of the Zakharov-Schabat equations. Lett. Math. Phys. 3, 503-512 (1979).
\bibitem{Peregrine1983}  D. H. Peregrine, Water waves, Nonlinear
Schr\"odinger equations and their solutions, J. Austral. Math. Soc.
Ser. B 25, 16--43(1983).

\bibitem{Akhmediev_1985_894}N. Akhmediev, V. Eleonsky and N. Kulagin, Generation of periodic trains of picosecond pulses in an optical fiber: exact solutions, Sov. Phys. JETP, 62, 894-899 (1985).
\bibitem{Akhmediev_2009_26601}N. Akhmediev, A. Ankiewicz and J. M. Soto-Crespo, Rogue waves and rational solutions of the nonlinear Schr\"{o}dinger equation, Phys. Rev. E 80, 026601 (2009), 9pp.
\bibitem{Ankiewicz_2010_122002}A. Ankiewicz, P. Clarkson and N. Akhmediev, Rogue waves, rational solutions, the patterns of their zeros and integral relations, J. Phys. A: Math. Theor. 43, 122002 (2010) 10pp.
\bibitem{Kedziora_2011_56611}D. J. Kedziora, A. Ankiewicz and N. Akhmediev, Circular rogue wave clusters, Phys. Rev. E 84, 056611(2011), 7pp.
\bibitem{akhmediev2010prehiota}A. Ankiewicz, J. M. Soto-Crespo and N. Akhmediev,
Rogue waves and rational solutions of the Hirota equation, Phys.
Rev. E 81, 046602(2010), 8pp.
\bibitem{PRE85026601}Y. S. Tao and J. S. He, Multisolitons, breathers,
and rogue waves for the Hirota equation generated by the Darboux
transformation, Phys. Rev. E 85, 026601(2012), 8pp.
\bibitem{PRE86026606}U. Bandelow and N. Akhmediev, Sasa-Satsuma equation:
Soliton on a background and its limiting cases, Phys. Rev. E 86,
026606(2012), 8pp.
\bibitem{hexukp2012fleq}J. S. He, S. W. Xu, K.Porsezian,
Rogue waves of the Fokas-Lenells equation,
 J. Phys. Soc. Jpn. 81,  124007(2012), 4pp.
\bibitem{Dubard_2010_247}P. Dubard, P. Gaillard, C. Klein and V. B. Matveev, On multi-rogue wave solutions of the NLS equation and positon solutions of the KdV equation, Eur. Phys. J. Special Topics, 185, 247-258 (2010).
\bibitem{dubardthesis2010}P.Dubard, PhD thesis: ``Multi-rogue solutions of to the focusing NLS equation"
(Universit$\acute{e}$ de Bourgogne, Dijon, France, Dec. 2010; also see https://tel.archives-ouverts.fr/tel-00625446/document).
\bibitem{matveev2011nlsKPI}P. Dubard and V. B. Matveev, Multi-rogue waves solutions to the focusing NLS equation and the KP-I equation, Nat. Hazards Earth Syst. Sci. 11, 667--672(2011).
\bibitem{Gaillard_2011_435204}P. Gaillard, Families of quasi-rational solutions of the NLS equation and multi-rogue waves, J. Phys. A: Math. Theor, 44, 435204(2011), 15pp.
\bibitem{Gaillard_2013_13504}P. Gaillard, Degenerate determinant representation of solutions of the nonlinear Schr\"{o}dinger equation, higher order Peregrine breathers and multi-rogue waves, J. Math. Phys, 54, 013504(2013), 33pp.
\bibitem{matveevkpI2013}P. Dubard and V. B. Matveev, Multi-rogue waves solutions: from NLS to
KP-I equation, Nonlinearity 26, R93--R125(2013).

\bibitem{PRE86036604}Y. Ohta and J. K. Yang, Rogue waves in
 the Davey-Stewartson I equation, Phys. Rev. E 86, 036604(2012), 8pp.

\bibitem{Mu_2012_84001}M. Gui and Q. Z. Yun, Rogue waves for the coupled Schr\"odinger-Boussinesq equation and the coupled Higgs equation, J. Phys. Soc. Jpn. 81, 084001(2012), 6pp.

\bibitem{Matveev}V. B. Matveev and M. A. Salle, Darboux
transformations and solitons  (Berlin: Springer)(1991).
\bibitem{Gu2}C. H. Gu,  H. S. Hu and Z. X. Zhou,
Darboux trasformations in integrable systems (Dortrecht: Springer)(2006).

\bibitem{Salle_1982_227}M. A. Salle, Darboux transformations for non-Abelian and nonlocal equations of the Toda chain type, Teoret. Mat. Fiz. 53, 227--237 (1982).
\bibitem{GN}G. Neugebauer and R. Meinel,
General N-soliton solution of the AKNS class on arbitrary
 background, Phys. Lett. A 100, 467--470(1984).
\bibitem{teacherli} Y. S. Li, X. S. Gu, M. R. Zou, Three kinds of Darboux
transformation for the evolution equation which connect with AKNS
eigenvalue problem, Acta Mathematica Sinica(New Series)3,
143-151(1985).
\bibitem{Gu1}C. H. Gu, Z. X. Zhou,  On the Darboux matrix of Backlund
transformations of the AKNS system, Lett. Math. Phys. 13,
179--187(1987).

\bibitem{he2006} J. S. He, L. Zhang, Y. Cheng and Y. S. Li,  Determinant
representation of Darboux transformation for the AKNS system,
Science in China Series A: Mathematics. 12, 1867--1878(2006).

\bibitem{xuhe2012jmpdnlsIII}S. W. Xu, J. S. He, The rogue wave and breather solution of the
Gerdjikov-Ivanov equation, J. Math. Phys. 53, 063507(2012), 17pp.

\bibitem{Kuznetsov1977}E. A. Kuznetsov, Solitons in a parametrically unstable plasma. Sov. Phys. Doklady 22, 507--508 (1977).

\bibitem{SAM6053}Y. C. Ma,  The perturbed plane-wave solutions of the
cubic Schr\"odinger equation, Stud. Appl. Math. 60, 43--58(1979).



\bibitem{PRE85026607}B. L. Guo, L. M. Ling and Q. P. Liu, Nonlinear Schr\"odinger equation: Generalized Darboux
transformation and rogue wave solutions, Phys. Rev. E 85, 026607(2012), 9pp.

\bibitem{hexukp2012} J. S. He, S. W. Xu, K. Porseizan,
N-order bright and dark rogue waves in a resonant erbium-doped fiber
system, Phys. Rev. E 86, 066603(2012), 17pp.

\bibitem{liwuwanghe2013}L. J. Li, Z. W. Wu, L. H. Wang, J. S. He,
High-order rogue waves for the Hirota equation, Annals of Physics
334,198--211(2013).
\bibitem{wangkphe2013}L. H. Wang, K. Porsezian, J. S. He, Breather
and rogue wave solutions of a generalized nonlinear
Schr\"odinger equation, Phys. Rev. E 87, 053202(2013),10pp.

\bibitem{xuhechengkp2012} S. W. Xu, J. S. He, Y. Cheng, K. Porsezian,
The N-order rogue waves of Fokas--Lenells equation. Math. Meth. Appl. Sci. doi: 10.1002/mma.3133 (2014).

\bibitem{Z}Y. S. Zhang, L. J. Guo, S. W. Xu, Z. W. Wu and J. S. He,
The hierarchy of higher order solutions of the derivative nonlinear Schr\"odinger
 equation, Commun. Nonlinear Sci. Numer. Simulat. 19, 1706--1722(2014).

\bibitem{G}L. J. Guo, Y. S. Zhang, S. W. Xu, Z. W. Wu and J. S. He,
The higher order rogue wave solutions of the Gerdjikov-Ivanov equation, Phys. Scr. 89, 035501(2014), 11pp.






\bibitem{zakharov2013}V. E. Zakharov and A. A. Gelash, Nonlinear stage of modulation instability, Phys. Rev. Lett. 111, 054101(2013), 5pp.
\bibitem{Shrira2010}V. I. Shrira and V. V. Geogjaey, What makes the Peregrine soliton so special as a prototype of freak waves?, J. Eng. Math. 67, 11--22(2010).
\bibitem{calini2014}A. Calini and C. M. Schober, Numerical investigation of stability of breather-type solutions of the nonlinear Schr\"odinger equation,  Nat.	Hazards	Earth Syst.	Sci. 14, 1431-1440(2014).
\end{thebibliography}
\end{document}